\crefname{ex}{Example}{Examples}
\title{Testing Fairness with Utility Trade-offs: \\
A Wasserstein Projection Approach\\ }
\author{Yan Chen$^1$\footnote{$^1$Duke University, \texttt{yan.chen@duke.edu}}, Zheng Tan$^1$\footnote{$^1$University of Science and Technology of China, \texttt{ztangle9@gmail.com} }, Jose Blanchet$^2$\footnote{$^2$Stanford University, \texttt{jose.blanchet@stanford.edu}}, Hanzhang Qin$^3$\footnote{$^3$National University of Singapore, \texttt{hzqin@nus.edu.sg}}}
\begin{document}

\begin{abstract}
Ensuring fairness in data-driven decision-making has become a central concern across domains such as marketing, lending, and healthcare, but fairness constraints often come at the cost of utility. We propose a statistical hypothesis testing framework that jointly evaluates approximate fairness and utility, relaxing strict fairness requirements while ensuring that overall utility remains above a specified threshold. Our framework builds on the strong demographic parity (SDP) criterion and incorporates a utility measure motivated by the potential outcomes framework. The test statistic is constructed via Wasserstein projections, enabling auditors to assess whether observed fairness–utility trade-offs are intrinsic to the algorithm or attributable to randomness in the data. We show that the test is computationally tractable, interpretable, broadly applicable across machine learning models, and extendable to more general settings. We apply our approach to multiple real-world datasets, offering new insights into the fairness–utility trade-off through the perspective of statistical hypothesis testing.
\end{abstract}

\maketitle

\pagestyle{plain} 

\newpage

\section{Introduction}
Over the past decade, ensuring fairness in data-driven decision-making has become a critical concern across many domains, including personalized marketing, lending, and healthcare \citep{kallus2021fairness,richards2016personalized,liu2019personalized,kumar2022equalizing,ahmad2020fairness,chen2023algorithmic,giovanola2023beyond,bertsimas2012efficiency,manski2023using,qi2017mitigating}. A substantial body of research has sought to formalize fairness through constraints on predictive models or algorithms \citep{gardner2019evaluating,alikhademi2022review,pleiss2017fairness,jacobs2021measurement,taskesen2021statistical,navarro2021risk}, aimed at safeguarding individuals or groups from discriminatory treatment or policies \citep{chouldechova2017fair,imai2023principal,kizilcec2022algorithmic}. 

However, imposing fairness constraints often entails trade-offs with utility. For instance, \citet{mehrotra2018towards} documents a tension between supplier fairness and consumer satisfaction in recommender systems for two-sided online platforms. Another example is the accuracy–fairness trade-off in image classification and representation learning, examined by \citet{dehdashtian2024utility}, who develop a method to numerically quantify this trade-off for specific prediction tasks and group fairness criteria, thereby introducing a new evaluation framework for computer vision representations. Several other studies have also shown that achieving absolute fairness while preserving utility is impossible in many applications, as fairness constraints inevitably reduce the performance of the targeted utility \citep[][etc.]{mitchell2021algorithmic,cooper2021emergent}. Moreover, many existing methodologies for evaluating fairness–utility trade-offs also tend to be overly task-specific \citep[][etc.]{sacharidis2019common,dehdashtian2024utility}. 

These observations motivate a more nuanced approach to algorithmic fairness with utility trade-off — one that seeks to ensure approximately fair outcomes across protected groups while explicitly preserving an adequate level of overall utility. Indeed, there has been growing interest in recent years in pursuing algorithmic fairness through frameworks that explicitly account for trade-offs with utility \citep{ge2022toward,rodrigues2011control,plecko2025fairness,chester2020balancing}. Testing whether an algorithm achieves approximate fairness (under relaxed fairness constraints) while maintaining sufficient overall utility has become a question of central importance. Motivated by this challenge, our paper proposes a statistical test that jointly evaluates group fairness and utility, which forms the main focus of our study. 

\subsection{Overview of the Utility-Constrained Fairness Testing Framework}
Our statistical hypothesis testing framework enables auditors to determine whether the utility-constrained biases observed in an audit reflect inherent properties of the algorithm or simply arise from randomness in the data. The framework is also designed to function as a black-box, requiring no knowledge of the internal structure of the algorithm. The framework adopts a relaxed version of the \textit{strong demographic parity} (SDP) notion \citep{jiang2020wasserstein} to evaluate approximate fairness (see Section~\ref{sec:fairness-criterion}) and incorporates a utility function inspired by the potential outcome framework in causal inference \citep{rubin2005causal} (see Section~\ref{sec:utility}). 

We adopt the potential outcomes framework to define overall utility. Specifically, we consider a two-level treatment $W_i \in \{0,1\}$ and an outcome $Y_i \in \mathbb{R}$, interpreted as utility. We assume selection on observables (unconfoundedness): there exist potential outcomes $\{Y_i(0), Y_i(1)\}$ such that $Y_i = Y_i(W_i)$ and $\{Y_i(0), Y_i(1)\} \indep\footnote{$\indep$ means ``is independent of''.} W_i \mid X_i$ \citep{imbens2015causal}. While this assumption is standard, we have also verified this assumption in our empirical study to ensure the applicability of our framework to these real-data settings in Appendix~\ref{appendix:dataset}. Given a random non-sensitive covariate $X_i$ and sensitive attribute $S_i$, the propensity score is defined as $\pi_{S_i}(X_i) = \mathbb{P}(W_i = 1 \mid X_i,S_i)$. To reflect the utility trade-off, the auditor needs to ensure that the overall expected utility $\mathbb{E}[Y_i(W_i)]$ exceeds a specified threshold (see Section~\ref{sec:framework} for details). While our analysis focuses on binary treatments and binary sensitive attributes, the results naturally extend to multi-level or continuous treatments and multiple sensitive attributes using similar proof techniques. For clarity and readability, we confine our discussion to the binary case and discuss the extensions in Appendix~\ref{appendix:extension:multi}.

For fairness evaluation, a commonly used criterion is statistical parity (SP) \citep{agarwal2019fair}—also referred to as demographic parity (DP) \citep{dwork2012fairness} or disparate impact \citep{feldman2015certifying}—which requires statistical independence between classifier predictions and sensitive attributes. However, as noted by \citet{jiang2020wasserstein}, SP/DP has important practical limitations: it is highly sensitive to threshold choices, meaning that satisfying the criterion at one threshold does not guarantee that it holds at others (see Section~\ref{sec:fairness-criterion} for details). To address this issue, \citet{jiang2020wasserstein} has proposed the fairness criterion of \textit{strong demographic parity} (SDP), which requires that decisions be independent of sensitive attributes across all thresholds. Building on this idea, we formalize a relaxed version of SDP within a utility-constrained testing framework (see Definition \ref{def:approximate:SDP}). We evaluate whether the propensity score $\pi_{S_i}(X_i)$ aligns with the approximate SDP fairness criterion in our framework. Beyond the specific fairness notion and utility definition considered here, our hypothesis testing framework can be readily extended to other formulations of utility-constrained fairness. Details are provided in Appendix \ref{appendix:general}.

Our hypothesis testing framework addresses the statistical difficulties that stem from simultaneously accounting for multiple criteria --- fairness and utility trade-offs --- through the use of Wasserstein projection techniques. In essence, the test statistic is obtained by optimally transporting the empirical distribution onto the class of probability models that satisfy the specified group fairness requirements. In this way, we evaluate whether the utility-constrained fairness criterion is plausibly satisfied under the true data-generating process. The hypothesis is rejected if the computed test statistic exceeds a critical value determined by the chosen significance level. This critical value is obtained from the asymptotic behavior of the test statistic, which forms one of the main results of this work. 

We summarize our main contributions as follows. (1) We develop a statistical hypothesis test for approximate fairness under utility trade-offs, where the absolute fairness constraint is relaxed to ensure that utility remains above a specified threshold, thereby capturing the fairness-utility trade-off. (2) The proposed test is computationally tractable, interpretable, and broadly applicable to a wide range of machine learning and AI algorithms used for estimating propensity scores and outcome models. (3) Our framework is readily extendable beyond the specific fairness and utility criteria considered here, opening avenues for future research. (4) We empirically illustrate the application of our hypothesis test framework to real-world data.

\subsection{Related Work}
The field of algorithmic fairness has expanded rapidly, yielding numerous definitions and approaches. Early work focused on demographic parity (also known as statistical parity or disparate impact) \citep{calders2013controlling,feldman2015certifying,zafar2017fairness}, requiring equal decision probabilities across groups; equalized odds \citep{hardt2016equality}, requiring false positive and false negative rates to be independent of group membership; and equal opportunity along with its probabilistic variants \citep{hardt2016equality,pleiss2017fairness}, aimed at reducing disparities in favorable outcomes. Yet no single definition has emerged as standard, and --- beyond trivial cases --- no algorithm can satisfy multiple criteria simultaneously. For comprehensive surveys, see \citep{pessach2022review,chen2024fairness}.

Our study also connects to the body of work on fairness–utility trade-offs \citep{corbett2017algorithmic}. A central observation in this literature is that unconstrained predictors typically achieve utility that is at least as high as, and often higher than, predictors subject to fairness constraints. Numerous studies document utility losses when fairness constraints are imposed \citep{mitchell2021algorithmic}, and propose strategies to manage this trade-off \citep{fish2016confidence}. Still, the existence and magnitude of such trade-offs remain divided. For example, \citet{rodolfa2021empirical} reports that fairness–utility trade-offs are minimal in practice, while others contend that such trade-offs may not exist \citep{maity2020there,dutta2020there}. The impact ultimately depends on the specific fairness definition under consideration, with studies downplaying trade-offs often focusing on criteria like equalized odds \citep{hardt2016equality} or (multi-)calibration \citep{chouldechova2017fair}, which differ from the fairness notions examined in our work.

We ground our notion of utility in the potential outcomes framework from causal inference \citep{rubin2005causal,imbens2015causal}, which naturally links our work to the causal fairness literature. Yet, this literature has paid comparatively little attention to the trade-off between fairness and utility. Notable exceptions include \citet{nilforoshan2022causal}, who demonstrate that for any policy satisfying a causal fairness constraint, one can typically construct an alternative policy with strictly higher utility and the same total variation (TV) distance; and \citet{plecko2023causal}, who analyze decision scores used in policy design and show how disparities in these scores may affect utility. Recently, \citet{plecko2025fairness} has introduced a systematic framework for analyzing the fairness–accuracy trade-off from a causal fairness perspective, showing that such trade-offs almost always arise. 

Methodologically, our hypothesis testing framework connects to the literature on statistical inference using projection-based criteria \citep{owen2001empirical,blanchet2019robust,cisneros2020distributionally}. Our approach is also related to \citet{taskesen2021statistical} and \citet{si2021testing}, who cast fairness questions as hypothesis testing problems using the Robust Wasserstein Profile Inference method of \citet{blanchet2019robust}. Whereas \citet{taskesen2021statistical} and \citet{si2021testing} examine specific fairness notions imposed as hard or relaxed constraints --- without parameters to capture utility trade-offs --- our framework is designed for settings in which such trade-offs are explicitly modeled. 

\subsection{Notations} 
Given a measurable set $\mathcal{Z}\subset\mathbb{R}^d$, we use $\mathcal{P}(\mathcal{Z})$ to denote the set of probability distributions on $\mathcal{Z}$ that are square integrable. For a sequence $\{\xi_n\}_{n\geq1}$, we say $\xi_n\Rightarrow \xi$ means $\xi_n$ converges in probability to $\xi$. $\|\cdot\|$ denotes the Euclidean norm on $\mathbb{R}^d$. For two random variables $X,Y$, $X\overset{d}{=}Y$ means $X,Y$ follow the same distribution, and $X\indep Y$ means $X$ is independent of $Y$. We use $\mathbb{P}(\cdot)$ to denote the general probability measure (unless specified otherwise), $\mathbb{E}[\cdot]$ as the expectation, and $\mathbf{1}\{\cdot\}$ as the indicator function. $\mathrm{Unif}[0,1]$ denotes the uniform distribution over $[0,1]$. $\iff$ means ``if and only if''. Given a matrix or vector $A$, $A^T$ means the transpose of $A$. We use $\mathcal{N}(\boldsymbol{\mu},\boldsymbol{\Sigma})$ as the Gaussian distribution with mean $\boldsymbol{\mu}$ and covariance $\boldsymbol{\Sigma}$. Given a random variable $X$ and a distribution $\mathcal{F}$, $X\sim\mathcal{F}$ means that $X$ follows $\mathcal{F}$. Given a subset $\mathcal{Z}\subset\mathbb{R}^d$, for any function $f:\mathcal{Z}\rightarrow\mathbb{R}$, we use $\nabla f(\cdot)$ or $Df(\cdot)$ to denote the gradient of $f$.

\section{Problem Setup and Preliminaries}
We consider random variables $\{(Y_i, X_i, S_i, W_i)\}_{i=1}^n$ that are drawn \textit{independently and identically distributed} (i.i.d.) from a fixed but unknown distribution. In this setup, $X_i$ represents the non-sensitive covariates, and $S_i \in \{0,1\}$ denotes a sensitive attribute such as gender or race. The outcome space is given by $\mathcal{Y}\subset\mathbb{R}$, the covariate space is given by $\mathcal{X} \subset \mathbb{R}^d$, while the sensitive attribute space is $\mathcal{S} = \{0,1\}$. The observed outcome is $Y_i = Y_i(W_i)$, which corresponds to the realized utility $W_i \in \{0,1\}$, whereas the counterfactual outcome $Y_i(1-W_i)$ is unobserved. We refer to $W_i=1$ as individual $i$ receiving the treatment, and $W_i=0$ as receiving the control. Denote $\pi(x,a):=\mathbb{P}(W_i=1 \mid X_i=x, S_i=a)$ as the probability that individual $i$ receives the treatment given contexts $(X_i,S_i)=(x,a)$, where $\pi:\mathcal{X} \times \mathcal{S} \to [0,1]$. For notational convenience, we write $\pi_a(x) := \pi(x,a)$ and refer to $\pi_a(x)$ as the propensity score for context $(x,a)$ throughout the paper. Thus, on observing each context $(x_i, s_i)$ for individual $i$, the decision maker selects a treatment level $w_i$ according to the propensity score $\pi_{s_i}(x_i)$, after which the corresponding utility $y_i(w_i)$ is observed. Although we focus on binary treatment levels and binary sensitive attributes, the results readily extend to multi-level or continuous treatments and multiple sensitive attributes, with similar proof techniques. For clarity and readability, we restrict attention to the binary case, and discuss the extensions in Appendix~\ref{appendix:extension:multi}.

\subsection{Utility}\label{sec:utility}
For any $w\in\{0,1\}$, we denote $m_w(x,a):=\mathbb{E}[Y_i(w)|X_i=x,S_i=a]$ as the expected utility of treatment level $w$ for the population with non-sensitive covariate $x$ and sensitive attribute $a$. Denote $p_a(x):=\mathbb{P}(S_i=a|X_i=x)$ for any $a\in\{0,1\}$. We impose the following assumption:
\begin{assumption}\label{ass:dgp} 
Unconfoundedness: $W_i\indep\{Y_i(1),Y_i(0)\}|X_i,S_i$. (ii) Boundedness: $0\leq Y_i(1),Y_i(0)\leq B$ for some bounded constant $B>0$.
\end{assumption}
By definition, the expected utility is equal to 
\begin{equation}\label{eq:utility}
\begin{array}{rl}
\mathbb{E}\left[Y_i(W_i)\right]\!\!\!\!&\displaystyle=_{(a)}\mathbb{E}\left[W_iY_i(1)+(1-W_i)Y_i(0)\right]=_{(b)}\mathbb{E}\left[\mathbb{E}\left[Y_i(1)W_i+(1-W_i)Y_i(0)|X_i,S_i\right]\right]\\
\\
&\displaystyle=_{(c)}\mathbb{E}[m_1(X_i,S_i)\pi_{S_i}(X_i)+m_0(X,S_i)(1-\pi_{S_i}(X_i))]\\
\\
&\displaystyle=_{(d)}\sum_{a\in\mathcal{S}}\mathbb{E}\left[\left\{m_1(X_i,a)\pi_a(X)+m_0(X_i,a)(1-\pi_a(X_i))\right\}p_a(X_i)\right].
\end{array}
\end{equation}
where in (\ref{eq:utility}), (a) follows from the definition of the potential outcomes, (b) uses tower property, (c) follows from (i) of Assumption \ref{ass:dgp}. Although Assumption \ref{ass:dgp} is standard in the literature, it may not always hold in practice --- particularly the unconfoundedness condition. To address this in practice, we verify in Appendix~\ref{appendix:dataset} that Assumption \ref{ass:dgp} holds in our empirical studies with real data. 

\subsection{Optimal Transport and Wasserstein Distance}
Let $\mathcal{P}(\mathcal{X})$ denote the set of all probability distributions on $\mathcal{X}$. According to (d) of (\ref{eq:utility}), the expected utility can be expressed as the expectation of a function of $X_i$, with the expectation taken with respect to the distribution of $X_i$. We now introduce the notion of optimal transport costs via Wasserstein distance:
\begin{definition}[Optimal transport costs and Wasserstein Distance]\label{def:OT:cost}
    Given a lower semiconinuous function $c:\mathcal{X}\times\mathcal{X}\rightarrow[0,\infty]$, the type-2 Wasserstein optimal transport cost $\mathcal{W}_{c}(\mathbb{Q}_1,\mathbb{Q}_2)$ for any $\mathbb{Q}_1,\mathbb{Q}_2\in\mathcal{P}(\mathcal{X})$ is defined as 
    $$\mathcal{W}_{c}(\mathbb{Q}_1,\mathbb{Q}_2)=\min_{\pi\in\Gamma(\mathbb{Q}_1,\mathbb{Q}_2)}\sqrt{\mathbb{E}_{\pi}[c(X,X')^2]},$$
    where $\Gamma(\mathbb{Q}_1,\mathbb{Q}_2)$ is the set of all joint distributions of $(X,X')$ such that the distribution of $X$ is $\mathbb{Q}_1$ and the distribution of $X'$ is $\mathbb{Q}_2$. 
\end{definition}
When $c(\cdot,\cdot)$ is a metric on $\mathcal{X}$, and $\mathcal{W}_{c}(\cdot,\cdot)$ is the Wasserstein distance  \cite{villani2009optimal}. Note that in the existing literature on testing fairness via Wasserstein projection, the focus is on computing Wasserstein distances between distributions on $\mathcal{X}\times\mathcal{S}\times\mathcal{Y}$ \citep{taskesen2021statistical,si2021testing}. The ground metric is typically defined as 
$$c((x,a,y),(x',a',y'))=\|x-x'\|+\infty\|a-a'\|+\infty\|y-y'\|,$$
where $\|\cdot\|$ is a norm on $\mathbb{R}^d$. This formulation assumes absolute trust in the sensitive attribute and outcome observed in the training data. Consequently, the transport cost depends only on the distribution of $X_i$. Such an absolute-trust restriction is standard in the fair machine learning literature \citep{xue2020auditing,taskesen2020distributionally}. Hence, we follow same absolute-trust assumption and restrict attention to optimal transport over distributions in $\mathcal{P}(\mathcal{X})$.  

Conceptually, the Wasserstein distance captures not only pointwise differences between distributions but also the cost of rearranging their probability mass. This makes the Wasserstein framework a powerful tool for comparing complex distributions while preserving geometric information about $\mathcal{X}$. Such a perspective is particularly valuable in fairness applications, where aligning group distributions is often a key goal, and the optimal transport view provides a direct way to assess how populations overlap or diverge in the covariate space $\mathcal{X}$.  

\subsection{Approximate Strong Demographic Parity}\label{sec:fairness-criterion}
As noted in the introduction, achieving absolute fairness is nearly always impossible once utility trade-offs are taken into account. Thus, rather than adopting fairness notions that impose strict criteria, we propose a relaxed fairness definition inspired by the \textit{Strong Demographic Parity} (SDP) criterion introduced by \citet{jiang2020wasserstein}. Firstly, the notion of SDP is defined as:
\begin{definition}[Strong Demographic Parity]\label{def:SDP}
We say that SDP is satisfied if $\pi_{S_i}(X_i)\indep S_i$.
\end{definition}
\citet{jiang2020wasserstein} introduce the notion of Strong Demographic Parity (SDP) from the perspective of a binary classifier. In their setting, $W_i$ is the binary label, $X_i$ and $S_i$ denote the non-sensitive and sensitive features, and $r_i=\mathbb{P}(W_i=1 \mid X_i,S_i) \in [0,1]$ represents the model’s predicted probability that unit $i$ belongs to class 1. A class prediction $\hat{W}_i \in {0,1}$ is then obtained via a threshold rule $\tau \in [0,1]$, with $\hat{W}_i := \mathbf{1}\{r_i > \tau\}$. The standard \textit{demographic parity} (DP) criterion requires $\mathbb{P}(\hat{W}_i=1|S_i=1) = \mathbb{P}(\hat{W}_i=1 |S_i=0)$, but satisfying DP at one threshold does not guarantee that it holds for others. To address this limitation, SDP requires $\pi_{S_i}(X_i) \indep S_i$, ensuring independence from the sensitive attribute across all thresholds. Moreover, SDP implies DP for every possible threshold $\tau$.

In our setting, let $p_{\pi_a(X_i)}$ denotes the probability density function (pdf) of $\pi_a(X_i)$ for $a\in\{0,1\}$. So SDP can also be defined as $p_{\pi_1(X_i)}=p_{\pi_0(X_i)}$, which holds if and only if 
\begin{equation}\label{eq:SDP:equiv}
    \mathbb{E}_{\tau\sim\mathrm{Unif}[0,1]}\left[|\mathbb{Q}(\pi_1(X_i)>\tau)-\mathbb{Q}(\pi_0(X_i)>\tau)|\right]=0,
\end{equation}
where $\mathbb{Q}$ is the distribution of $X_i$. Indeed, let $\mathcal{W}_1$ be the $1$-Wasserstein distance (i.e. setting $c(x,x')=|x-x'|$ in Definition \ref{def:OT:cost}), by Proposition \ref{lemma:prop1:Jiang}, (\ref{eq:SDP:equiv}) holds $\iff\mathcal{W}_1(\pi_1(X_i),\pi_0(X_i))=0\iff p_{\pi_1(X_i)}=p_{\pi_0(X_i)}$. Now we define a relaxed fairness concept based upon SDP: 

\begin{definition}[$\epsilon$-Approximate SDP]\label{def:approximate:SDP}
We say that $\epsilon$-approximate SDP is satisfied if 
$$\mathbb{E}_{\tau\sim\mathrm{Unif}[0,1]}\left[|\mathbb{Q}(\pi_1(X_i)>\tau)-\mathbb{Q}(\pi_0(X_i)>\tau)|\right]\leq\epsilon.$$
\end{definition}
In practice, practitioners may tune the parameter $\epsilon$ to meet application-specific needs. \citet{si2021testing} also adopts a related idea of fairness relaxation in their extended framework, but the fairness notion they consider differs substantially from ours.

\section{Testing Utility-Constrained Fairness via Optimal Transport}\label{sec:framework}
Denote $\mathcal{Z}=\mathcal{X}\times\{0,1\}\times\mathcal{S}\times\mathcal{Y}$ as the space where the random vector $(X_i,W_i,S_i,Y_i)$ is supported on. Recall that $\mathcal{P}(\mathcal{Z})$ is the set of probability distributions on $\mathcal{Z}$. Given $\epsilon\geq0$, $r\in\mathbb{R}$, we define 
\begin{equation}\label{eq:G:general}
    \mathcal{G}(r,\epsilon):=\left\{\widetilde{\mathbb{Q}}\in\mathcal{P}(\mathcal{X}) \bigg| 
    \begin{array}{rcl}
    &\mathbb{E}_{\widetilde{\mathbb{Q}}}[Y_i(W_i)]\geq r&\\  &\mathbb{E}_{\tau\sim\mathrm{Unif}[0,1]}\left[\left|\widetilde{\mathbb{Q}}_X(\pi_1(X_i)>\tau)-\widetilde{\mathbb{Q}}_X(\pi_0(X_i)>\tau)\right|\right]\leq\epsilon&
    \end{array}\!\!\!\!\!\right\},
\end{equation}
where $\widetilde{\mathbb{Q}}_X$ is the marginal distribution of $X_i$ (obtained by integrating $\widetilde{\mathbb{Q}}$ with respect to the marginals of $(W_i,S_i,Y_i)$). Formally, $\mathcal{G}(r,\epsilon)$ is defined as the set of joint distributions of $(X_i,W_i,S_i,Y_i)$ that satisfy $\epsilon$-approximate SDP and guarantee an overall expected utility of at least $r$. Given $N$ samples $\{x_i,w_i,s_i,y_i\}_{i\in[N]}$ drawn i.i.d. from a distribution $\widetilde{\mathbb{P}}$ of $(X_i,W_i,S_i,Y_i)$, 
we are interested in the statistical test with the composite null hypothesis:
\begin{equation}\label{eq:hypothesis testing:G}
    \mathcal{H}_0:\widetilde{\mathbb{P}}\in\mathcal{G}(r,\epsilon)\ \ \mathrm{v.s.}\ \ \mathcal{H}_1: \widetilde{\mathbb{P}}\notin\mathcal{G}(r,\epsilon).
\end{equation}
Define
\begin{equation}\label{eq:F:test}
\mathcal{F}_{r,\epsilon}\!\!:=\!\!\left\{\mathbb{Q}\in\mathcal{P}(\mathcal{Z}) \Bigg|\!\!\!\!\!\! 
    \begin{array}{rcl}
    &\displaystyle\sum_{a\in\mathcal{S}}\mathbb{E}_{\mathbb{Q}}\left[\left\{m_1(X_i,a)\pi_a(X)+m_0(X_i,a)(1-\pi_a(X_i))\right\}p_a(X_i)\right]\geq r&\\  &\mathbb{E}_{\tau\sim\mathrm{Unif}[0,1]}\left[\left|\mathbb{Q}(\pi_1(X_i)>\tau)-\mathbb{Q}(\pi_0(X_i)>\tau)\right|\right]\leq\epsilon.&
    \end{array}\!\!\!\!\!\!\right\}
\end{equation}
Recall from (d) of (\ref{eq:utility}) that $\mathbb{E}_{\widetilde{\mathbb{Q}}}[Y_i(W_i)]\geq r$ is equivalent to 
\begin{equation}\label{eq:utility:r}
    \sum_{a\in\mathcal{S}}\mathbb{E}_{\widetilde{\mathbb{Q}}_X}\left[\left\{m_1(X_i,a)\pi_a(X)+m_0(X_i,a)(1-\pi_a(X_i))\right\}p_a(X_i)\right]\geq r,
\end{equation}
So given that $X_i\sim\mathbb{P}$, testing (\ref{eq:hypothesis testing:G}) is equivalent to the following hypothesis test:
\begin{equation}\label{eq:hypothesis testing}
    \mathcal{H}_0:\mathbb{P}\in\mathcal{F}_{r,\epsilon}\ \ \mathrm{v.s.}\ \ \mathcal{H}_1: \mathbb{P}\notin\mathcal{F}_{r,\epsilon}.
\end{equation}
In other words, testing the null hypothesis (\ref{eq:hypothesis testing:G}) for the joint distribution of $(X_i,W_i,S_i,Y_i)$ reduces to testing the corresponding hypothesis for the marginal distribution of $X_i$, given that we have an absolute trust in the training sample, and that unconfoundedness holds according to Assumption \ref{ass:dgp}. 

In order to propose a proper test statistic, we denote $\hat{\mathbb{P}}_N=N^{-1}\sum_{i=1}^N\delta_{x_i}$ as the empirical measure of the samples obtained from a distribution $\mathbb{P}\in\mathcal{P}(\mathcal{X})$. The projection distance of $\hat{\mathbb{P}}_N$ unto $\mathcal{F}_{r,\epsilon}$ is defined as 
\begin{equation}\tag{P}\label{eq:strong:criterion:0}
\begin{array}{rcl}
\mathcal{R}_{r,\epsilon}(\hat{\mathbb{P}}_N)\!\!\!\!&:=&\!\!\!\!\displaystyle\inf_{\mathbb{Q}\in\mathcal{F}_{r,\epsilon}}\mathcal{W}_c(\mathbb{Q},\hat{\mathbb{P}}_N)^2\\
\!\!\!\!&=&\!\!\!\!\left\{\begin{array}{ll}
&\displaystyle\inf_{\mathbb{Q}\in\mathcal{P}(\mathcal{X})} \displaystyle\mathcal{W}_c(\mathbb{Q},\hat{\mathbb{P}}_N)^2\\
\\
\mathrm{s.t.}&\displaystyle\sum_{a\in\mathcal{S}}\mathbb{E}_{\mathbb{Q}}[\{m_1(X,a)\pi_a(X)+m_0(X,a)(1-\pi_a(X))\}p_a(X)]\geq r
\\
\\
&\displaystyle\mathbb{E}_{\tau\sim\mathrm{Unif}[0,1]}\left[\left|\mathbb{Q}(\pi_{1}(X)>\tau)-\mathbb{Q}(\pi_{0}(X)>\tau)\right|\right]\leq\epsilon
\end{array}\right\}
\end{array}
\end{equation}
When $\epsilon=0$ and $r=-\infty$, (\ref{eq:strong:criterion:0}) corresponds to testing the strict strong demographic parity without considering any utility tradeoff. As $r$ increases and $\epsilon$ decreases, the constraints become more stringent, and for some $(\epsilon,r)$ no probability measure may satisfy (\ref{eq:strong:criterion:0}). Similar trade-offs have been observed empirically in prior work under alternative fairness metrics and related perspectives \citep[][etc.]{plecko2025fairness,maity2020there,dutta2020there}. The choice of $(\epsilon,r)$ naturally depends on the empirical context under study. For example, in a consumer lending setting, the decision maker may require that expected repayment (or profit) remains above a threshold $r$, while $\epsilon$ controls the tolerated disparity in loan approval rates between minority and majority groups across all classification thresholds. In contrast, in a healthcare intervention scenario, $r$ could represent the minimum expected improvement in patient outcomes (e.g., reduction in hospitalization rates), whereas $\epsilon$ governs the allowable imbalance in treatment assignment probabilities across genders. These examples illustrate how $(\epsilon,r)$ jointly capture the trade-off between maintaining sufficient utility and ensuring fairness across sensitive groups.

For a given significance level $\alpha$ and $\eta_{1-\alpha}$ as the $(1-\alpha)$ quantile of some limiting distribution related to the test statistic $t_N$, we reject the hypothesis $\mathcal{H}_0$ if $t_N>\eta_{1-\alpha}$. For the remainder of the paper, we set $c(x,x')=\|x-x'\|$ in Definition \ref{def:OT:cost}, where $\|\cdot\|$ denotes the Euclidean norm on $\mathbb{R}^d$. 

\subsection{Strong Duality}
We provide the following additional regularity assumptions:
\begin{assumption}\label{ass:regularity}
$m_1(\cdot,a)$, $m_0(\cdot,a)$, $\pi_a(\cdot)$ are continuously differentiable with derivatives $Dm_1(\cdot,a)$, $Dm_0(\cdot,a)$ and $D\pi_a(\cdot)$ for $a\in\{0,1\}$. 
\end{assumption}
\begin{assumption}\label{ass:fairness:existence}
There exists some $x\in\mathcal{X}$, such that $\pi_1(x)=\pi_0(x)$ and 
$$\sum_{a\in\{0,1\}}p_a(x)[m_1(x,a)\pi_a(x)+m_0(x,a)(1-\pi_a(x))]\geq r.$$
\end{assumption}
Assumption \ref{ass:fairness:existence} posits that the expected utility attains the reservation level $r$ for some covariate. This condition is essential; without it, no distribution of the covariate $X$ could yield an overall expected utility of $r$, rendering the framework incoherent.

We now present the first main result of the paper, a strong duality result for the projection distance defined by (\ref{eq:strong:criterion:0}):
\begin{theorem}[Strong Duality]\label{thm:strong-duality}Under Assumptions \ref{ass:dgp}, \ref{ass:regularity}, \ref{ass:fairness:existence}, we have 
{$$\begin{array}{ll}
\displaystyle\mathcal{R}_{r,\epsilon}(\hat{\mathbb{P}}_N)=\sup_{(\lambda,\alpha)\in\mathbb{R}_{+}\times\mathbb{R}_{+}}\!\!\!&\lambda r-\alpha\epsilon\\
&\displaystyle+\frac{1}{N}\sum_{i=1}^N\min_{x\in\mathcal{X}}\{\|x-X_i\|^2+\alpha|\pi_1(x)-\pi_0(x)|-\lambda M(x)\},
\end{array}$$}
where $M(x)=\sum_{a\in\{0,1\}}\{m_1(x,a)\pi_a(x)+m_0(x,a)(1-\pi_a(x))\}p_a(x)$.
\end{theorem}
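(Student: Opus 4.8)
The plan is to obtain the formula as the Lagrangian dual of the constrained Wasserstein projection in \eqref{eq:strong:criterion:0}, introducing a multiplier $\lambda\ge 0$ for the utility constraint and $\alpha\ge 0$ for the fairness constraint, and then to evaluate the inner problem in closed form via optimal-transport duality. Concretely, I would first write
$$\mathcal{R}_{r,\epsilon}(\hat{\mathbb{P}}_N)=\inf_{\mathbb{Q}\in\mathcal{P}(\mathcal{X})}\sup_{\lambda,\alpha\ge 0}\Big\{\mathcal{W}_c(\mathbb{Q},\hat{\mathbb{P}}_N)^2-\lambda\big(\mathbb{E}_{\mathbb{Q}}[M(X)]-r\big)+\alpha\big(g(\mathbb{Q})-\epsilon\big)\Big\},$$
where $M$ is as in the statement and $g(\mathbb{Q})=\mathbb{E}_{\tau\sim\mathrm{Unif}[0,1]}[|\mathbb{Q}(\pi_1(X)>\tau)-\mathbb{Q}(\pi_0(X)>\tau)|]$ is the fairness functional. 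The objective is convex in $\mathbb{Q}$ (the squared Wasserstein term and $g$ are convex, the utility term is linear) and affine in $(\lambda,\alpha)$, and Assumption~\ref{ass:fairness:existence} supplies a feasible point at which the fairness functional is strictly below $\epsilon$ (indeed zero, via a point mass where $\pi_1=\pi_0$) while the utility bound holds, so standard convex Lagrangian duality applies; I would use it to exchange $\inf_{\mathbb{Q}}$ and $\sup_{\lambda,\alpha}$ without a duality gap, with weak-$*$ compactness and lower semicontinuity of $\mathcal{W}_c(\cdot,\hat{\mathbb{P}}_N)^2$ ensuring the infimum is attained.

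With the multipliers fixed, the remaining step is to evaluate $\inf_{\mathbb{Q}}\{\mathcal{W}_c(\mathbb{Q},\hat{\mathbb{P}}_N)^2-\lambda\mathbb{E}_{\mathbb{Q}}[M(X)]+\alpha g(\mathbb{Q})\}$. Because $\hat{\mathbb{P}}_N$ is finitely supported, any coupling transports the atom $X_i$ to a conditional law $\mathbb{Q}_i$, so $\mathbb{Q}=\tfrac1N\sum_i\mathbb{Q}_i$ and $\mathcal{W}_c(\mathbb{Q},\hat{\mathbb{P}}_N)^2=\tfrac1N\sum_i\mathbb{E}_{\mathbb{Q}_i}[\|x-X_i\|^2]$ at the optimal coupling. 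The utility term is linear in $\mathbb{Q}$ and therefore decomposes across the atoms, producing the per-sample pointwise minimization $\min_{x}\{\|x-X_i\|^2-\lambda M(x)\}$; this is the familiar Robust Wasserstein Profile Inference reduction. The constants $\lambda r-\alpha\epsilon$ then appear from the affine part of the Lagrangian, giving the outer shape of the claimed identity.

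The main obstacle is the fairness term $\alpha g(\mathbb{Q})$: unlike the utility term it is \emph{not} linear in $\mathbb{Q}$ (the absolute value sits inside the $\tau$-average), so it does not decouple across atoms and cannot be pushed into the per-sample minimization directly. My plan is to linearize it through the variational representation $g(\mathbb{Q})=\sup_{h:[0,1]\to[-1,1]}\mathbb{E}_{\mathbb{Q}}[\phi_h(X)]$ with $\phi_h(x)=\int_0^1 h(\tau)(\mathbf{1}\{\pi_1(x)>\tau\}-\mathbf{1}\{\pi_0(x)>\tau\})\,d\tau$, equivalently the Kantorovich dual of the one-dimensional $\mathcal{W}_1$ distance between the laws of $\pi_1(X)$ and $\pi_0(X)$ furnished by Proposition~\ref{lemma:prop1:Jiang}. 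Inserting this turns the inner problem into a convex--concave saddle over $(\mathbb{Q},h)$, and a Sion-type minimax exchange lets the fairness penalty enter each per-sample minimization linearly as $\alpha\phi_h(x)$, which decomposes like the utility term. The delicate point, and where I expect the real work to lie, is to show that the penalty then collapses to the pointwise $\alpha|\pi_1(x)-\pi_0(x)|$ in the statement: since $\phi_h(x)\le|\pi_1(x)-\pi_0(x)|$ for every admissible $h$, with equality only when $h$ matches $\mathrm{sign}(\pi_1(x)-\pi_0(x))$ on the interval between $\pi_0(x)$ and $\pi_1(x)$, one must argue that an optimal $h$ can be aligned with the signs realized at the minimizing covariates. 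I would attempt this via complementary slackness at the optimum together with the structure of the optimal transport --- in particular that the projected mass concentrates where $\pi_1-\pi_0$ has a definite sign, and on the fairness boundary $\{\pi_1=\pi_0\}$ (nonempty by Assumption~\ref{ass:fairness:existence}), where both penalties vanish --- so that the supremum over $h$ attains the pointwise bound. Finally, I would certify no loss by checking that the implied discrete minimizer $\mathbb{Q}^\star=\tfrac1N\sum_i\delta_{x_i^\star}$ is feasible for \eqref{eq:strong:criterion:0} and attains the dual value.
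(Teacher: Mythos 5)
Your overall architecture coincides with the paper's: a Lagrangian over couplings whose second marginal is $\hat{\mathbb{P}}_N$, a minimax exchange (the paper justifies this with the concave--convex minimax theorem of Vianney--Vigeral after confining the multipliers to a bounded box, rather than with a Slater condition, but that difference is cosmetic), and the per-atom decomposition of the quadratic transport cost and the linear utility term, which yields the $\lambda r-\alpha\epsilon$ offset and the per-sample inner minimization. The one place your proposal diverges is the fairness term, and that is also where it has a genuine gap.

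Your variational representation $g(\mathbb{Q})=\sup_{h}\mathbb{E}_{\mathbb{Q}}[\phi_h(X)]$ is correct, as is your observation that $\phi_h(x)\le|\pi_1(x)-\pi_0(x)|$ with equality only when $h$ matches $\mathrm{sign}(\pi_1(x)-\pi_0(x))$ on the interval between $\pi_0(x)$ and $\pi_1(x)$. But this means $g(\mathbb{Q})\le\mathbb{E}_{\mathbb{Q}}[|\pi_1(X)-\pi_0(X)|]$ with \emph{strict} inequality for generic $\mathbb{Q}$: if $\mathbb{Q}$ puts mass $1/2$ on two points where $(\pi_1,\pi_0)$ takes the values $(0.2,0.8)$ and $(0.8,0.2)$, then $\pi_1(X)$ and $\pi_0(X)$ are equal in law, so $g(\mathbb{Q})=0$, while $\mathbb{E}_{\mathbb{Q}}[|\pi_1-\pi_0|]=0.6$. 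Because no single $h$ can be aligned with both signs simultaneously, the Sion exchange you propose does not collapse the penalty to the pointwise $\alpha|\pi_1(x)-\pi_0(x)|$; replacing $g$ by its linear majorant genuinely shrinks the primal feasible set and can change the value of the inner infimum, so no complementary-slackness argument can close a gap that already exists at the level of the functionals. Your suggested repair --- that the optimally projected mass concentrates where $\pi_1-\pi_0$ has a definite sign or on $\{\pi_1=\pi_0\}$ --- is an unproved structural claim about the optimizer, not a consequence of the duality. For comparison, the paper closes this exact step by invoking Lemma \ref{lemma:Monge-Kantarovich:duality}, which asserts the \emph{equality} $\int_0^1|\mathbb{Q}(\pi_1(X)>\tau)-\mathbb{Q}(\pi_0(X)>\tau)|\,d\tau=\mathbb{E}_{\mathbb{Q}}[|\pi_1(X)-\pi_0(X)|]$ for every $\mathbb{Q}$; its proof claims the Kantorovich potentials saturate $\alpha^*(z)+\beta^*(z')=|z-z'|$ almost surely under the coupling induced by $X\mapsto(\pi_1(X),\pi_0(X))$, which holds only when that coupling is optimal (comonotone). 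So the step you flagged as delicate is precisely the step the paper asserts rather than establishes; your instinct about where the real work lies is sound, but your proposal does not supply that work either.
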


\subsection{Asymptotics for the Projection Distance}
We now study the limiting behavior of the projection distance $\mathcal{R}_{r,\epsilon}(\hat{\mathbb{P}}_N)$. Define
$$V_{+}:=(DM(X_i)^T[D(\pi_1-\pi_0)(X_i)],-\|D(\pi_1-\pi_0)(X_i)\|^2),$$
$$V_{-}:=(DM(X_i)^T[D(\pi_1-\pi_0)(X_i)],\|D(\pi_1-\pi_0)(X_i)\|^2),$$
$$S_{+}:=\begin{pmatrix}
DM(X_i)\\
-D[\pi_1-\pi_0](X_i)
\end{pmatrix},\quad S_{-}:=\begin{pmatrix}
DM(X_i)\\
D[\pi_1-\pi_0](X_i)
\end{pmatrix}.$$
For $\zeta\in\mathbb{R}^2$ and given vector $w\in\mathbb{R}^2$, define 
$$f^{+}(\zeta):=\max\{2\mathbb{E}\left[S_{+}S_{+}^T\mathbf{1}\{\zeta^TV_{+}\geq0\}\right]^{-1}w,0\},$$ 
$$f^{-}(\zeta):=\max\left\{2\mathbb{E}\left[S_{-}S_{-}^T\mathbf{1}\{\zeta^{T}V_{-}<0\}\right]^{-1}w,0\right\}.$$ 
We impose the following regularity condition:
\begin{assumption}\label{ass:fixed-point}
$f^{+}$, $f^{-}$ both have fixed points.
\end{assumption}
Note that we allow $w\in\mathbb{R}^2$ to be arbitrary, so Banach’s fixed-point theorem based on the contraction condition does not directly apply for fixed-point results. To verify Assumption \ref{ass:fixed-point}, we may adopt the results from several extensions of the contraction principle that have been developed in the literature \citep{boyd1969nonlinear,caristi1979fixed,bessaga1959converse}; see \citet{pata2019fixed} for a comprehensive review. 

We now present the second main result of this section for the asymptotic behavior of the projection distance. For a sequence of random events ${A_N}$, we write ${A_N}\mathrel{\substack{< \\ \sim}}_D B$ if, for every bounded, continuous, and nondecreasing function $g$, $\displaystyle\limsup_{N\rightarrow\infty}\mathbb{E}[g(A_N)]\leq\mathbb{E}[g(B)]$. 
\begin{theorem}[Stochastic Upper Bound]\label{thm:asymptotics:R-SDP}
Suppose Assumptions \ref{ass:dgp}, \ref{ass:regularity}, \ref{ass:fairness:existence}, \ref{ass:fixed-point} hold. Then under the null hypothesis $\mathcal{H}_0$,
\begin{equation}\label{eq:stochastic:upper-bound:CI}
\begin{array}{rl}
N\mathcal{R}_{r,\epsilon}(\hat{\mathbb{P}}_N)\mathrel{\substack{< \\ \sim}}_D\max\left\{\begin{array}{ll}
&\overline{W}^T\mathbb{E}\left[S_{+}S_{+}^T\mathbf{1}\{\zeta_{+}^{*T}V_{+}\geq0\}\right]^{-1}\overline{W},\\
&\overline{W}^T\mathbb{E}\left[S_{-}S_{-}^T\mathbf{1}\{\zeta_{-}^{*T}V_{-}\geq0\}\right]^{-1}\overline{W}
\end{array}\right\}\mathbf{1}\{\overline{W}\geq0\},
\end{array}
\end{equation}
where $\overline{W}=\begin{pmatrix}
\overline{M}\\
\overline{\Pi}
\end{pmatrix}$, $\overline{M}\sim\mathcal{N}(0,\mathrm{cov}[M(X_i)]),\ \overline{\Pi}\sim\mathcal{N}(0,\mathrm{cov}[|\pi_1(X_i)-\pi_0(X_i)|])$, and 
\begin{equation}\label{zeta+}
    \zeta_{+}^*=\max\left\{2\mathbb{E}\left[S_{+}S_{+}^T\mathbf{1}\{\zeta_{+}^{*T}V_{+}\geq0\}\right]^{-1}\overline{W},0\right\},
\end{equation}
\begin{equation}\label{zeta-}
    \zeta_{-}^*=\max\left\{2\mathbb{E}\left[S_{-}S_{-}^T\mathbf{1}\{\zeta_{-}^{*T}V_{-}<0\}\right]^{-1}\overline{W},0\right\}.
\end{equation}
\end{theorem}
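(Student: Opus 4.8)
The plan is to start from the strong-duality representation in Theorem~\ref{thm:strong-duality} and perform a localized delta-method analysis of the dual objective under a $1/\sqrt N$ rescaling of the Lagrange multipliers. Writing $\Psi_N(\lambda,\alpha):=\lambda r-\alpha\epsilon+\frac1N\sum_{i=1}^N\min_{x}\{\|x-X_i\|^2+\alpha|\pi_1(x)-\pi_0(x)|-\lambda M(x)\}$, we have $N\mathcal R_{r,\epsilon}(\hat{\mathbb P}_N)=\sup_{(\lambda,\alpha)\ge0}N\Psi_N(\lambda,\alpha)$. Under $\mathcal H_0$ the empirical measure lies within $O(1/\sqrt N)$ of the boundary of $\mathcal F_{r,\epsilon}$, so I expect the projection distance to be of order $1/N$ and the optimal multipliers of order $1/\sqrt N$. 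Accordingly I substitute $\lambda=\zeta_1/\sqrt N$ and $\alpha=\zeta_2/\sqrt N$ with $\zeta=(\zeta_1,\zeta_2)\in\mathbb R_+^2$, reducing the problem to the asymptotics of $\sup_{\zeta\ge0}N\Psi_N(\zeta_1/\sqrt N,\zeta_2/\sqrt N)$ over compact sets of $\zeta$, plus a coercivity bound ruling out large $\zeta$.

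I would then localize the inner minimization. By the smoothness in Assumption~\ref{ass:regularity} the minimizing $x$ is within $O(1/\sqrt N)$ of $X_i$, so I set $x=X_i+u$ and Taylor-expand $M(X_i+u)=M(X_i)+DM(X_i)^Tu+o(\|u\|)$ and $(\pi_1-\pi_0)(X_i+u)=(\pi_1-\pi_0)(X_i)+D(\pi_1-\pi_0)(X_i)^Tu+o(\|u\|)$. The absolute value is the only nonsmooth ingredient; splitting on the sign of $(\pi_1-\pi_0)(X_i+u)$ yields two half-space quadratic programs in $u$ whose closed-form minimizers have the form $\mp\tfrac12(\alpha b_i\mp\lambda a_i)$, with $a_i=DM(X_i)$ and $b_i=D(\pi_1-\pi_0)(X_i)$. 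The admissibility of each half-space optimum is governed exactly by the conditions $\{\zeta^TV_+\ge0\}$ and $\{\zeta^TV_-<0\}$, while the identities $\|\zeta_2 b_i-\zeta_1 a_i\|^2=\zeta^TS_+S_+^T\zeta$ and $\|\zeta_2 b_i+\zeta_1 a_i\|^2=\zeta^TS_-S_-^T\zeta$ are precisely what make $S_\pm$ and $V_\pm$ appear. (The matrices $S_\pm$ are read as $2\times d$ so that $S_\pm S_\pm^T$ is the $2\times2$ object inverted in the statement.)

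Assembling the limit, the leading linear part combines with $N(\lambda r-\alpha\epsilon)$ into $-\zeta_1\frac1{\sqrt N}\sum_{i=1}^N(M(X_i)-r)+\zeta_2\frac1{\sqrt N}\sum_{i=1}^N(|\pi_1(X_i)-\pi_0(X_i)|-\epsilon)$; evaluating at the least-favorable boundary of $\mathcal H_0$ (where $\mathbb E[M(X_i)]=r$ and the fairness functional is tight) and applying the CLT gives convergence in distribution to $\zeta^T\overline W$ with $\overline W=(\overline M,\overline\Pi)$, $\overline M\sim\mathcal N(0,\mathrm{cov}[M(X_i)])$ and $\overline\Pi\sim\mathcal N(0,\mathrm{cov}[|\pi_1(X_i)-\pi_0(X_i)|])$. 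By the law of large numbers the quadratic correction converges to $-\tfrac14\,\zeta^T\mathbb E[S_\pm S_\pm^T\mathbf 1\{\zeta^TV_\pm\gtrless0\}]\zeta$. Taking $\sup_{\zeta\ge0}$ of this random piecewise-quadratic limit and imposing the first-order conditions yields the self-referential optimizers $\zeta_\pm^*$ of \eqref{zeta+}--\eqref{zeta-}, whose existence is supplied by Assumption~\ref{ass:fixed-point}; back-substitution produces the quadratic forms $\overline W^T\mathbb E[S_\pm S_\pm^T\mathbf 1\{\cdots\}]^{-1}\overline W$, the outer maximum selects the active sign branch, and the multiplier constraint $\zeta\ge0$ drives the supremum to $0$ unless both fluctuation components are adverse, producing the indicator $\mathbf 1\{\overline W\ge0\}$.

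Finally, to upgrade this to the one-sided conclusion $\mathrel{\substack{< \\ \sim}}_D$ I would dominate the genuine fairness functional by its pointwise surrogate via $\mathbb E_{\tau}[|\mathbb Q(\pi_1>\tau)-\mathbb Q(\pi_0>\tau)|]\le\mathbb E_{\mathbb Q}[|\pi_1(X_i)-\pi_0(X_i)|]$, so that the quantity actually controlled by the dual bounds the true profile from above; combined with upper-semicontinuity of the nonsmooth expansion, this delivers a stochastic upper bound rather than an exact limit. The convergence of the supremum itself I would justify by proving uniform-on-compacts convergence of $\zeta\mapsto N\Psi_N(\zeta/\sqrt N)$ together with coercivity in $\zeta$, then interchanging supremum and weak limit through a continuous-mapping/Skorokhod argument and verifying the limsup inequality against bounded, continuous, nondecreasing $g$ in the definition of $\mathrel{\substack{< \\ \sim}}_D$. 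The hard part will be exactly this nonsmooth step: the kink of $|\pi_1-\pi_0|$ both renders the expansion non-differentiable and makes the optimal $\zeta^*$ self-referential through the indicators, so I must carefully control the empirical mass near $\{\pi_1=\pi_0\}$, establish the uniform convergence and tightness needed to move an unbounded supremum inside a weak limit, and confirm that the stated covariances arise at the least-favorable null configuration so the bound remains valid across the entire composite hypothesis.
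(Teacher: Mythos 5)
Your overall skeleton matches the paper's proof: strong duality, the $\sqrt{N}$-rescaling of the Lagrange multipliers, localization of the inner minimum around each $X_i$, the sign-split into two half-space quadratic programs producing $S_{\pm}$ and the admissibility conditions involving $V_{\pm}$, the CLT for the linear part and the law of large numbers for the quadratic correction, and the fixed-point characterization of $\zeta_{\pm}^{*}$ with the indicator $\mathbf{1}\{\overline{W}\geq0\}$ arising from the constraint $\zeta\geq\mathbf{0}$.

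There is, however, one concrete gap, and it sits exactly where you flag ``the hard part.'' After the Taylor expansion the nonsmooth term is $\alpha\,|[D(\pi_1-\pi_0)(X_i)]\Delta+\sqrt{N}(\pi_1(X_i)-\pi_0(X_i))|$: the constant inside the modulus is of order $\sqrt{N}$, so splitting on the sign of $(\pi_1-\pi_0)(X_i+u)$, as you propose, produces half-spaces shifted by this divergent term, and the resulting constrained quadratic programs do not have the closed-form minimizers nor the admissibility sets $\{\zeta^{T}V_{+}\geq0\}$ and $\{\zeta^{T}V_{-}<0\}$ that you write down. The paper first applies the triangle inequality to decouple $\sqrt{N}|\pi_1(X_i)-\pi_0(X_i)|$ --- which aggregates with $-\alpha\sqrt{N}\epsilon$ into $\alpha\Pi_N(\epsilon)+\alpha\sqrt{N}\{\mathbb{E}[|\pi_1(X_i)-\pi_0(X_i)|]-\epsilon\}$ --- from the clean term $\alpha|[D(\pi_1-\pi_0)(X_i)]\Delta|$, and only then splits on the sign of the gradient term; this inequality, together with the indicator bound on the inner minimum and Fatou's lemma applied to $\mathbb{E}[\bar{Z}(\lambda,\alpha)]$, is the actual source of the one-sided relation $\mathrel{\substack{< \\ \sim}}_D$. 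Your proposed source of the one-sidedness --- dominating $\mathbb{E}_{\tau}[|\mathbb{Q}(\pi_1(X)>\tau)-\mathbb{Q}(\pi_0(X)>\tau)|]$ by $\mathbb{E}_{\mathbb{Q}}[|\pi_1(X)-\pi_0(X)|]$ --- cannot supply the needed slack: Lemma \ref{lemma:Monge-Kantarovich:duality} asserts these two quantities are equal, and that identity is already consumed in the strong-duality representation you take as your starting point. Repairing your argument therefore requires inserting the triangle-inequality decoupling (and the Fatou step) before the half-space reduction, at which point your derivation coincides with the paper's.
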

Theorem \ref{thm:asymptotics:R-SDP} implies that we can use $t_N(\epsilon,r)=N\mathcal{R}_{r,\epsilon}(\hat{\mathbb{P}}_N)$ as a test statistic, leveraging the stochastic upper bound established in Theorem \ref{thm:asymptotics:R-SDP}. Given a significance level $\alpha$, let $\eta_{1-\alpha}$ be the $(1-\alpha)$ quantile of the right hand side of (\ref{eq:stochastic:upper-bound:CI}). Following the hypothesis testing framework proposed according to (\ref{eq:hypothesis testing}) and (\ref{eq:strong:criterion:0}), we reject $\mathcal{H}_0$ if $t_N(\epsilon,r)>\eta_{1-\alpha}$, which results in a conservative test and the type I error is less than or equal to $\alpha$ asymptotically. 

\subsection{Computations}
To compute the test statistic $N\mathcal{R}_{r,\epsilon}(\hat{\mathbb{P}}_N)$, recall that $\mathcal{R}_{r,\epsilon}(\hat{\mathbb{P}}_N)$ is defined by (\ref{eq:strong:criterion:0}):
\begin{equation}\label{eq:computation:test-statistic}
\begin{array}{rl}
\mathcal{R}_{r,\epsilon}(\hat{\mathbb{P}}_N)=\begin{cases}
\displaystyle\sup&\lambda r-\alpha\epsilon+\frac{1}{N}\sum_{i=1}^N\gamma_i(\lambda,\alpha)\\
\mathrm{s.t.}&\lambda\geq0,\alpha\geq0
\end{cases}
\end{array}
\end{equation}
and $\gamma_i(\lambda,\alpha):=\min_{x\in\mathcal{X}}\{\|x-X_i\|^2+\alpha |\pi_1(x)-\pi_0(x)|-\lambda M(x)\}$. Note that $\|x-X_i\|^2+\alpha |\pi_1(x)-\pi_0(x)|-\lambda M(x)$ is concave in $\alpha,\lambda$ for any $x\in\mathcal{X}$, and that the minimum of a family of concave function is still concave, so $\gamma_i(\lambda,\alpha)$ is concave $\forall i\in[n]$. If minimizing $\|x-X_i\|^2+\alpha |\pi_1(x)-\pi_0(x)|-\lambda M(x)$ over $x\in\mathcal{X}$ can be solved easily for any $\lambda\geq0,\alpha\geq0$, then the computation is straightforward. For example, we may require $M(\cdot)$ to be concave and $\pi_1(x)-\pi_0(x)$ to be affine in $x$, so that the objective $\|x-X_i\|^2+\alpha |\pi_1(x)-\pi_0(x)|-\lambda M(x)$ is convex in $x$. For general algorithms addressing non-convex optimization problems, we refer to the methods developed in \citet{allen2016variance,jain2017non,danilova2022recent,chen2018convergence,dauphin2014identifying}.

We proceed as follows to compute the quantile of the stochastic upper bound given on the right-hand side of \eqref{eq:stochastic:upper-bound:CI}: (i) compute $\zeta_{+}^*$, $\zeta_{-}^*$ defined by (\ref{zeta+}) and (\ref{zeta-}) via iterative methods. (ii) Compute the inverse matrices $\mathbb{E}\left[S_{+}S_{+}^T\mathbf{1}\{\zeta_{+}^{*T}V_{+}\geq0\}\right]^{-1}$ and $\mathbb{E}\left[S_{-}S_{-}^T\mathbf{1}\{\zeta_{-}^{*T}V_{-}\geq0\}\right]^{-1}$ by approximating the expectations via sample average approximations or weighted sample average. (iii) Draw samples of $\overline{W}$ defined as in Theorem \ref{thm:asymptotics:R-SDP} and compute the quantile via standard bootstrap method. 

\section{Numerical Experiments}\label{sec:numerical}
We first implement our hypothesis test framework in a case study of a synthetic pricing problem between elder and young buyers \citep{kahneman2013prospect}, then conduct experiments on three real datasets with sensitive attributes and show the fairness-accuracy trade-off of Tikhonov-regularized logistic classifiers and SVM classifiers. The detailed discussion of the datasets and the verification of Assumption \ref{ass:dgp} for the empirical studies are included in Appendix~\ref{appendix:dataset}.

\subsection{Simulated Data: Pricing Policies.} 
In this problem, we consider non-sensitive click-rate information denoted by \( x \in [0, 1] \), which follows uniform distributions. Meanwhile, the sensitive attribute---customer age---is represented by a binary variable \( a \in \{0, 1\} \), distinguishing between different demographic groups. Additionally, the treatment variable \( w \in \{0, 1\} \) indicates the treatment level applied to each individual. The $a=0$ category represents elder buyers with stable preferences, favoring predictable treatments $w=0$, and the $a=1$ category corresponds to young buyers, who are more risk-taking and price-sensitive, favoring volatile treatments $w=1$. The propensity score is defined as \(\pi_a(x) = \theta_a x\) where \(0 \leq \theta_a \leq 1\) and \(a \in \{0, 1\}\). The conditional expected utility function is \(m_w(x, a) = \beta_0^{(a)} + \beta_{1}^{(a)}w + \beta_2^{(a)}x\), where \((\beta_0^{(0)}, \beta_1^{(0)}, \beta_2^{(0)}) = (0.8, 0.5, 0.7)\) for elder buyers (\(a = 0\)) and \((\beta_0^{(1)}, \beta_1^{(1)}, \beta_2^{(1)}) = (0.5, 1.0, 0.5)\) for young buyers (\(a = 1\)). We implement the hypothesis test for the policies parametrized by $\theta_1\in (0.55, 0.6, 0.65, 0.7, 0.75, 0.8, 0.85, 0.9)$ and $\theta_0 = 1-\theta_1$. By definition, Assumption \ref{ass:dgp} follows directly. 

Figure~\ref{fig:1} illustrates the trade-off between utility and fairness for \(r = 1.2, 1.6, 2.0, 2.4,\) and \(2.8\) with fixed $\epsilon = 0.01$. As the utility requirement becomes more stringent (larger $r$), the test statistic (blue curve) increases substantially, while the stochastic upper bound at significance level $\alpha=0.05$ decreases. Furthermore, Figure~\ref{fig:2} demonstrates the impact of varying \(\epsilon\) values (\(\epsilon = 0.01, 0.02, 0.03, 0.04, 0.05\)) for approximate fairness criteria defined via $\epsilon$-approximate SDP. The results indicate that as the fairness criterion is relaxed (i.e., as $\epsilon$ increases), the policy is deemed fairer, and the level-$0.05$ test is rejected at larger values of $\theta_1$. 
\begin{figure}[h!]
    \centering
    \begin{subfigure}{0.3\textwidth}
        \centering
        \includegraphics[width=\textwidth]{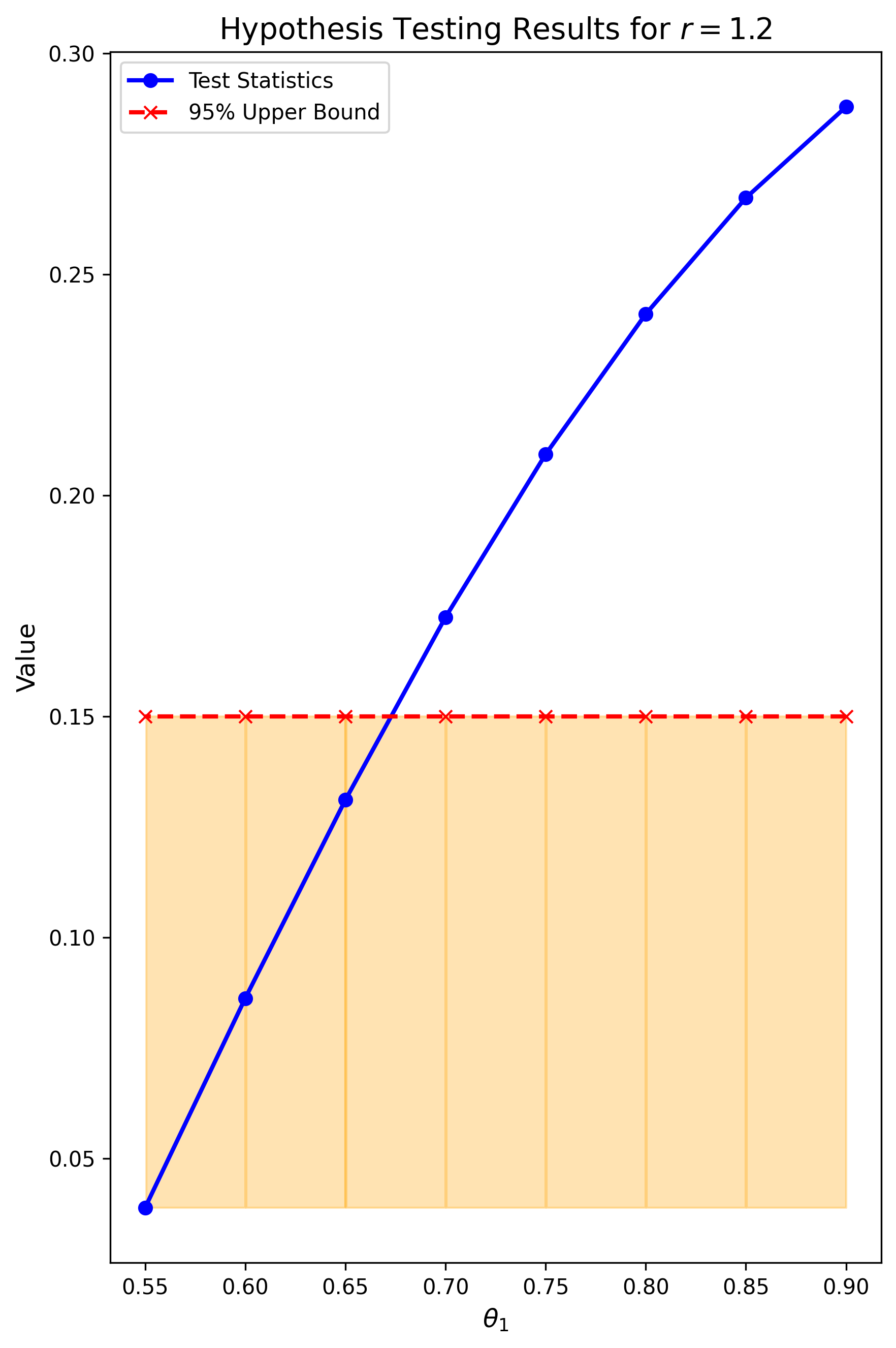}
        \caption{\(r = 1.2\)}
        \label{fig:r_1.2}
    \end{subfigure}
    \begin{subfigure}{0.3\textwidth}
        \centering
        \includegraphics[width=\textwidth]{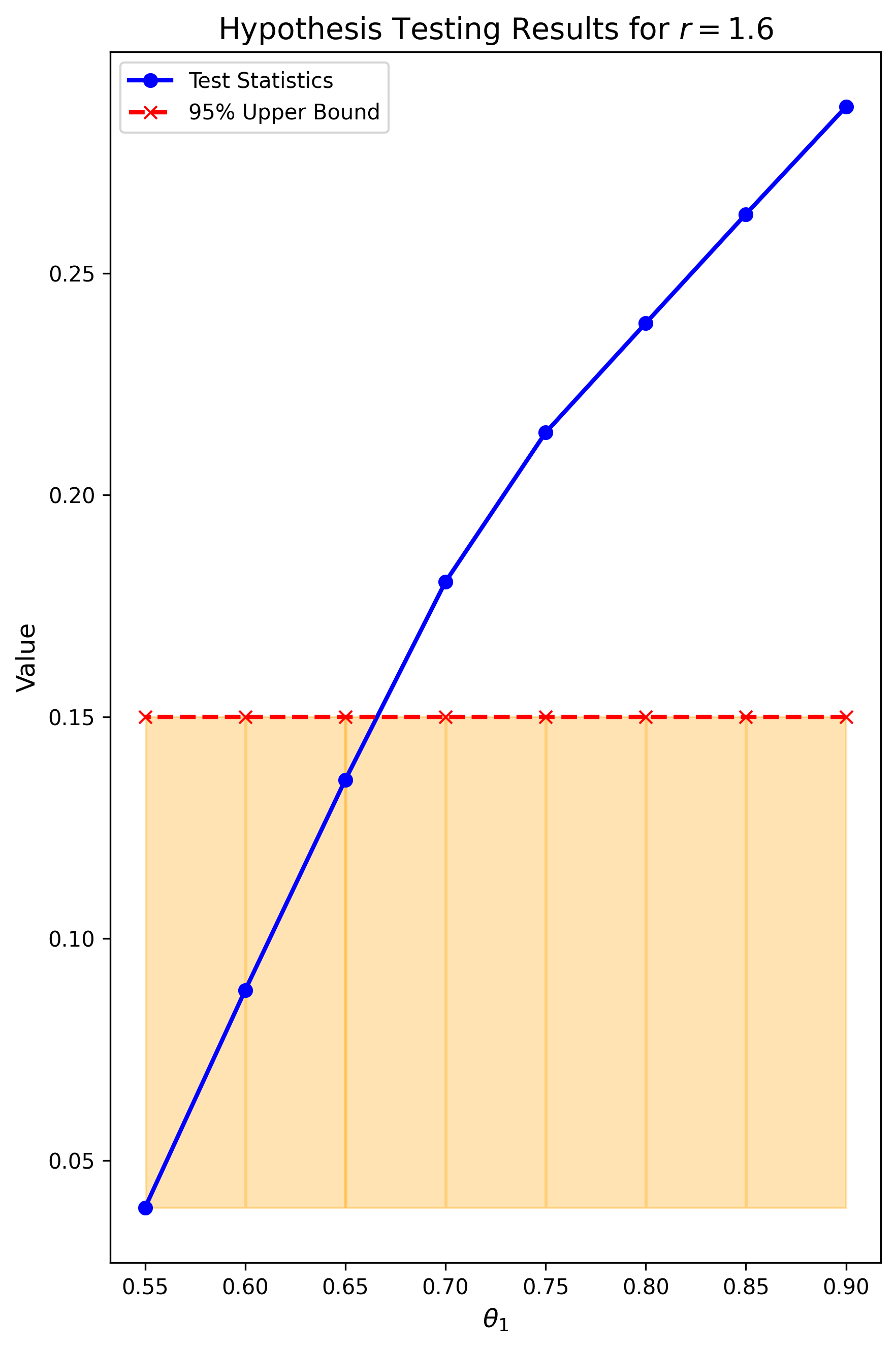}
        \caption{\(r = 1.6\)}
        \label{fig:r_1.6}
    \end{subfigure}
    \begin{subfigure}{0.3\textwidth}
        \centering
        \includegraphics[width=\textwidth]{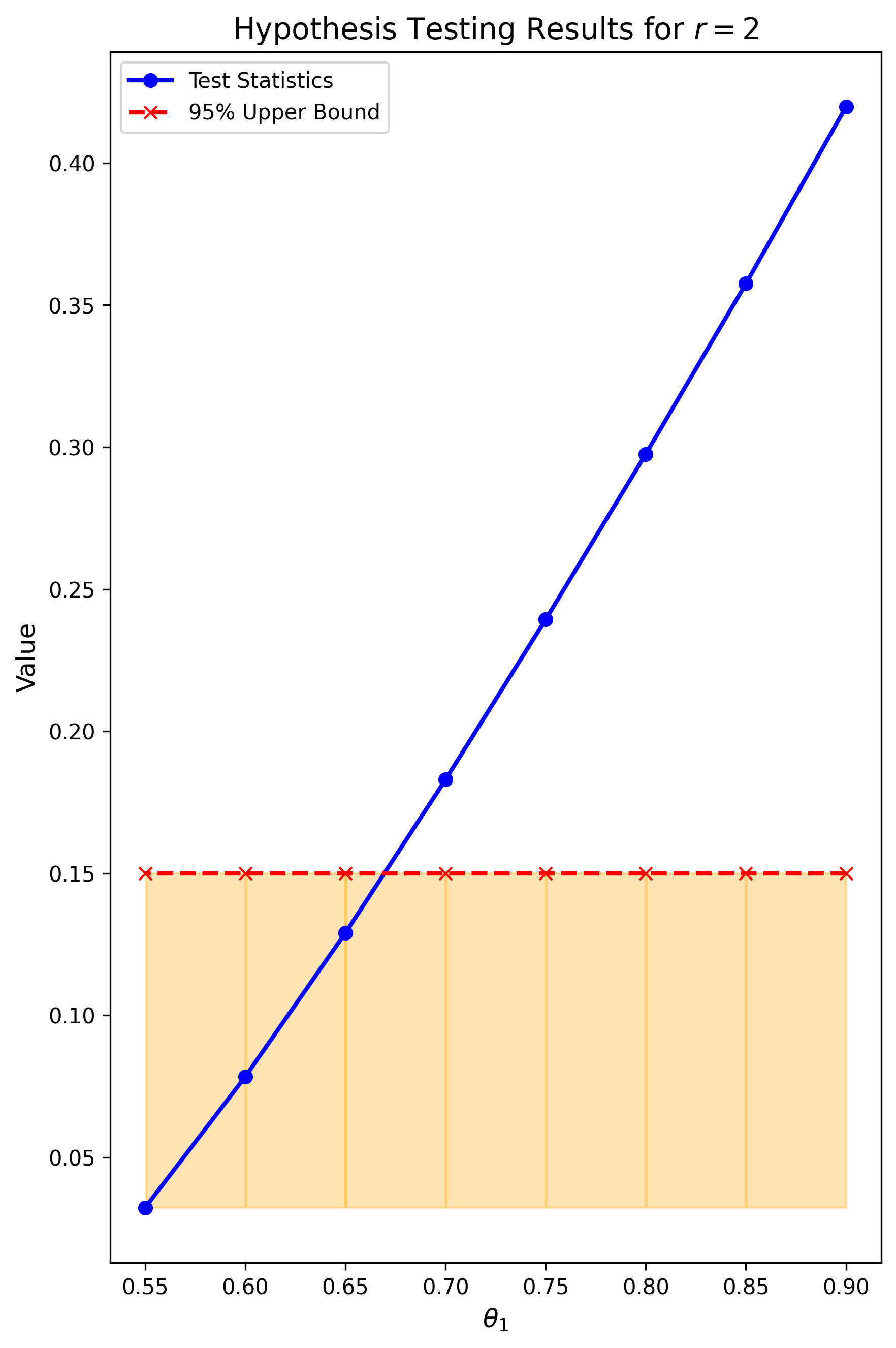}
        \caption{\(r = 2.0\)}
        \label{fig:r_2.0}
    \end{subfigure}
    \begin{subfigure}{0.3\textwidth}
        \centering
        \includegraphics[width=\textwidth]{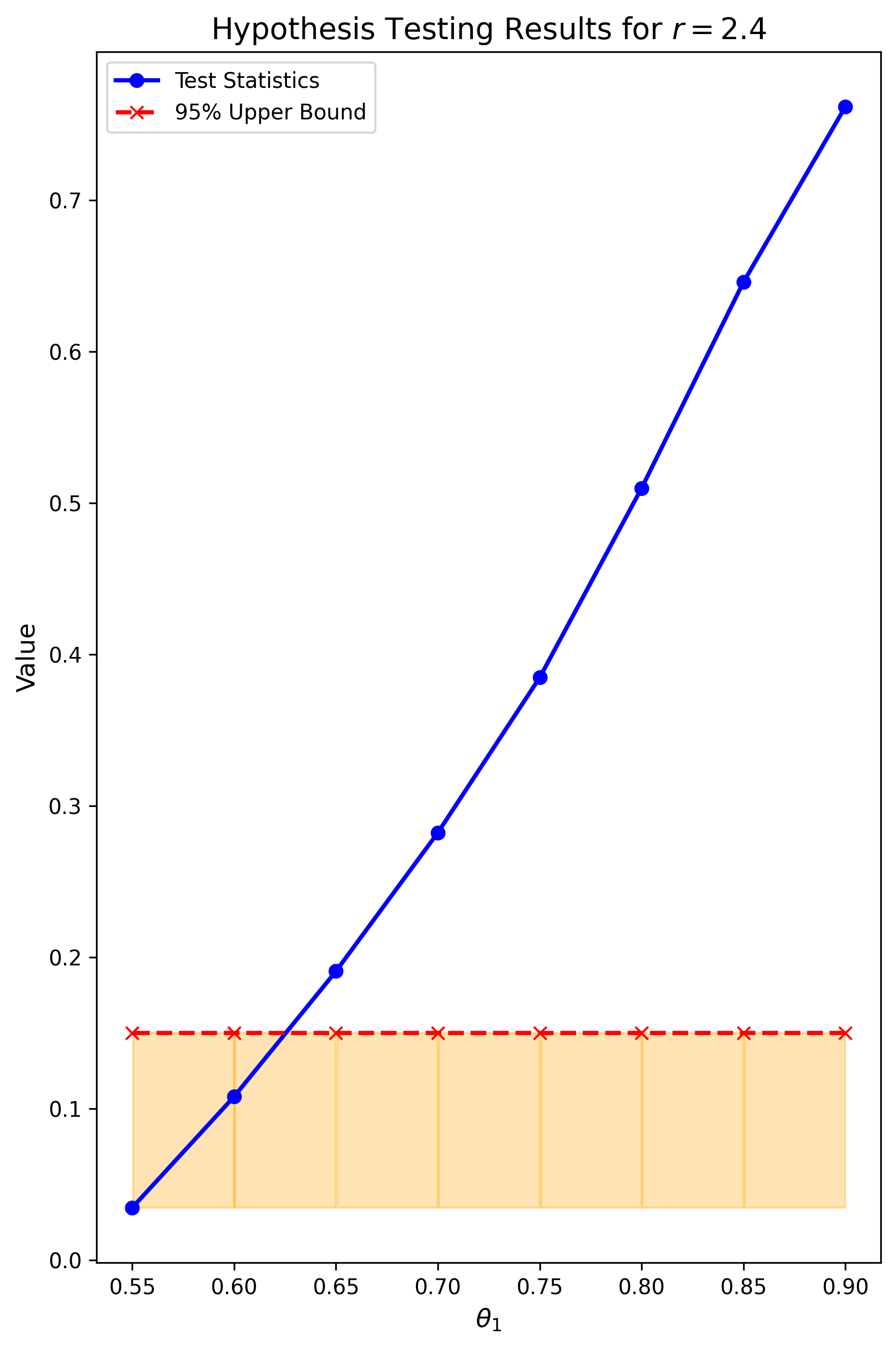}
        \caption{\(r = 2.4\)}
        \label{fig:r_2.4}
    \end{subfigure}
    \begin{subfigure}{0.3\textwidth}
        \centering
        \includegraphics[width=\textwidth]{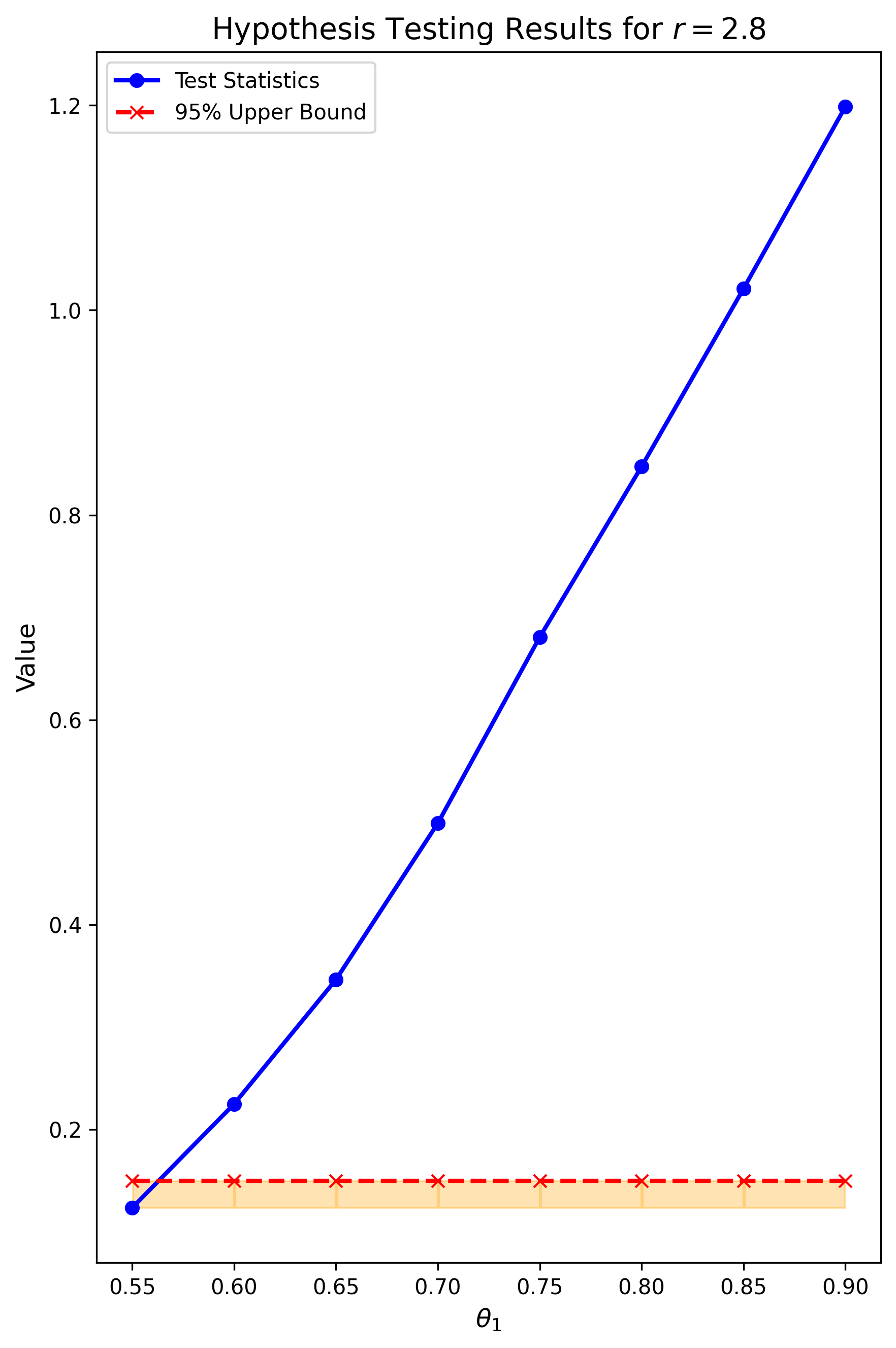}
        \caption{\(r = 2.8\)}
        \label{fig:r_2.8}
    \end{subfigure}

    \caption{Numerical results for different values of \(r\). The values of the test statistics (along $y$-axis) increase in the utility threshold $r$.}
    \label{fig:1}
\end{figure}

\begin{figure}[h!]
    \centering
    \begin{subfigure}{0.3\textwidth}
        \centering
        \includegraphics[width=\textwidth]{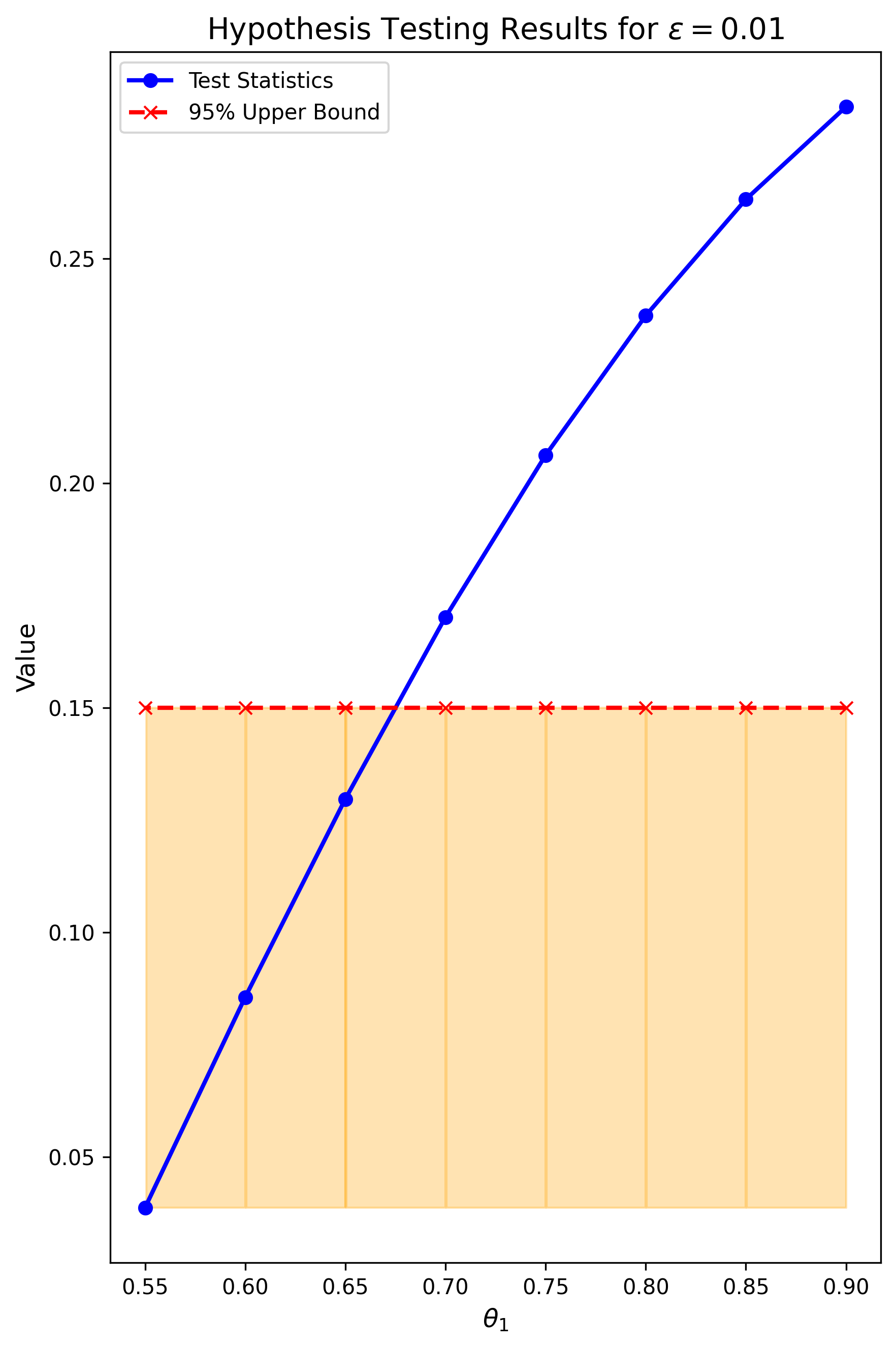}
        \caption{\(\epsilon = 0.01\)}
        \label{fig:epsilon_0.01}
    \end{subfigure}
    \begin{subfigure}{0.3\textwidth}
        \centering
        \includegraphics[width=\textwidth]{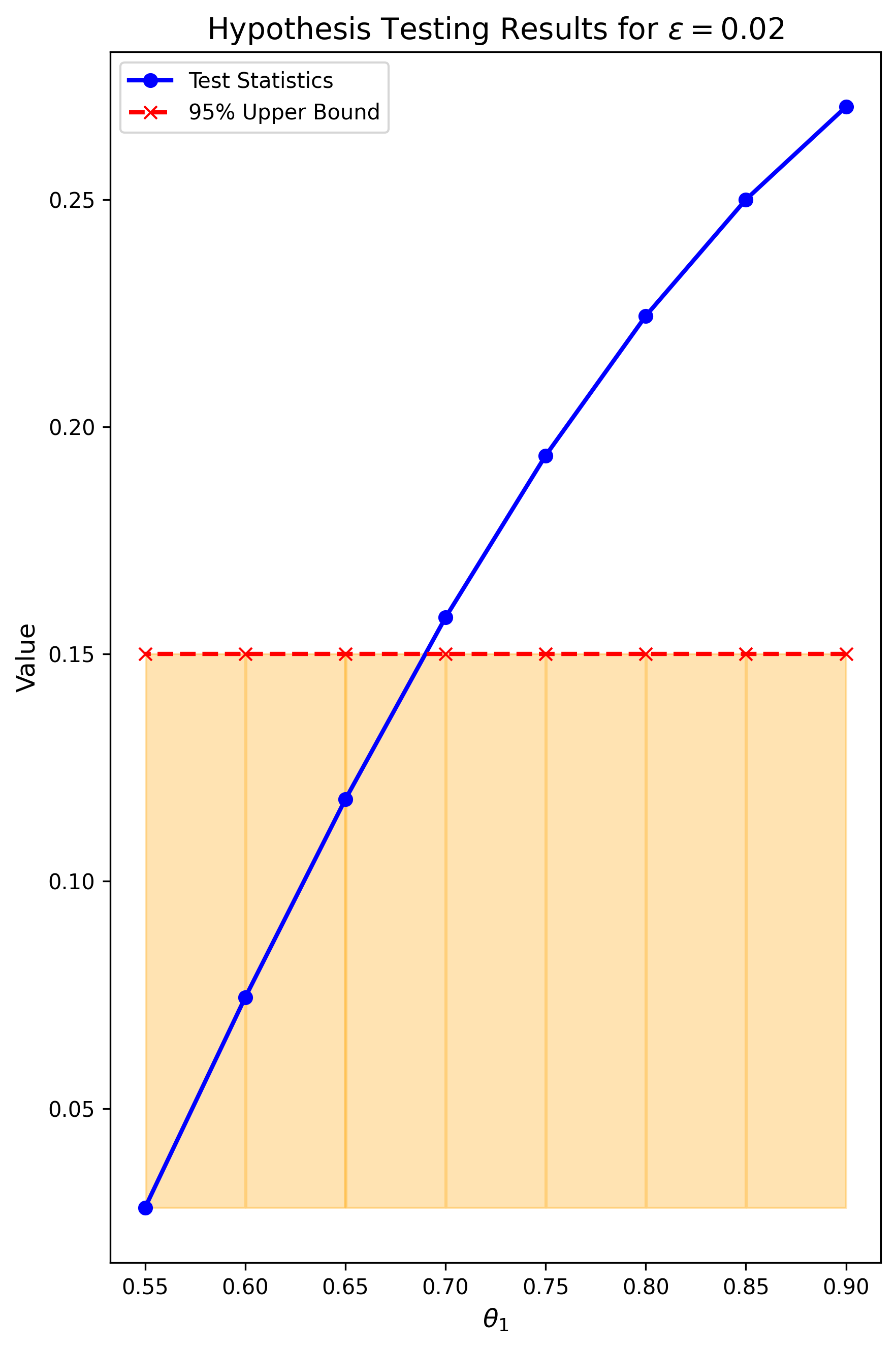}
        \caption{\(\epsilon = 0.02\)}
        \label{fig:epsilon_0.02}
    \end{subfigure}
    \begin{subfigure}{0.3\textwidth}
        \centering
        \includegraphics[width=\textwidth]{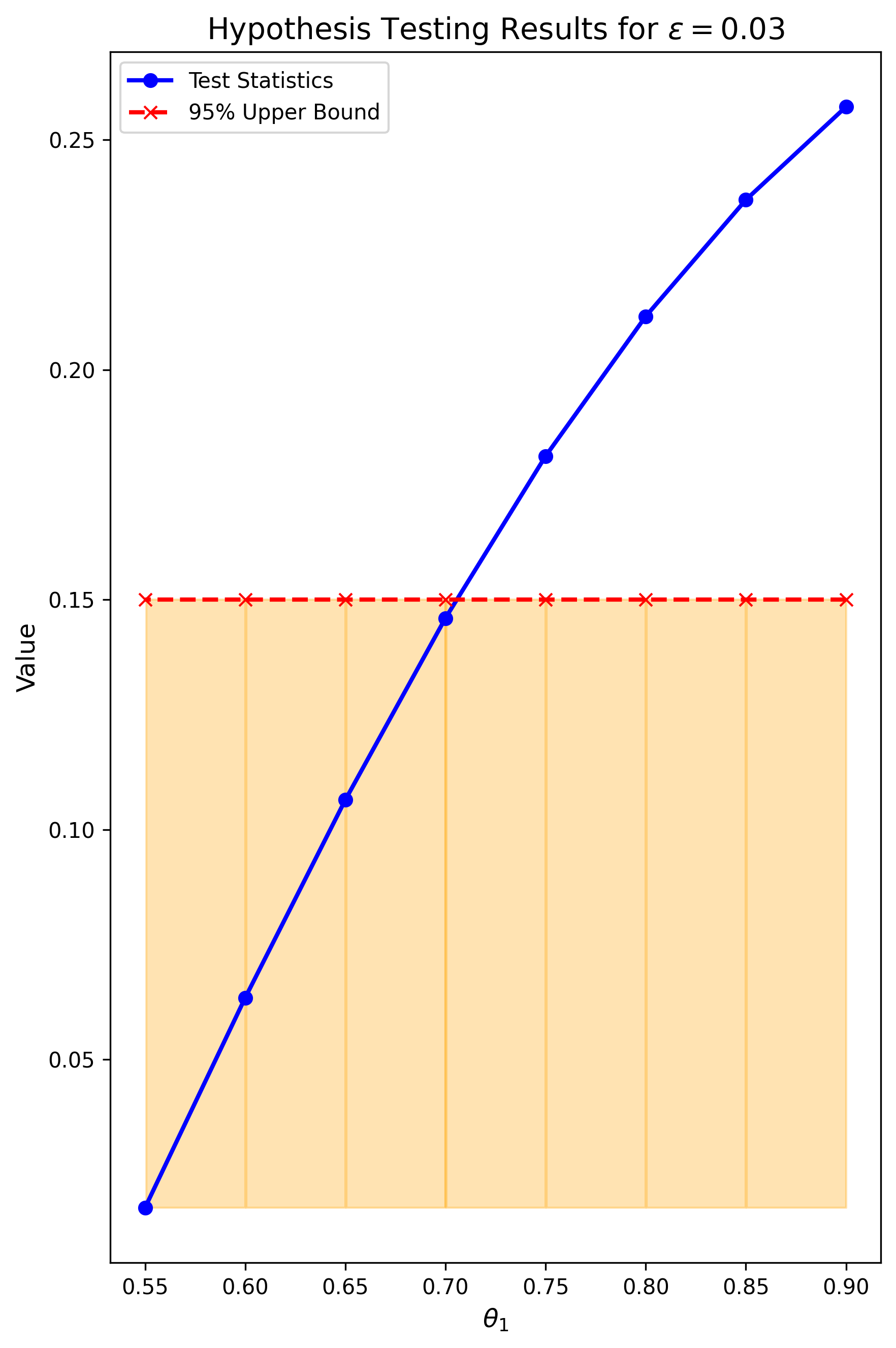}
        \caption{\(\epsilon = 0.03\)}
        \label{fig:epsilon_0.03}
    \end{subfigure}
    \begin{subfigure}{0.3\textwidth}
        \centering
        \includegraphics[width=\textwidth]{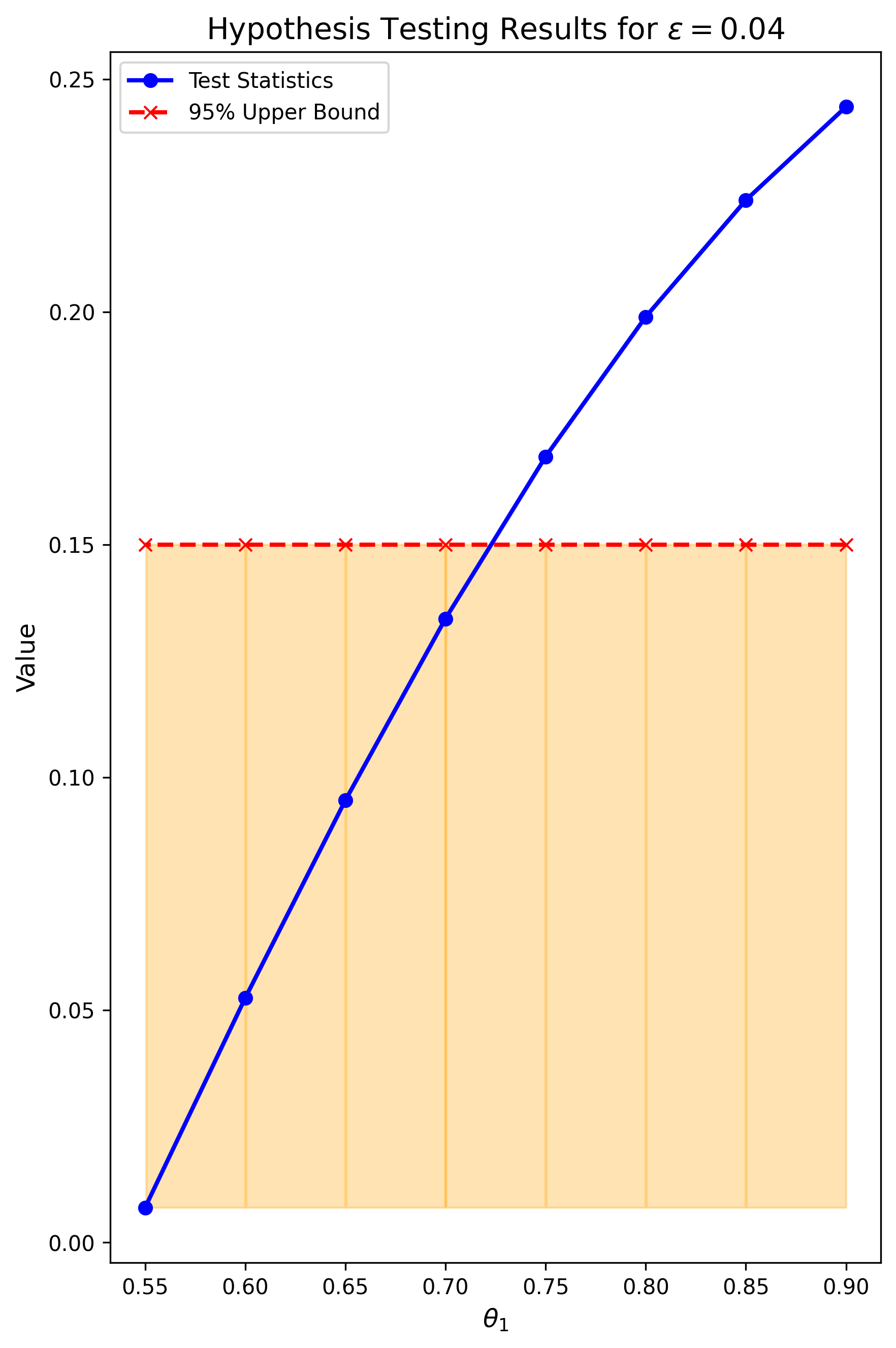}
        \caption{\(\epsilon = 0.04\)}
        \label{fig:epsilon_0.04}
    \end{subfigure}
    \begin{subfigure}{0.3\textwidth}
        \centering
        \includegraphics[width=\textwidth]{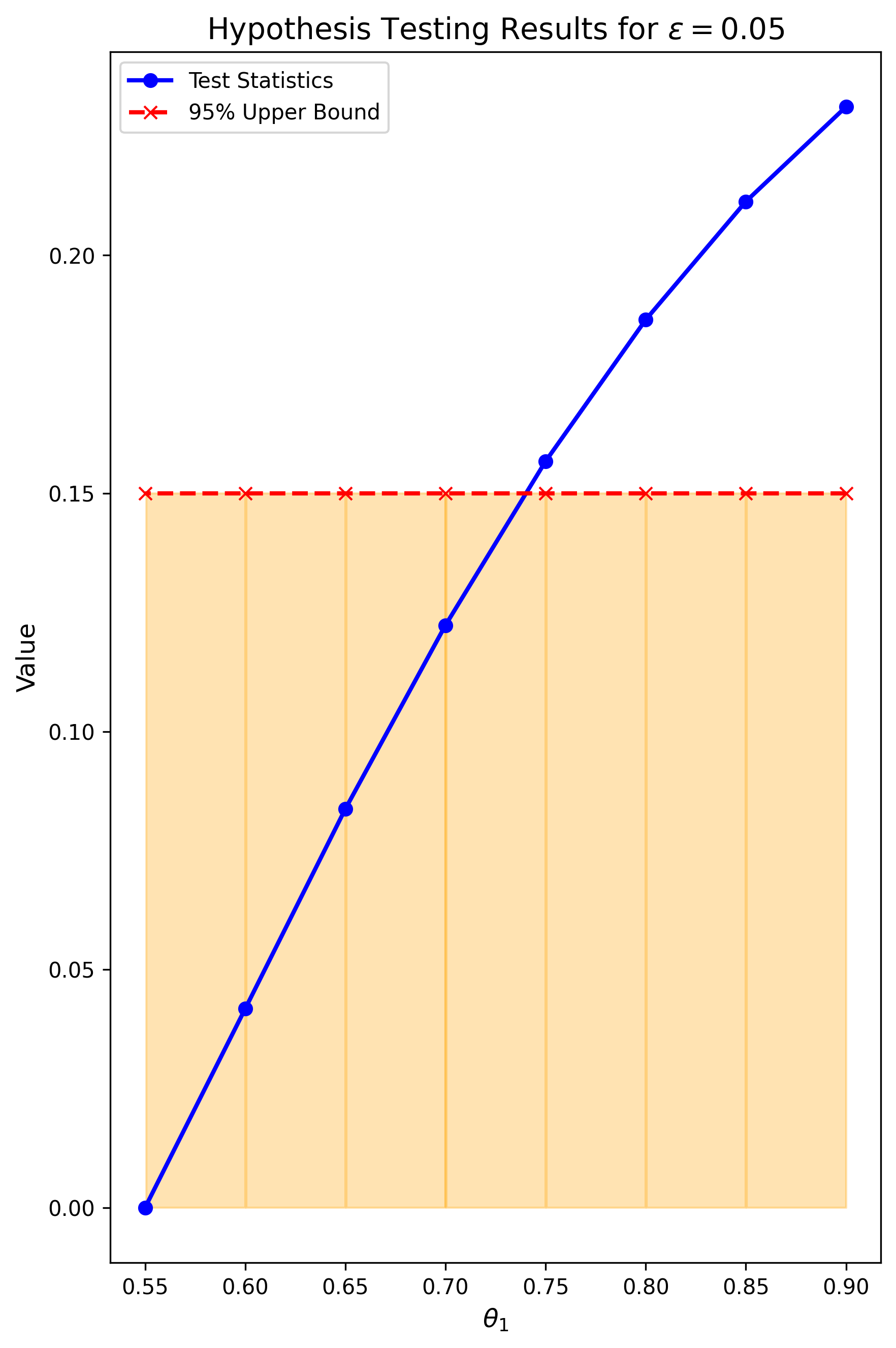}
        \caption{\(\epsilon = 0.05\)}
        \label{fig:epsilon_0.05}
    \end{subfigure}

    \caption{Numerical results for different values of \(\epsilon\). The level-$0.05$ test is rejected at larger values of $\theta_1$.}
    \label{fig:2}
\end{figure}

\subsection{Empirical Study} 
In this experiment, we evaluate the fairness of binary classifiers under varying regularization weights. We use three typical datasets with sensitive attribute information: COMPAS \citep{dua2017uci}, Arrhythmia \citep{angwin2016machine} and Drug \citep{fehrman2017five}. The details of the datasets, along with the verification of Assumption \ref{ass:dgp} are provided in Appendix \ref{appendix:dataset}. The policies of COMPAS and Arrhythmia datasets are modeled via Tikhonov-regularized logistic regression and the policies of Drug dataset are modeled via naive SVM classifiers parametrized by the ridge regularization. The conditional expected utility $m_w(x, a)$ corresponds to the loss contribution of each sample, while $M(x)$ is estimated using a Gaussian kernel-based method. Figure \ref{fig:fairness_rejection_results} presents the test statistics, fairness rejection threshold, and classifier accuracy of the three datasets. Our observations indicate that stronger regularization leads to an increase in the $0.95$ quantile of the stochastic upper bound and a lower likelihood of rejecting the null hypothesis—i.e., concluding that the policy is unfair. Consequently, a clear trade-off emerges between model accuracy and approximate fairness metrics as the regularization factor is adjusted. 
\begin{figure}[htbp]
    \centering
    % --- Row 1: Fairness vs. Accuracy ---
    \begin{minipage}{0.32\textwidth}
        \centering
        \includegraphics[width=\linewidth]{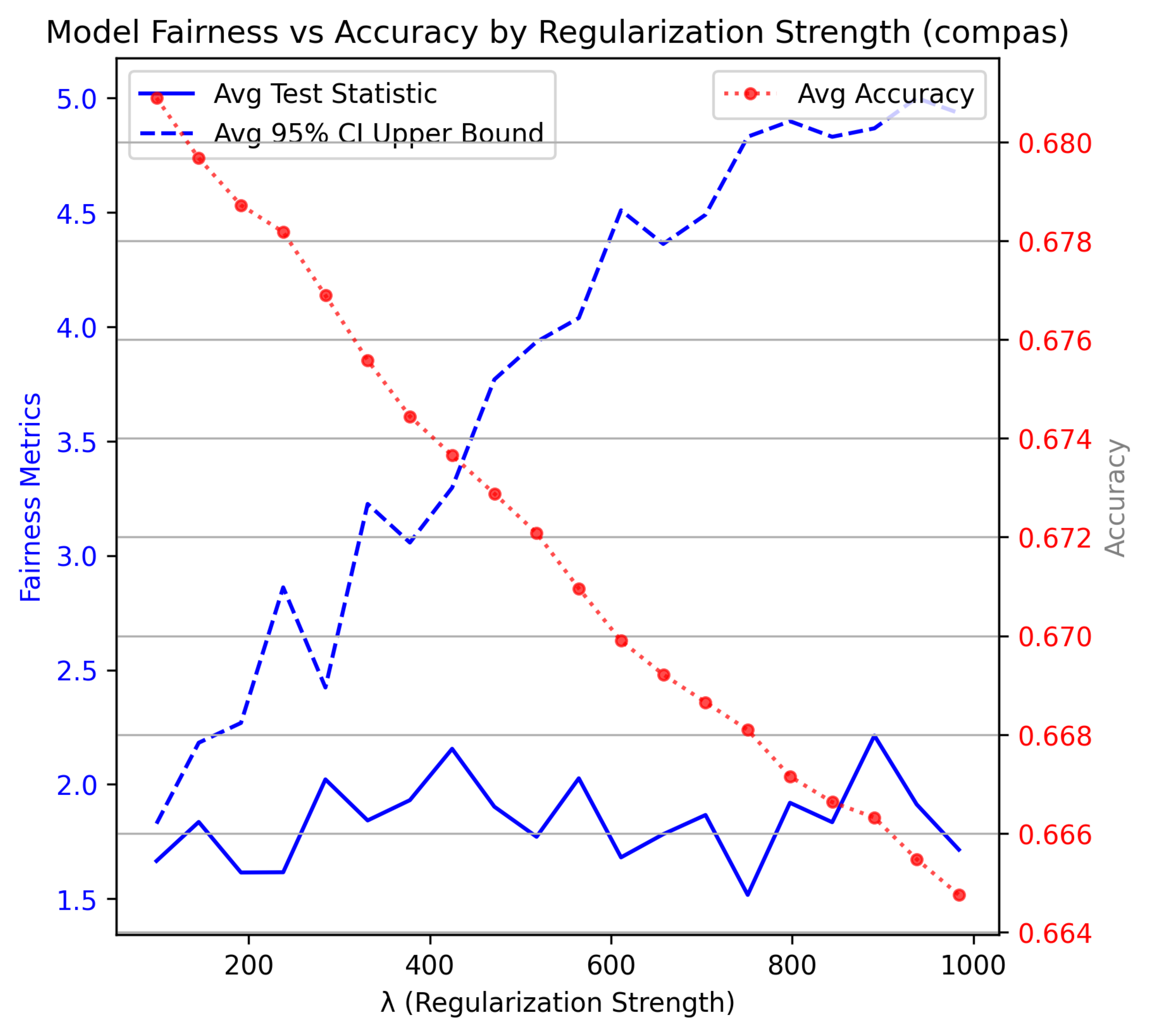}
        \caption*{(a) COMPAS}
        \label{fig:compas__fairness_accuracy}
    \end{minipage}
    \hfill
    \begin{minipage}{0.32\textwidth}
        \centering
        \includegraphics[width=\linewidth]{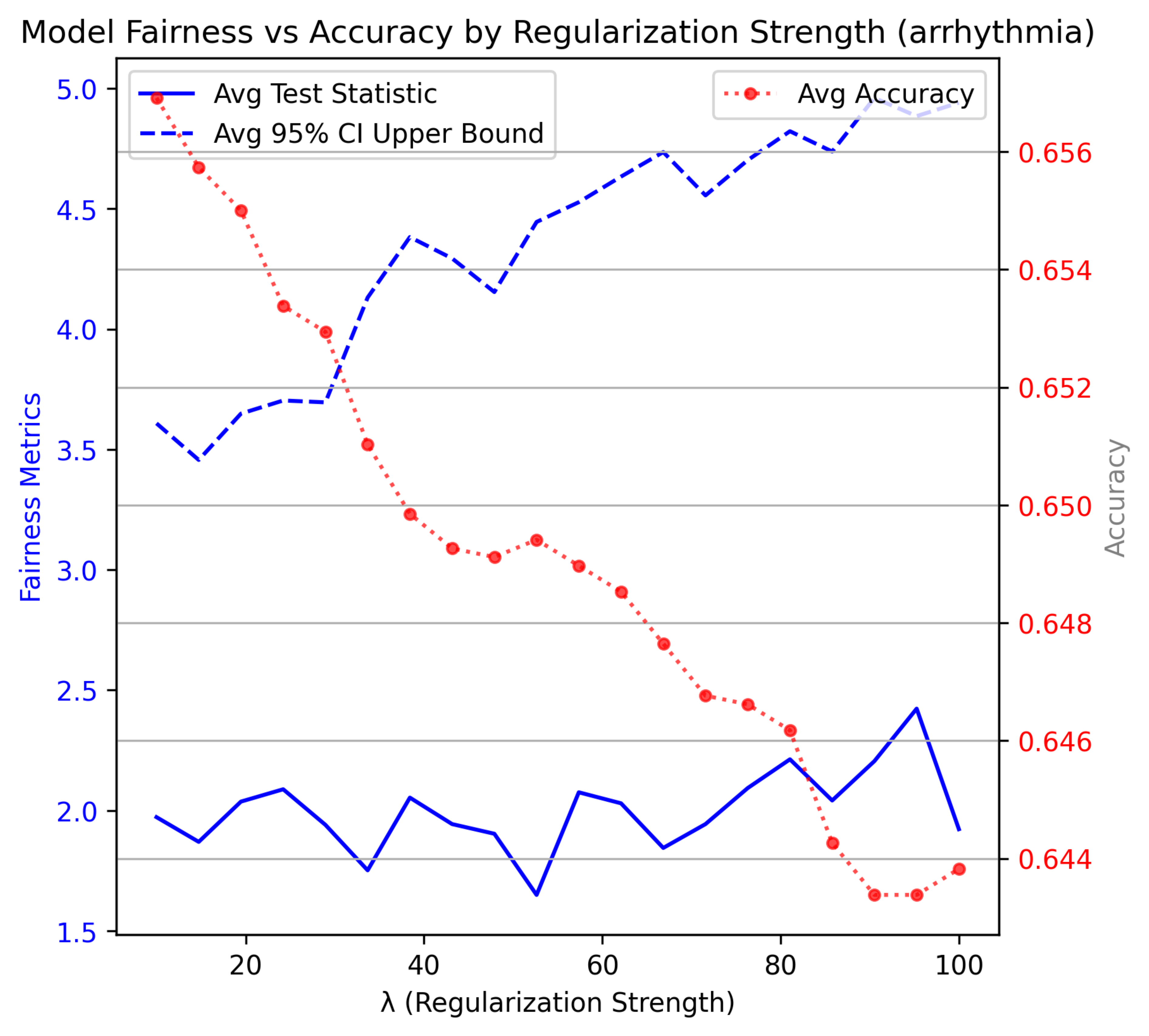}
        \caption*{(b) Arrhythmia}
        \label{fig:arrhythmia__fairness_accuracy}
    \end{minipage}
    \hfill
    \begin{minipage}{0.32\textwidth}
        \centering
        \includegraphics[width=\linewidth]{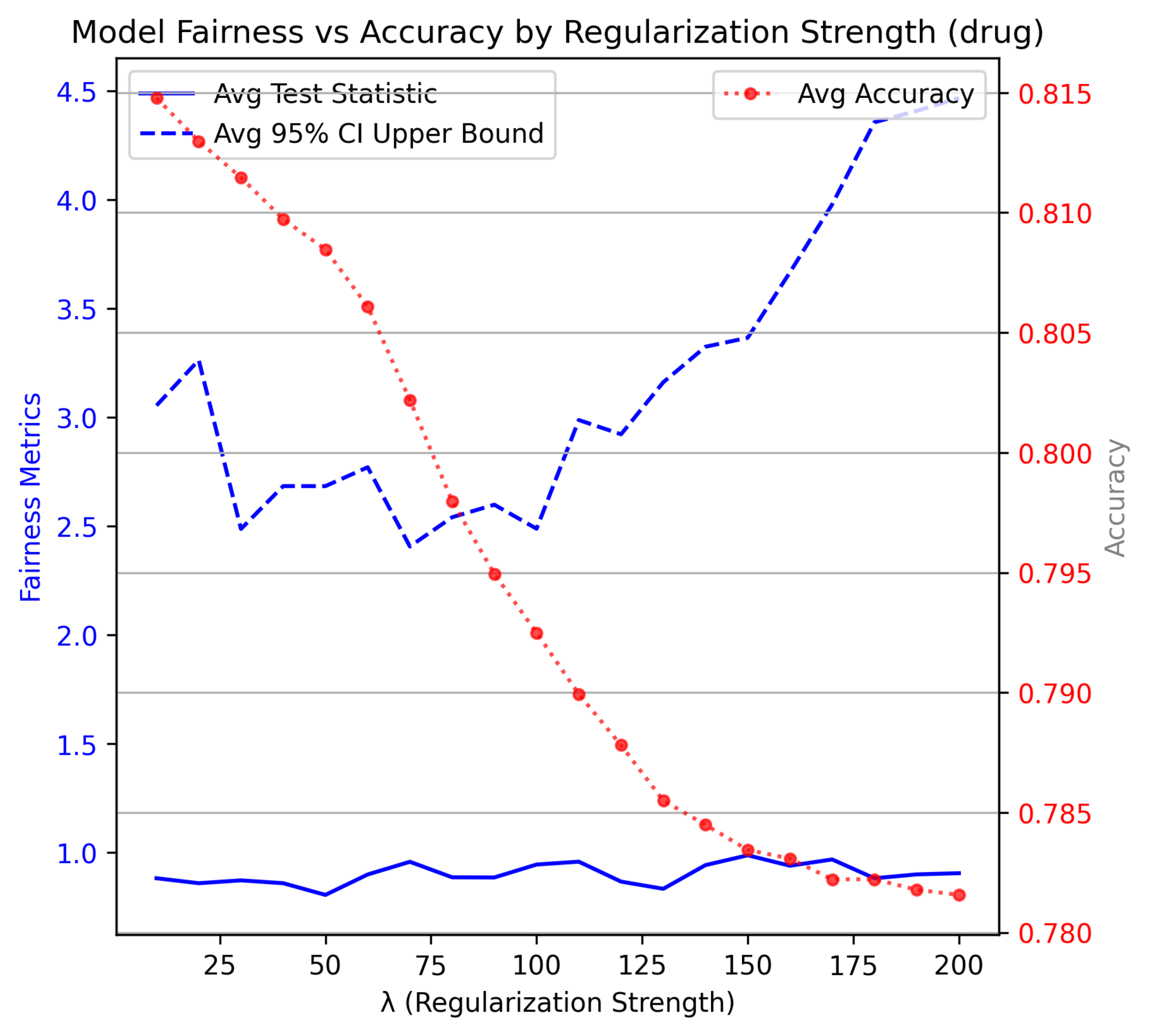}
        \caption*{(c) Drug}
        \label{fig:drug__fairness_accuracy}
    \end{minipage}
    
    \vspace{0.5em} % add vertical space between rows
    
    % --- Row 2: Rejection Rates ---
    \begin{minipage}{0.32\textwidth}
        \centering
        \includegraphics[width=\linewidth]{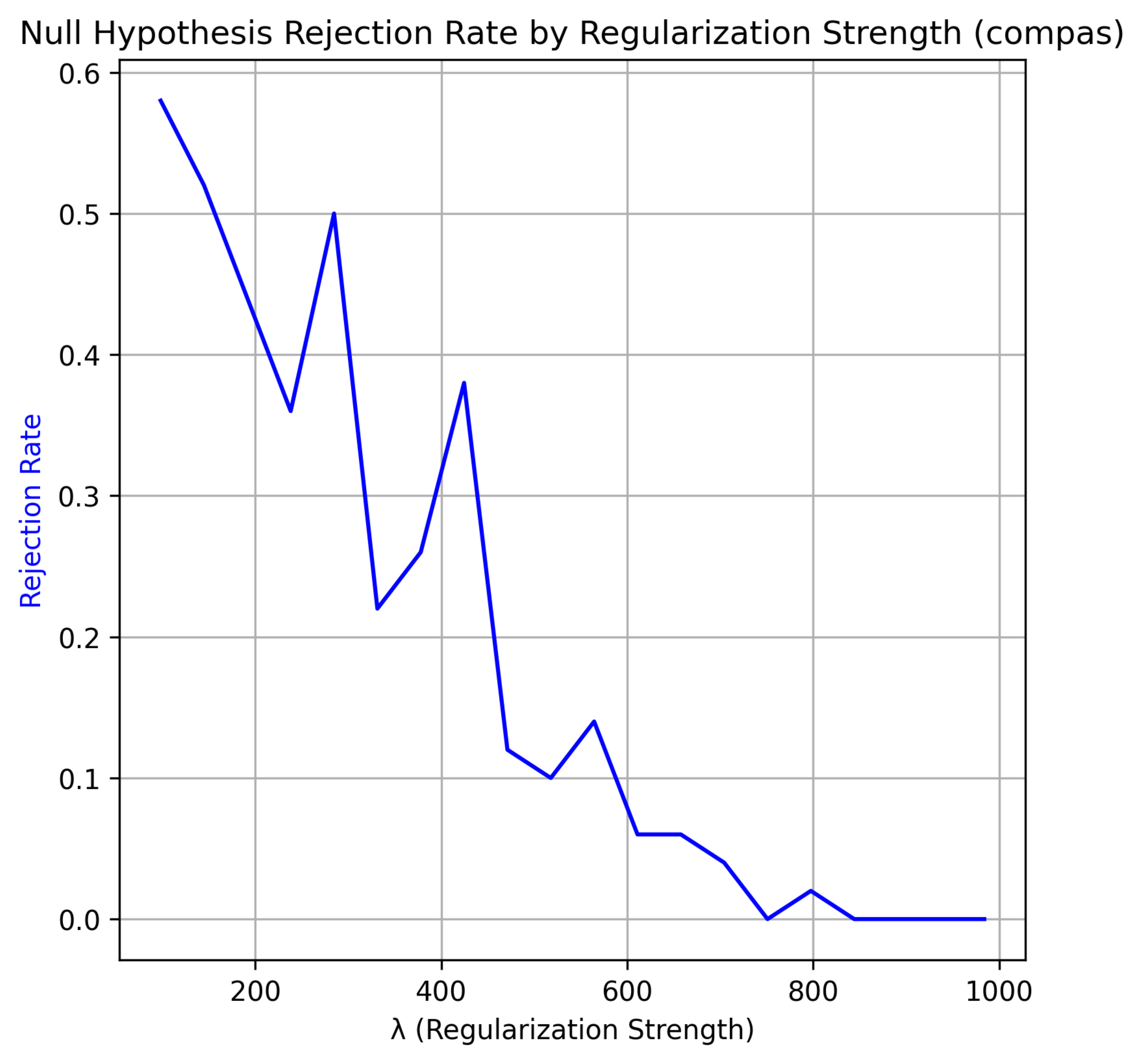}
        \caption*{(a) COMPAS}
        \label{fig:compas__rejection_rates}
    \end{minipage}
    \hfill
    \begin{minipage}{0.32\textwidth}
        \centering
        \includegraphics[width=\linewidth]{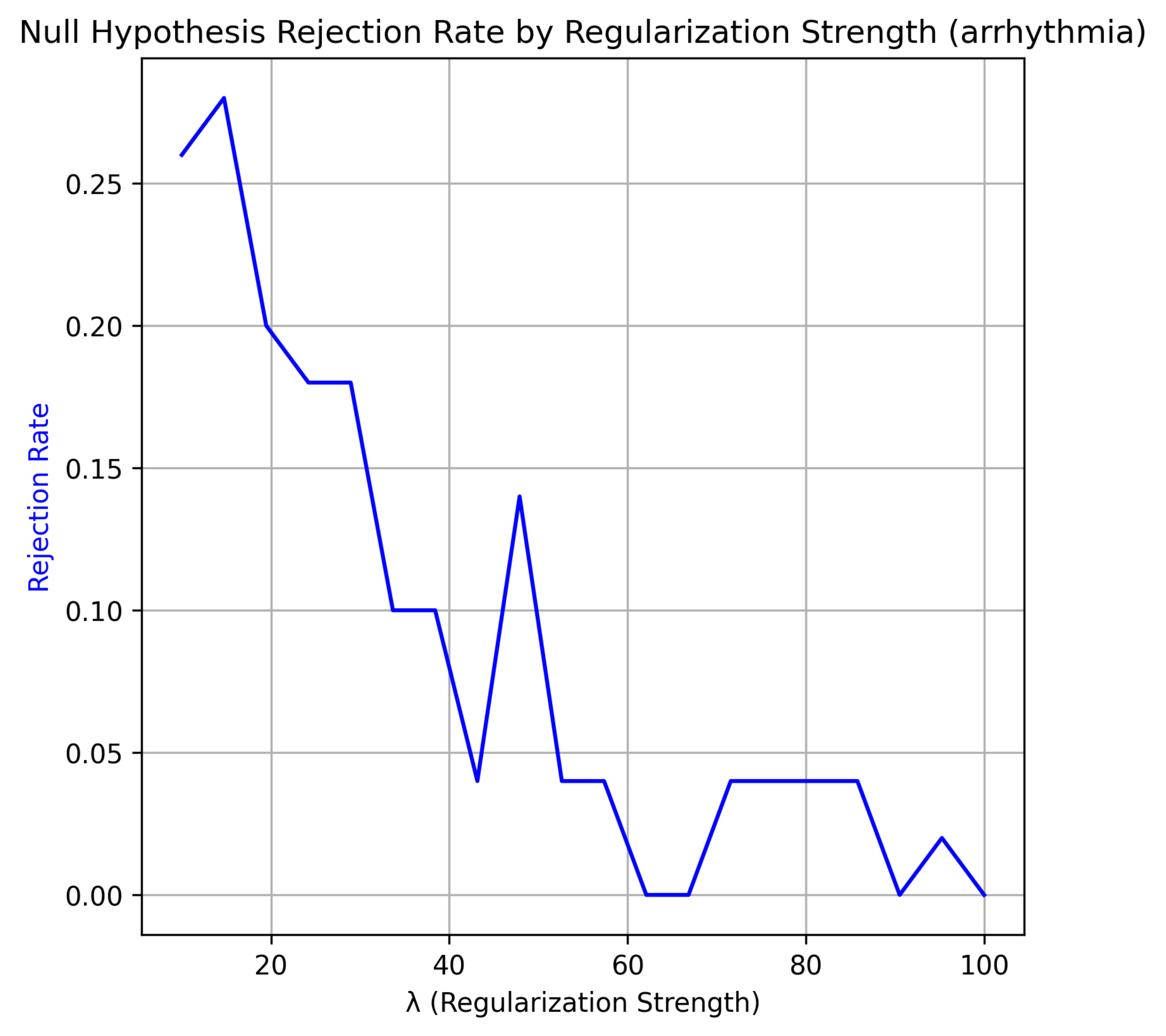}
        \caption*{(b) Arrhythmia}
        \label{fig:arrhythmia__rejection_rates}
    \end{minipage}
    \hfill
    \begin{minipage}{0.32\textwidth}
        \centering
        \includegraphics[width=\linewidth]{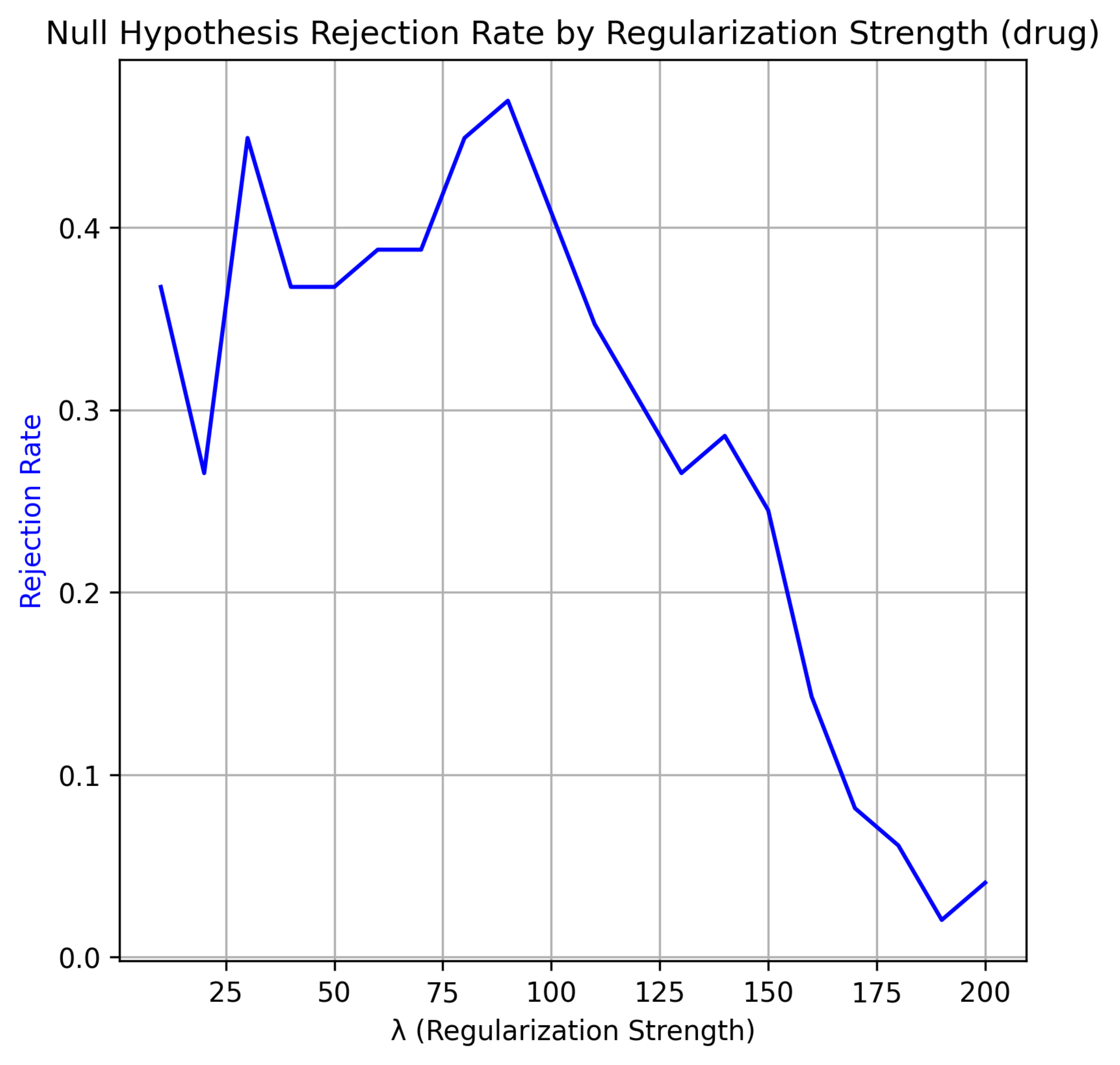}
        \caption*{(c) Drug}
        \label{fig:drug__rejection_rates}
    \end{minipage}

    \caption{Empirical results. Top row: fairness--utility tradeoff. Bottom row: rejection rates.}
    \label{fig:fairness_rejection_results}
\end{figure}

Beyond the structured-data applications examined in the main text, our framework also extends to unstructured domains such as NLP, computer vision, and recommender systems. Given their complexity and the primarily theoretical focus of our work, we provide only a high-level discussion in Appendix~\ref{appendix:unstructured}, leaving detailed empirical studies for future work.

\section{Discussion}
We propose a hypothesis testing theoretical framework for approximate fairness under utility trade-offs. The approximate fairness criterion extends the strong demographic parity, while expected utility is defined within the potential outcome framework commonly used in causal inference. Our test statistic is based on a Wasserstein projection distance and is conservative, relying on a stochastic upper bound. The framework further assumes unconfoundedness. Refinements of the upper bound and relaxations of these assumptions are left for future work. We also outline extensions of the framework to more general fairness criteria and complex empirical settings (e.g., multi-level and continuous treatments, multiple sensitive attributes) in Appendix~\ref{appendix:general}. For future work, it would be interesting to explore Pareto-optimal frontiers of thresholds $(\epsilon, r)$, which may require alternative concepts or definitions of the fairness–utility trade-off. 

\bibliographystyle{ecca}

\appendix

\section{Proofs}

\subsection{Proof of Strong Duality}
In this section, we provide the proof for the first main result of the paper --- Theorem \ref{thm:strong-duality}. 
\begin{proof}[Proof of Theorem \ref{thm:strong-duality}]
The Lagrangian function can be written as 
\begin{equation}\label{eq:lagrangian}
\begin{array}{rl}
&\quad L(\lambda,\alpha;\nu)\\
&=\lambda r-\alpha\epsilon+\mathbb{E}_{\nu}\{c(X,X')\}\\
&\displaystyle\quad-\lambda\sum_{a\in\{0,1\}}p_a\mathbb{E}_{\nu}[m_1(X,a)\pi_a(X)+m_0(X,a)(1-\pi_a(X))]\}\\
&\quad+\alpha\int_0^1|\mathbb{E}_{\nu}\left[\mathbf{1}\{\pi_1(X)\geq\tau\}-\mathbf{1}\{\pi_0(X)\geq\tau\}\right]|d\tau
\end{array}
\end{equation}
where $\lambda\in\mathbb{R}_{+}, \alpha\in\mathbb{R}_{+}$, and $\nu$ belongs to the feasible set that 
$$\Gamma(\hat{\mathbb{P}}_N)=\left\{\nu\in\mathcal{P}\left(\mathcal{X}\times\mathcal{X}\right):\nu_{X'}=\hat{\mathbb{P}}_N\right\}.$$
Note that $\mathcal{X}$ is compact, so $\mathcal{P}(\mathcal{X})$ is tight, so $\Gamma(\hat{\mathbb{P}}_N)$ is also tight. Note that $L(\lambda,\alpha;\nu)$ is convex in $\nu$ and linear in $(\lambda,\alpha)$. Thus $L(\lambda,\alpha;\nu)$ is a concave-cone mapping, where $L(\cdot;\nu)$ is concave and $L(\lambda,\alpha;\cdot)$ is convex. 

\noindent We want to prove the following two statements:
\begin{itemize}
    \item[1)] The suprema of $\inf_{\nu\in\Gamma(\hat{\mathbb{P}}_N)}L(\lambda,\alpha;\nu)$ with respect to $(\lambda,\alpha)$ are bounded on $\mathbb{R}_{+}\times\mathbb{R}_{+}$.
    \item[2)] $L(\lambda,\alpha;\cdot)$ is lower bounded for some $(\lambda,\alpha)$ in the relative interior of some bounded subset of $\mathbb{R}_{+}\times\mathbb{R}_{+}$.
\end{itemize}

To prove the first statement, let $\mathbb{Q}_0$ be a measure in $\mathcal{P}(\mathcal{X})$ such that $\mathbb{Q}_0$ concentrates on some $x\in\mathcal{X}$ (i.e. $\mathbb{Q}_0(X=x)=1$), where $\pi_1(x)=\pi_0(x)=\xi\in(0,1)$ and 
$$\sum_{a\in\{0,1\}}p_a(x)[m_1(x,a)\pi_a(x)+m_0(x,a)(1-\pi_a(x))]\geq r.$$

Then by taking $\nu_0=\mathbb{Q}_0\times\hat{\mathbb{P}}_N\in\Gamma(\hat{\mathbb{P}}_N)$, we have 
$$\sup_{(\lambda,\alpha)\in\mathbb{R}_{+}\times\mathbb{R}_{+}}\inf_{\nu\in\Gamma(\hat{\mathbb{P}}_N)}L(\lambda,\alpha;\nu)\leq\sup_{(\lambda,\alpha)\in\mathbb{R}_{+}\times\mathbb{R}_{+}}L(\lambda,\alpha;\nu_0),$$
where 
\begin{equation}\label{eq:RHS:upper}
\begin{array}{rl}
&\quad\sup_{(\lambda,\alpha)\in\mathbb{R}_{+}\times\mathbb{R}_{+}}L(\lambda,\alpha;\nu_0)\\
&=\mathbb{E}_{\nu_0}[c(X,X')]-\alpha\epsilon\\
&\quad+\lambda\{r-\sum_{a\in\{0,1\}}p_a(x)[m_1(x,a)\pi_a(x)+m_0(x,a)(1-\pi_a(x))]\}\\
&=\mathbb{E}_{\nu_0}[c(X,X')],
\end{array}
\end{equation}
where $\lambda^*=\alpha^*=0$ in (\ref{eq:RHS:upper}). Since $\mathcal{X}$ is compact and $c$ is continuous, thus $\mathbb{E}_{\nu_0}[c(X,X')]$ is bounded. Hence $$\sup_{(\lambda,\alpha)\in\mathbb{R}_{+}\times\mathbb{R}_{+}}\inf_{\nu\in\Gamma(\hat{\mathbb{P}}_N)}L(\lambda,\alpha;\nu)<\infty.$$

Assume that the suprema of $\inf_{\nu\in\Gamma(\hat{\mathbb{P}}_N)}L(\lambda,\alpha;\nu)$ with respect to $\lambda,\alpha$ goes to infinity in $$\sup_{(\lambda,\alpha)\in\mathbb{R}_{+}\times\mathbb{R}_{+}}\inf_{\nu\in\Gamma(\hat{\mathbb{P}}_N)}L(\lambda,\alpha;\nu),$$
since for any $\nu\in\Gamma(\hat{\mathbb{P}}_N)$,
{\begin{equation}\label{eq:Lagrangian}
\begin{array}{rl}
&\quad L(\lambda,\alpha;\nu)\\
&\displaystyle=\mathbb{E}_{\nu}[c(X,X')]+\lambda r+\alpha\left\{\int_0^1|\mathbb{E}_{\nu}[\mathbf{1}\{\pi_1(X)\geq\tau\}-\mathbf{1}\{\pi_0(X)\geq\tau\}]|-\epsilon\right\}\\
&\displaystyle\quad -\lambda\sum_{a\in\{0,1\}}\mathbb{E}_{\nu}[\{m_1(X,a)\pi_a(X)+m_0(X,a)(1-\pi_a(X))\}p_a(X)]
\end{array}
\end{equation}}
and we already know that $$\sup_{(\lambda,\alpha)\in\mathbb{R}_{+}\times\mathbb{R}_{+}}\inf_{\nu\in\Gamma(\hat{\mathbb{P}}_N)}L(\lambda,\alpha;\nu)<\infty,$$ 
thus given any $$(\lambda_j,\alpha_j)\in\mathbb{R}_{+}\times\mathbb{R}_{+},$$ 
such that either $\lambda_j\rightarrow\infty$ or $\alpha_j\rightarrow\infty$ holds as $j\rightarrow\infty$, let 
$$\{\nu_k^j\}_{k\in\mathbb{N}}\subset\Gamma(\hat{\mathbb{P}}_N)$$ 
be a sequence of probability measures such that 
$$\begin{array}{ll}
&\quad\lim_{j\rightarrow\infty}\lim_{k\rightarrow\infty}L(\lambda_j,\alpha_j;\nu_k^j)\\
&=\lim_{j\rightarrow\infty}\inf_{\nu\in\Gamma(\hat{\mathbb{P}}_N)}L(\lambda_j,\alpha_j;\nu)\\
&=\sup_{(\lambda,\alpha)\in\mathbb{R}_{+}\times\mathbb{R}_{+}}\inf_{\nu\in\Gamma(\hat{\mathbb{P}}_N)}L(\lambda,\alpha;\nu)<\infty.
\end{array}$$
Thus there must exist some $J$, such that for any $j\geq J$ and for any $k\in\mathbb{N}$, we have 
$$\int_0^1|\mathbb{E}_{\nu_k^j}[\mathbf{1}\{\pi_1(X)\geq\tau\}-\mathbf{1}\{\pi_0(X)\geq\tau\}]|-\epsilon\leq0.$$
$$r-\sum_{a\in\{0,1\}}\mathbb{E}_{\nu_k^j}[\{m_1(X,a)\pi_a(X)+m_0(X,a)(1-\pi_a(X))\}p_a(X)]\leq0.$$
Suppose there exists subsequences $\{j_{n}\}\subset\mathbb{N}$  where $j_{n}\geq J$ there are infinitely many $k$ such that at least one of the following two strict inequalities hold:
$$\int_0^1|\mathbb{E}_{\nu_k^{j_n}}[\mathbf{1}\{\pi_1(X)\geq\tau\}-\mathbf{1}\{\pi_0(X)\geq\tau\}]|-\epsilon<0,$$
$$r-\sum_{a\in\{0,1\}}p_a\mathbb{E}_{\nu_k^{j_n}}[m_1(X,a)\pi_a(X)+m_0(X,a)(1-\pi_a(X))]<0.$$
Note that $\lambda_{j_n},\alpha_{j_n}\rightarrow\infty$, then we have a subsequence $\{\lambda_{j_n}\}\subset\{\lambda_j\}$, $\{\alpha_{j_n}\}\subset\{\alpha_j\}$, such that
$$\begin{array}{rl}
-\infty&=\lim_{j_n\rightarrow\infty}\inf_{\nu\in\Gamma(\hat{\mathbb{P}}_N)}L(\lambda_{j_n},\alpha_{j_n};\nu)\\
&=\sup_{(\lambda,\alpha)\in\mathbb{R}_{+}\times\mathbb{R}_{+}}\inf_{\nu\in\Gamma(\hat{\mathbb{P}}_N)}L(\lambda,\alpha;\nu)\\
&\geq L(0,0;\nu)>-\infty,
\end{array}$$
which leads to contradiction. Hence for any $j$, we can only have finitely many $k$ for where one of the following strict inequality holds: 
$$\int_0^1|\mathbb{E}_{\nu_k^{j}}[\mathbf{1}\{\pi_1(X)\geq\tau\}-\mathbf{1}\{\pi_0(X)\geq\tau\}]|-\epsilon<0,$$
$$r-\sum_{a\in\{0,1\}}\mathbb{E}_{\nu_k^{j}}[\{m_1(X,a)\pi_a(X)+m_0(X,a)(1-\pi_a(X))\}p_a(X)]<0.$$
This implies that for any j, except for at most finitely many $k$, we have 
$$\int_0^1|\mathbb{E}_{\nu_k^{j}}[\mathbf{1}\{\pi_1(X)\geq\tau\}-\mathbf{1}\{\pi_0(X)\geq\tau\}]|-\epsilon=0,$$
$$r-\sum_{a\in\{0,1\}}\mathbb{E}_{\nu_k^{j}}[\{m_1(X,a)\pi_a(X)+m_0(X,a)(1-\pi_a(X))\}p_a(X)]=0.$$
This implies that we can take $\Lambda\subset\mathbb{R}_{+},\mathcal{S}\subset\mathbb{R}_{+}$, where $\Lambda=[0,B],\mathcal{S}=[0,B]$, and $B$ is a sufficiently large but bounded constant, we have
\begin{equation}\label{eq:bounded-interval}
    \sup_{(\lambda,\alpha)\in\mathbb{R}_{+}\times\mathbb{R}_{+}}\inf_{\nu\in\Gamma(\hat{\mathbb{P}}_N)}L(\lambda,\alpha;\nu)=\sup_{(\lambda,\alpha)\in\Lambda\times\mathcal{S}}\inf_{\nu\in\Gamma(\hat{\mathbb{P}}_N)}L(\lambda,\alpha;\nu).
\end{equation}
Thus we have proved the first statement.

To prove the second statement, it is sufficient to prove that given some $\lambda>0,\alpha>0$, $L(\lambda,\alpha;\nu)$ is lower bounded for any $\nu\in\Gamma(\hat{\mathbb{P}}_N)$. This follows immediately by (\ref{eq:Lagrangian}), the compactness of $\mathcal{X}$ and the continuity of $c,\pi_1,\pi_0,m_1(\cdot,1),m_0(\cdot,0)$.
Thus by Lemma \ref{lemma:Perchet2015}, we have 
\begin{equation}\label{eq:strong-duality-left-side}
\begin{array}{rl}
&\quad\sup_{(\lambda,\alpha)\in\mathbb{R}_{+}\times\mathbb{R}_{+}}\inf_{\nu\in\Gamma(\hat{\mathbb{P}}_N)}L(\lambda,\alpha;\nu)\\
&=\sup_{(\lambda,\alpha)\in\Lambda\times\mathcal{S}}\inf_{\nu\in\Gamma(\hat{\mathbb{P}}_N)}L(\lambda,\alpha;\nu)\\
&=\inf_{\nu\in\Gamma(\hat{\mathbb{P}}_N)}\sup_{(\lambda,\alpha)\in\Lambda\times\mathcal{S}}L(\lambda,\alpha;\nu).
\end{array}
\end{equation}
For the last step, we want to show that for $B$ large enough, with $\Lambda=[0,B],\mathcal{S}=[0,B]$, we have 
\begin{equation}\label{eq:last-step:strong-duality}
\begin{array}{rl}
&\quad\inf_{\nu\in\Gamma(\hat{\mathbb{P}}_N)}\sup_{(\lambda,\alpha)\in\Lambda\times\mathcal{S}}L(\lambda,\alpha;\nu) \\
&= \inf_{\nu\in\Gamma(\hat{\mathbb{P}}_N)}\sup_{(\lambda,\alpha)\in\mathbb{R}_{+}\times\mathbb{R}_{+}}L(\lambda,\alpha;\nu).
\end{array}
\end{equation}
First note that when $\alpha\rightarrow\infty$ or $\lambda\rightarrow\infty$, by taking $\nu_0=\mathbb{Q}_0\times\hat{\mathbb{P}}_N$, where $\mathbb{Q}_0$ is defined in the same way as before, we will have 
\begin{itemize}
    \item[(i)]$\displaystyle\inf_{\nu\in\Gamma(\hat{\mathbb{P}}_N)}\lim_{\lambda\rightarrow\infty,\alpha\rightarrow\infty}L(\lambda,\alpha;\nu)\leq\lim_{\lambda\rightarrow\infty,\alpha\rightarrow\infty}L(\lambda,\alpha;\nu_0)=-\infty$.
    \item[(ii)] $\displaystyle\inf_{\nu\in\Gamma(\hat{\mathbb{P}}_N)}\lim_{\lambda\rightarrow\infty}L(\lambda,\alpha;\nu)\leq\lim_{\lambda\rightarrow\infty}L(\lambda,\alpha;\nu_0)=-\infty$ fixing any $\alpha\geq0$
    \item[(iii)]$\displaystyle\inf_{\nu\in\Gamma(\hat{\mathbb{P}}_N)}\lim_{\alpha\rightarrow\infty}L(\lambda,\alpha;\nu)\leq\lim_{\alpha\rightarrow\infty}L(\lambda,\alpha;\nu_0)=-\infty$ fixing any $\lambda\geq0$.
\end{itemize}
And note that 
\begin{equation}\label{eq:key:ineq:last-step}
\begin{array}{ll}    &\quad\inf_{\nu\in\Gamma(\hat{\mathbb{P}}_N)}\sup_{(\lambda,\alpha)\in\Lambda\times\mathcal{S}}L(\lambda,\alpha,\nu)\\
&\geq\inf_{\nu\in\Gamma(\hat{\mathbb{P}}_N)}L(0,0,\nu)\\
&=\inf_{\nu\in\Gamma(\hat{\mathbb{P}}_N)}\mathbb{E}_{\nu}[c(X,X')]>-\infty.
\end{array}
\end{equation}
Suppose (\ref{eq:last-step:strong-duality}) does not hold for any $B>0$. Then for any $B>0$, for any $(\lambda,\alpha)\in[0,B]\times[0,B]$, there always exists some $\lambda_1>B$ or $\alpha_1>B$, such that at least one the three statements holds:
\begin{itemize}
    \item[(a)]$\displaystyle\inf_{\nu\in\Gamma(\hat{\mathbb{P}}_N)}\sup_{(\lambda,\alpha)\in\Lambda\times\mathcal{S}}L(\lambda,\alpha;\nu)<\inf_{\nu\in\Gamma(\hat{\mathbb{P}}_N)}L(\lambda_1,\alpha_1;\nu)$;
    \item[(b)]$\displaystyle\inf_{\nu\in\Gamma(\hat{\mathbb{P}}_N)}\sup_{(\lambda,\alpha)\in\Lambda\times\mathcal{S}}L(\lambda,\alpha;\nu)<\inf_{\nu\in\Gamma(\hat{\mathbb{P}}_N)}L(\lambda_1,\alpha;\nu)$ fixing any $\alpha\geq0$;
    \item[(c)]$\displaystyle\inf_{\nu\in\Gamma(\hat{\mathbb{P}}_N)}\sup_{(\lambda,\alpha)\in\Lambda\times\mathcal{S}}L(\lambda,\alpha;\nu)<\inf_{\nu\in\Gamma(\hat{\mathbb{P}}_N)}L(\lambda,\alpha_1;\nu)$ fixing any $\lambda\geq0$;
\end{itemize}
By letting $M\rightarrow\infty$ and inequality (\ref{eq:key:ineq:last-step}), we can see that statement (a) violates statement (i), (b) violates (ii) and (c) violates (iii). Hence (\ref{eq:last-step:strong-duality}) holds for some $B>0$ sufficiently large. Then together with (\ref{eq:strong-duality-left-side}), we have 
$$\sup_{(\lambda,\alpha)\in\mathbb{R}_{+}\times\mathbb{R}_{+}}\inf_{\nu\in\Gamma(\hat{\mathbb{P}}_N)}L(\lambda,\alpha;\nu)=\inf_{\nu\in\Gamma(\hat{\mathbb{P}}_N)}\sup_{(\lambda,\alpha)\in\mathbb{R}_{+}\times\mathbb{R}_{+}}L(\lambda,\alpha;\nu).$$
As a result we have 
{$$\begin{array}{rl}
&\quad\mathcal{R}_{r,\epsilon}(\hat{\mathbb{P}}_N)\\
&\displaystyle=\sup_{(\lambda,\alpha)\in\mathbb{R}_{+}\times\mathbb{R}_{+}}\inf_{\nu\in\Gamma(\hat{\mathbb{P}}_N)}\mathbb{E}_{\nu}[c(X,X')]+\lambda r\\
&\displaystyle\quad\quad\quad\quad\quad\quad\quad +\alpha\left\{\int_0^1|\mathbb{E}_{\nu}[\mathbf{1}\{\pi_1(X)\geq\tau\}-\mathbf{1}\{\pi_0(X)\geq\tau\}]|d\tau-\epsilon\right\}\\
&\displaystyle\quad\quad\quad\quad\quad\quad\quad -\lambda\sum_{a\in\{0,1\}}\mathbb{E}_{\nu}[\{m_1(X,a)\pi_a(X)+m_0(X,a)(1-\pi_a(X))\}p_a(X)]\}\\
\\
&\displaystyle=_{(a)}\sup_{(\lambda,\alpha)\in\mathbb{R}_{+}\times\mathbb{R}_{+}}\inf_{\nu\in\Gamma(\hat{\mathbb{P}}_N)}\mathbb{E}_{\nu}[c(X,X')]+\alpha\{\mathbb{E}_{\nu}[|\pi_1(X)-\pi_0(X)|]-\epsilon\}\\
&\displaystyle\quad\quad\quad\quad + \lambda\left\{r-\sum_{a\in\{0,1\}}\mathbb{E}_{\nu}[\{m_1(X,a)\pi_a(X)+m_0(X,a)(1-\pi_a(X))\}p_a(X)]\right\}\\
\\
&\displaystyle=_{(b)}\sup_{(\lambda,\alpha)\in\mathbb{R}_{+}\times\mathbb{R}_{+}}\lambda r-\alpha\epsilon+\frac{1}{N}\sum_{i=1}^N\min_{x\in\mathcal{X}}\{\|x-X_i\|^2+\alpha|\pi_1(x)-\pi_0(x)|-\lambda M(x)\}.
\end{array}$$}
where (a) follows from Lemma \ref{lemma:Monge-Kantarovich:duality}, and in (b)
$$M(x)=\sum_{a\in\{0,1\}}p_a(x)[m_1(x,a)\pi_a(x)+m_0(x,a)(1-\pi_a(x))].$$ 
\end{proof}

\subsection{Useful Lemmas}
\begin{lemma}[Proposition 1 of \citet{jiang2020wasserstein}]\label{lemma:prop1:Jiang} Let $$\mathcal{J}=\left\{J:[0,1]\rightarrow[0,1]\bigg| \begin{array}{rcl}
&\displaystyle\int_{\mathcal{B}}p_{\pi_1(X_i)}(y)dy=\int_{J^{-1}(\mathcal{B})}p_{\pi_0(X_i)}(x)dx,&\\
&\displaystyle\ \forall \mbox{ measurable }\mathcal{B}\subset[0,1]&
\end{array}\right\}.$$  
The following two quantities are equal: 
\begin{itemize}
    \item[(i)] $\displaystyle\mathcal{W}_1(p_{\pi_1(X_i)},p_{\pi_0(X_i)})=\min_{J\in\mathcal{J}}\int_{x\in[0,1]}|x-J(x)|p_{S_{\pi_0}(X_i)}(x)dx$.
    \item[(ii)] $\mathbb{E}_{\tau\sim\mathrm{Unif}[0,1]}|\mathbb{P}(\pi_1(X_i)>\tau)-\mathbb{P}(\pi_0(X_i)>\tau)|$.
\end{itemize}
\end{lemma}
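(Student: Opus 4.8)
The plan is to show that the three expressions appearing in the lemma---the Wasserstein distance $\mathcal{W}_1(p_{\pi_1(X_i)},p_{\pi_0(X_i)})$, the Monge-type minimum over measure-preserving maps $J\in\mathcal{J}$, and the tail-probability integral in (ii)---all coincide with the classical one-dimensional closed form for the $1$-Wasserstein distance, namely the $L^1$ distance between cumulative distribution functions. Write $F_a$ for the CDF of $\pi_a(X_i)$, which admits the density $p_{\pi_a(X_i)}$ by hypothesis. Since $\pi_a(X_i)$ is supported on $[0,1]$, for every $\tau\in[0,1]$ we have $\mathbb{P}(\pi_a(X_i)>\tau)=1-F_a(\tau)$, so the integrand in (ii) simplifies to $|\mathbb{P}(\pi_1(X_i)>\tau)-\mathbb{P}(\pi_0(X_i)>\tau)|=|F_0(\tau)-F_1(\tau)|$. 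Because $\tau\sim\mathrm{Unif}[0,1]$ and both variables take values in $[0,1]$, quantity (ii) equals $\int_0^1|F_1(\tau)-F_0(\tau)|\,d\tau$; this first step is purely algebraic.

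Second, I would establish the classical identity $\mathcal{W}_1(p_{\pi_1(X_i)},p_{\pi_0(X_i)})=\int_0^1|F_1(\tau)-F_0(\tau)|\,d\tau$. The standard route is the quantile coupling: for $U\sim\mathrm{Unif}[0,1]$, the pair $(F_0^{-1}(U),F_1^{-1}(U))$ is a coupling of the two laws that is optimal for every cost of the form $h(|x-y|)$ with $h$ convex, and in particular for $h(\cdot)=|\cdot|$ (see \citet{villani2009optimal}). This yields $\mathcal{W}_1=\int_0^1|F_1^{-1}(u)-F_0^{-1}(u)|\,du$, and a layer-cake representation together with Fubini---using the equivalence $F_a^{-1}(u)>t\iff u>F_a(t)$---converts the integral of the generalized inverses into $\int_0^1|F_1(\tau)-F_0(\tau)|\,d\tau$. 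Combined with the first step, this already gives $\mathcal{W}_1=$ (ii).

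Third, to handle the Monge minimum in (i), I would show that $\min_{J\in\mathcal{J}}\int_0^1|x-J(x)|\,p_{\pi_0(X_i)}(x)\,dx=\mathcal{W}_1$. The inequality ``$\geq$'' is immediate: any $J\in\mathcal{J}$ pushes $p_{\pi_0(X_i)}$ forward to $p_{\pi_1(X_i)}$, so the deterministic coupling $(X,J(X))$ with $X\sim p_{\pi_0(X_i)}$ is admissible and its cost is bounded below by the infimal coupling value $\mathcal{W}_1$. For ``$\leq$'', the monotone rearrangement $J^\ast:=F_1^{-1}\circ F_0$ lies in $\mathcal{J}$---because $F_0$ is atomless, $J^\ast$ is a genuine map with $J^\ast_{\#}p_{\pi_0(X_i)}=p_{\pi_1(X_i)}$---and the coupling it induces is exactly the optimal quantile coupling $(F_0^{-1}(U),F_1^{-1}(U))$ from the previous step, so its cost attains $\mathcal{W}_1$. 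Chaining the three identities gives (i) $=$ (ii) $=\mathcal{W}_1$.

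The main obstacle I anticipate is supplying the one-dimensional optimal-transport input cleanly: justifying that the monotone (quantile) coupling is optimal for the cost $|x-y|$, and carrying out the layer-cake conversion between the quantile integral $\int_0^1|F_1^{-1}-F_0^{-1}|$ and the CDF integral $\int_0^1|F_1-F_0|$. Both facts are classical, but the conversion requires care with the generalized inverses, and the ``$\leq$'' direction in step three hinges on the atomlessness of $F_0$, which guarantees that $J^\ast$ is a deterministic measure-preserving map rather than a mass-splitting plan. These are precisely the conditions furnished by the existence of the densities $p_{\pi_a(X_i)}$.
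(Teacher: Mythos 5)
Your proposal is correct and complete. Note, however, that the paper itself does not prove this lemma at all: it states the result and defers entirely to Proposition~1 of \citet{jiang2020wasserstein}, so there is no in-paper argument to compare against. What you have written is a self-contained proof of exactly the kind that underlies the cited proposition: you reduce both (i) and (ii) to the classical one-dimensional closed form $\mathcal{W}_1=\int_0^1|F_1(\tau)-F_0(\tau)|\,d\tau$, using the quantile coupling for the Kantorovich lower bound, the layer-cake/Fubini conversion between $\int|F_1^{-1}-F_0^{-1}|$ and $\int|F_1-F_0|$, and the monotone rearrangement $J^\ast=F_1^{-1}\circ F_0$ to show the Monge minimum over $\mathcal{J}$ is attained and equals $\mathcal{W}_1$. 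All three steps are sound, and you correctly identify the two places where care is needed: the a.e.\ equivalence $F_a^{-1}(u)>t\iff u>F_a(t)$ for generalized inverses, and the atomlessness of $F_0$ (supplied by the assumed density $p_{\pi_0(X_i)}$, which is also what guarantees $\mathcal{J}\neq\emptyset$ and that $J^\ast$ is a genuine map rather than a mass-splitting plan). You also implicitly and correctly read the paper's $p_{S_{\pi_0}(X_i)}$ in item (i) as a typo for $p_{\pi_0(X_i)}$. In short, your argument supplies the proof the paper omits, and it buys a self-contained verification at the cost of a page of classical one-dimensional optimal transport.
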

The proof of Lemma \ref{lemma:prop1:Jiang} follows directly from Proposition 1 of \citet{jiang2020wasserstein}.

\begin{lemma}[Theorem 1 of \cite{vianney2015minmax}]\label{lemma:Perchet2015}
Let $\mathcal{Z}_1$ and $\mathcal{Z}_2$ be two nonempty convex sets and $f:\mathcal{Z}_1\times\mathcal{Z}_2\rightarrow\mathbb{R}$ be a concave-convex mapping, i.e. $f(\cdot,z_2)$ is concave and $f(z_1,\cdot)$ is convex for every $z_1\in\mathcal{Z}_1$ and $z_2\in\mathcal{Z}_2$. Assume that 
\begin{itemize}
    \item $\mathcal{Z}_1$ is finite-dimensional.
    \item $\mathcal{Z}_2$ is bounded.
    \item $f(z_1,\cdot)$ is lower bounded for some $z_1$ in the relative interior of $\mathcal{Z}_1$. 
\end{itemize}
Then $$\sup_{z\in\mathcal{Z}_1}\inf_{z_2\in\mathcal{Z}_2}f(z_1,z_2)=\inf_{z_2\in\mathcal{Z}_2}\sup_{z_1\in\mathcal{Z}_1}f(z_1,z_2).$$
\end{lemma}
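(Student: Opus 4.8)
The plan is to prove the nontrivial inequality $\inf_{z_2}\sup_{z_1}f\le\sup_{z_1}\inf_{z_2}f$, since the reverse always holds. Write $\underline v=\sup_{z_1\in\mathcal Z_1}\inf_{z_2\in\mathcal Z_2}f(z_1,z_2)$ and $\overline v=\inf_{z_2}\sup_{z_1}f(z_1,z_2)$, and argue by contradiction: suppose $\underline v<\overline v$ and fix $c$ with $\underline v<c<\overline v$. For each $z_2$ set $A_{z_2}=\{z_1\in\mathcal Z_1:f(z_1,z_2)>c\}$; since $f(\cdot,z_2)$ is concave each $A_{z_2}$ is convex, and it is nonempty because $\overline v>c$ forces $\sup_{z_1}f(z_1,z_2)>c$. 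The whole proof reduces to producing a single $z_1^\star$ lying in $\bigcap_{z_2}\overline{A_{z_2}}$ (closure taken inside a suitable relatively open set) with $f(z_1^\star,z_2)\ge c$ for all $z_2$, for then $\inf_{z_2}f(z_1^\star,z_2)\ge c$ contradicts $\underline v<c$.

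First I would establish the finite intersection property of $\{A_{z_2}\}$ by a separation argument in $\mathbb R^k$ that exploits both convexity structures. Fix $z_2^{(1)},\dots,z_2^{(k)}\in\mathcal Z_2$ and consider the shadow $S=\{y\in\mathbb R^k:\exists\,z_1\in\mathcal Z_1,\ y_j\le f(z_1,z_2^{(j)})-c\ \text{for all }j\}$. Because each $z_1\mapsto f(z_1,z_2^{(j)})$ is concave, $S$ is the projection onto the $y$-coordinates of an intersection of hypographs, hence convex, and it is downward closed. If $\bigcap_j A_{z_2^{(j)}}=\emptyset$ then $S\cap\mathbb R^k_{>0}=\emptyset$, so a separating hyperplane yields $\mu\ge 0$, $\mu\ne 0$, with $\sup_{z_1}\sum_j\mu_j\big(f(z_1,z_2^{(j)})-c\big)\le 0$ (downward closedness forces $\mu\ge 0$). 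Normalizing $\mu$ to a probability vector and using convexity of $f(z_1,\cdot)$ together with convexity of $\mathcal Z_2$ — so that the barycenter $\bar z_2=\sum_j\mu_j z_2^{(j)}$ lies in $\mathcal Z_2$ — Jensen's inequality gives $\sum_j\mu_j f(z_1,z_2^{(j)})\ge f(z_1,\bar z_2)$, whence $\sup_{z_1}\sum_j\mu_j f(z_1,z_2^{(j)})\ge\sup_{z_1}f(z_1,\bar z_2)\ge\overline v>c$, contradicting the separation inequality. Thus every finite subfamily of $\{A_{z_2}\}$ has a common point.

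Next I would upgrade this finite intersection property to a genuine common point, which is where finite-dimensionality of $\mathcal Z_1$, boundedness of $\mathcal Z_2$, and the lower-boundedness hypothesis enter. The function $\phi:=\inf_{z_2}f(\cdot,z_2)$ is concave as an infimum of concave functions, and the lower-bound assumption at a relative-interior point $z_1^0$ makes $\phi$ finite, hence continuous, on $U:=\mathrm{ri}(\mathcal Z_1)$; since a concave function takes boundary values no larger than its interior limits, $\sup_{\mathcal Z_1}\phi=\sup_U\phi$, so the outer supremum may be restricted to $U$, where every $A_{z_2}\cap U$ is relatively open and convex. On $U\subseteq\mathbb R^n$ with $n=\dim\mathrm{aff}(\mathcal Z_1)$, Helly's theorem reduces the finite intersection property to checking $n+1$ sets at a time, which we already have, so it remains to pass from finite to infinite intersections.

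The hard part will be this last passage to an infinite intersection: a downward-directed family of nonempty convex sets in $\mathbb R^n$ can have empty intersection, so one must rule out escape to infinity. I expect to do this by a recession-cone argument: using the uniform lower bound $f(z_1^0,\cdot)\ge -K$ afforded by the hypothesis and the boundedness of $\mathcal Z_2$, one shows that the sets $\overline{A_{z_2}}$ cannot share a common nonzero recession direction without producing a point where $\phi\ge c$ outright, so some finite intersection $\bigcap_{j\le k_0}\overline{A_{z_2^{(j)}}}$ is already compact. Combining this compactness with Helly's theorem and the finite intersection property then forces $\big(\bigcap_{z_2}\overline{A_{z_2}}\big)\cap U\ne\emptyset$, and continuity of each $f(\cdot,z_2)$ on $U$ promotes membership in the closures to $f(z_1^\star,z_2)\ge c$, giving $\inf_{z_2}f(z_1^\star,z_2)\ge c$ and the desired contradiction. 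The delicate points to watch are that concave functions need not be upper semicontinuous on the relative boundary (handled by retreating to $U$) and that the compactifying recession argument genuinely uses all three structural hypotheses simultaneously.
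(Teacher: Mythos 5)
The paper itself never proves this lemma: it is imported verbatim as Theorem 1 of Perchet and Vigeral (2015), so your attempt must stand on its own merits. The first half of your argument is sound and complete: the superlevel sets $A_{z_2}$ are convex and nonempty, and the separation-plus-barycenter step (separating the shadow set $S$ from the open orthant, extracting $\mu\geq 0$, and using convexity of $f(z_1,\cdot)$ together with convexity of $\mathcal{Z}_2$ at $\bar z_2=\sum_j\mu_j z_2^{(j)}$) correctly establishes the finite intersection property; restricting attention to $U=\mathrm{ri}(\mathcal{Z}_1)$, where each real-valued concave $f(\cdot,z_2)$ is genuinely continuous, is also legitimate. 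Two local slips, though: Helly's theorem buys you nothing, since the separation argument already delivers \emph{all} finite subfamilies; and your claim that $\phi=\inf_{z_2}f(\cdot,z_2)$ is finite, hence continuous, on $U$ because it is finite at one relative-interior point is false. Take $\mathcal{Z}_1=\mathbb{R}$, $\mathcal{Z}_2$ the unit ball of a normed space carrying a discontinuous linear functional $L$, and $f(x,y)=xL(y)$: every hypothesis of the lemma holds, yet $\phi(0)=0$ while $\phi\equiv-\infty$ on $\mathbb{R}\setminus\{0\}$. (The one use you make of this claim, $\sup_{\mathcal{Z}_1}\phi=\sup_{U}\phi$, survives, since it needs only $\phi(z_1^0)>-\infty$ and a segment argument.)

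The genuine gap is the step you yourself call ``the hard part,'' and it is not filled. First, the closure dilemma is unresolved: closures taken inside the relatively open $U$ need not be compact, while closures in the ambient space may only meet on the relative boundary of $\mathcal{Z}_1$, where concave functions can jump down and $f$ may be undefined; nothing in your sketch forces $\bigl(\bigcap_{z_2}\overline{A_{z_2}}\bigr)\cap U\neq\emptyset$. This part is repairable — for $\bar z\in\bigcap_{z_2}\overline{A_{z_2}}$ and $z_\mu=(1-\mu)\bar z+\mu z_1^0\in U$, concavity plus the uniform bound $f(z_1^0,\cdot)\geq -K$ give $f(z_\mu,z_2)\geq(1-\mu)c-\mu K$ for \emph{all} $z_2$ simultaneously, so $\phi(z_\mu)>\underline v$ for small $\mu$ — but that argument is absent from your write-up. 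Second, and fatally, the recession-cone claim, that a common nonzero recession direction would ``produce a point where $\phi\geq c$ outright,'' is asserted without proof, and the natural attempt fails: a common direction $d$ yields for each $z_2$ a ray $x_{z_2}+\mathbb{R}_{+}d$ on which $f(\cdot,z_2)\geq c$, but the base points $x_{z_2}$ depend on $z_2$ and need not be bounded, so no single point materializes. That this step cannot be soft is shown by $\mathcal{Z}_1=\mathcal{Z}_2=\mathbb{R}_{+}$, $f(x,y)=xe^{-y}$: every hypothesis except boundedness of $\mathcal{Z}_2$ holds, the sets $A_y=(e^{y},\infty)$ enjoy the finite intersection property with common recession direction $d=1$, yet $\phi\equiv 0$ and $\sup\inf=0<\infty=\inf\sup$. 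Any proof of your recession claim must therefore exploit boundedness of $\mathcal{Z}_2$ in an essential way — it appears nowhere else in your argument — and supplying exactly that mechanism is the real content of Perchet and Vigeral's theorem, whose published proof proceeds by a delicate induction on the dimension of $\mathcal{Z}_1$ rather than a one-shot compactness argument. As it stands, you have a correct reduction plus an unproven core, not a proof.
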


\begin{lemma}\label{lemma:Monge-Kantarovich:duality}
Under Assumptions \ref{ass:regularity}, \ref{ass:fairness:existence}, for any $\nu\in\Gamma(\hat{\mathbb{P}}_N)$, we have 
$$\int_0^1|\nu(\pi_1(X)>\tau)-\nu(\pi_0(X)>\tau)|d\tau=\mathbb{E}_{\nu}[|\pi_1(X)-\pi_0(X)|].$$
\end{lemma}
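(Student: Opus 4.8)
The plan is to reduce the identity to a pointwise ``co-area'' formula on $[0,1]$ and then to control the interchange of the absolute value with the expectation. The starting observation is that for any two numbers $a,b\in[0,1]$,
\[
|a-b|=\int_0^1\bigl|\mathbf 1\{a>\tau\}-\mathbf 1\{b>\tau\}\bigr|\,d\tau,
\]
since the integrand has absolute value $1$ exactly for $\tau$ strictly between $\min(a,b)$ and $\max(a,b)$ and $0$ elsewhere. First I would apply this with $a=\pi_1(X)$ and $b=\pi_0(X)$, which is legitimate because each $\pi_a$ takes values in $[0,1]$, and then integrate against $\nu$.

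Since the integrand is nonnegative and the product of Lebesgue measure on $[0,1]$ with $\nu$ is finite, Tonelli's theorem permits exchanging the $\tau$-integral with $\mathbb{E}_\nu$; the required joint measurability of $(\tau,x)\mapsto\mathbf 1\{\pi_a(x)>\tau\}$ follows from the continuity of $\pi_a$ granted by Assumption \ref{ass:regularity}. This yields
\[
\mathbb{E}_\nu\bigl|\pi_1(X)-\pi_0(X)\bigr|=\int_0^1\mathbb{E}_\nu\bigl|\mathbf 1\{\pi_1(X)>\tau\}-\mathbf 1\{\pi_0(X)>\tau\}\bigr|\,d\tau.
\]
Comparing this with the claimed left-hand side $\int_0^1\bigl|\nu(\pi_1(X)>\tau)-\nu(\pi_0(X)>\tau)\bigr|\,d\tau$, the lemma reduces to the per-$\tau$ identity $\mathbb{E}_\nu|g_\tau|=|\mathbb{E}_\nu g_\tau|$ for Lebesgue-a.e.\ $\tau$, where $g_\tau:=\mathbf 1\{\pi_1(X)>\tau\}-\mathbf 1\{\pi_0(X)>\tau\}\in\{-1,0,1\}$.

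The step $\mathbb{E}_\nu|g_\tau|=|\mathbb{E}_\nu g_\tau|$ is the main obstacle: in general only $\mathbb{E}_\nu|g_\tau|\ge|\mathbb{E}_\nu g_\tau|$ holds, with equality precisely when $g_\tau$ is $\nu$-almost surely of a single sign, i.e.\ when the two ``crossing'' events $\{\pi_1>\tau\ge\pi_0\}$ and $\{\pi_0>\tau\ge\pi_1\}$ do not both carry positive $\nu$-mass. I would close this gap by appealing to the ordering of the propensity scores, namely that $\pi_1(x)-\pi_0(x)$ keeps a constant sign on the support of the first marginal of $\nu$; equivalently, $(\pi_1(X),\pi_0(X))$ is comonotone, so that the diagonal coupling is the monotone, $\mathcal W_1$-optimal coupling on $[0,1]$, consistent with Lemma \ref{lemma:prop1:Jiang}. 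Under this ordering, for each fixed $\tau$ exactly one of the crossing events is empty, $g_\tau$ has a fixed sign, and the per-$\tau$ identity is immediate; integrating over $\tau$ then delivers the stated equality. The delicate point to verify carefully is therefore the sign-consistency (non-crossing of the survival curves $\tau\mapsto\nu(\pi_a(X)>\tau)$), which is exactly the minimal hypothesis under which the displayed equality is valid.
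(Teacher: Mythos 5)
Your route is genuinely different from the paper's. You use the elementary layer--cake identity $|a-b|=\int_0^1\bigl|\mathbf 1\{a>\tau\}-\mathbf 1\{b>\tau\}\bigr|\,d\tau$ together with Tonelli, and reduce the claim to the per-$\tau$ identity $\mathbb E_\nu|g_\tau|=|\mathbb E_\nu g_\tau|$. The paper instead identifies the left-hand side with $\mathcal W_1(\nu_1,\nu_0)$, where $\nu_1,\nu_0$ are the laws of $\pi_1(X),\pi_0(X)$, and then invokes Kantorovich duality, asserting that the optimal potentials satisfy $\alpha^*(z)+\beta^*(z')=|z-z'|$ almost surely so that the dual value equals $\mathbb E_{\nu}[|\pi_1(X)-\pi_0(X)|]$. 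Your reduction is more elementary and has the virtue of isolating exactly where the difficulty sits. However, as you yourself acknowledge, your argument has a gap at that very point: equality in $\mathbb E_\nu|g_\tau|\ge|\mathbb E_\nu g_\tau|$ requires $g_\tau$ to be of a single sign $\nu$-a.s., i.e.\ the survival curves of $\pi_1(X)$ and $\pi_0(X)$ must not cross, equivalently $(\pi_1(X),\pi_0(X))$ must be comonotone under $\nu_X$. You close the argument by ``appealing to'' a constant sign of $\pi_1-\pi_0$ on the support of $\nu_X$, but this is not available: Assumptions \ref{ass:regularity} and \ref{ass:fairness:existence} give only smoothness and the existence of a single point where $\pi_1=\pi_0$ and the utility threshold is met, and $\nu$ ranges over \emph{all} couplings with second marginal $\hat{\mathbb P}_N$. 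Without that extra hypothesis your argument delivers only the one-sided bound
\begin{equation*}
\int_0^1\bigl|\nu(\pi_1(X)>\tau)-\nu(\pi_0(X)>\tau)\bigr|\,d\tau\;\le\;\mathbb E_{\nu}\bigl[|\pi_1(X)-\pi_0(X)|\bigr],
\end{equation*}
and equality genuinely fails whenever both crossing events $\{\pi_1>\tau\ge\pi_0\}$ and $\{\pi_0>\tau\ge\pi_1\}$ carry positive mass for a positive-measure set of $\tau$.

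It is worth recording that the paper's own proof stumbles at exactly the same place. Its step (2) asserts $\alpha^*(z)+\beta^*(z')=|z-z'|$ ``almost surely,'' but complementary slackness guarantees this only on the support of an \emph{optimal} coupling of $(\nu_1,\nu_0)$, not on the support of the particular coupling induced by $x\mapsto(\pi_1(x),\pi_0(x))$; the two agree precisely under the comonotonicity you flagged. So your analysis correctly identifies the minimal missing hypothesis (non-crossing of the survival curves), but neither your argument nor the paper's establishes the stated equality from Assumptions \ref{ass:regularity} and \ref{ass:fairness:existence} alone; as written, only the inequality displayed above is justified.
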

\begin{proof}[Proof of Lemma \ref{lemma:Monge-Kantarovich:duality}]
For $X\sim\mathbb{Q}$, let $\nu_1$ be the distribution of $\pi_1(X)$ and $\nu_0$ be the distribution of $\pi_0(X)$. Then
\begin{equation}\label{eq:SDP:Monge}
\begin{array}{ll}
    \displaystyle\mathcal{V}&\displaystyle:=\int_0^1|\mathbb{Q}(\pi_1(X)>\tau)-\mathbb{Q}(\pi_0(X)>\tau)|d\tau=\mathcal{W}_1(\nu_1,\nu_0)\\
    &\displaystyle=\inf_{\pi\in\Pi(\nu_1,\nu_0)}\mathbb{E}_{\pi}[|Z-Z'|],
\end{array}
\end{equation}
where $\nu_1,\nu_0\in\mathcal{P}([0,1])$, and $\mathcal{W}_1$ is the $1$-Wasserstein distance. Denote
$$\mathcal{S}=\{(\alpha,\beta)|(\alpha,\beta)\in\mathcal{C}([0,1])\times\mathcal{C}([0,1]):\alpha(z)+\beta(z')\leq|z-z'|\},$$
where $\mathcal{C}([0,1])$ is the collection of continuous functions on $[0,1]$. The dual formulation to the Kantorovich's problem of (\ref{eq:SDP:Monge}) can be written as 
$$\begin{array}{rl}
\mathcal{D}&\displaystyle=\sup_{(\alpha,\beta)\in\mathcal{S}}\mathbb{E}_{\nu_1}[\alpha(Z)]+\mathbb{E}_{\nu_0}[\beta(Z')]\\
&\displaystyle=_{(1)}\sup_{(\alpha,\beta)\in\mathcal{S}}\mathbb{E}_{\mathbb{Q}}[\alpha(\pi_1(X))+\beta(\pi_0(X))]\\
&=_{(2)}\mathbb{E}_{\mathbb{Q}}[|\pi_1(X)-\pi_0(X)|],
\end{array}$$
where (1) follows because 
$$\mathbb{E}_{\nu_1}[\alpha(Z)]=\mathbb{E}_{\mathbb{Q}}[\alpha(\pi_1(X))], \quad \mathbb{E}_{\nu_0}[\beta(Z')]=\mathbb{E}_{\mathbb{Q}}[\beta(\pi_0(X))],$$ 
and (2) follows since the optimal $\alpha(\cdot)$, $\beta(\cdot)$ satisfy 
$$\alpha^*(z)+\beta^*(z')=|z-z'|$$ 
for almost surely $(z,z')\in[0,1]\times[0,1]$.
By strong duality \cite{villani2009optimal}, we have $\mathcal{V}=\mathcal{D}$, where $\mathcal{V}$ is defined in (\ref{eq:SDP:Monge}). So
\begin{equation}\label{eq:strong-dudality:KM}
\begin{array}{rl}
    \displaystyle\int_0^1|\mathbb{Q}(\pi_1(X)>\tau)-\mathbb{Q}(\pi_0(X)>\tau)|d\tau=\mathbb{E}_{\mathbb{Q}}[|\pi_1(X)-\pi_0(X)|].
\end{array}
\end{equation}
Note that for any $\nu\in\Gamma(\hat{\mathbb{P}}_N)$ with $\nu_{X'}=\hat{\mathbb{P}}_N$, we have
$$\begin{array}{rl}
&\displaystyle\quad\int_0^1|\nu(\pi_1(X)>\tau)-\nu(\pi_0(X)>\tau)|d\tau\\
&\displaystyle=\int_0^1|\nu_X(\pi_1(X)>\tau)-\nu_X(\pi_0(X)>\tau)|d\tau,
\end{array}$$
and
$$\mathbb{E}_{\nu}[|\pi_1(X)-\pi_0(X)|]=\mathbb{E}_{\nu_X}[|\pi_1(X)-\pi_0(X)|].$$
Note that (\ref{eq:strong-dudality:KM}) holds for arbitrary $\mathbb{Q}\in\mathcal{P}(\mathcal{X})$, thus the result follows. 
\end{proof}

\subsection{Proof of Theorem \ref{thm:asymptotics:R-SDP}}
Recall from Theorem \ref{thm:strong-duality} that 
$$\begin{array}{rl}
\displaystyle\mathcal{R}_{r,\epsilon}(\hat{\mathbb{P}}_N)=\sup_{(\lambda,\alpha)\in\mathbb{R}_{+}\times\mathbb{R}_{+}}\!\!\!\!\!\!&\lambda r-\alpha\epsilon\\
\\
&\displaystyle+\frac{1}{N}\sum_{i=1}^N\min_{x\in\mathcal{X}}\{\|x-X_i\|^2+\alpha|\pi_1(x)-\pi_0(x)|-\lambda M(x)\},
\end{array}$$
where $M(x)=\sum_{a\in\{0,1\}}p_a(x)[m_1(x,a)\pi_a(x)+m_0(x,a)(1-\pi_a(x))]$ and $c(x,y)=\|x-y\|$. 

Change variables as $\Delta=x-X_i$, by fundamental theorem of calculus and Assumption \ref{ass:regularity}, we have 
$$\begin{array}{rcl}
\pi_1(x)-\pi_1(X_i)&=&\displaystyle\int_0^1D\pi_1(X_i+u\Delta)\Delta du,\\ 
\pi_0(x)-\pi_0(X_i)&=&\displaystyle\int_0^1D\pi_0(X_i+u\Delta)\Delta du,
\end{array}$$
thus
{$$\begin{array}{rl}
|\pi_1(x)-\pi_0(x)|=\displaystyle\bigg|\int_0^1[D\pi_1(X_i+u\Delta)-D\pi_0(X_i+u\Delta)]\Delta du+(\pi_1(X_i)-\pi_0(X_i))\bigg|.
\end{array}$$}
Additionally,
$$M(X_i+\Delta)-M(X_i)=\int_0^1DM(X_i+u\Delta)\Delta du.$$
So 
{$$\begin{array}{ll}
&\quad\mathcal{R}_{r,\epsilon}(\hat{\mathbb{P}}_N)\\
&\displaystyle=\sup_{(\bar\lambda,\bar\alpha)\in\mathbb{R}_{+}\times\mathbb{R}_{+}}\bar\lambda r-\bar\alpha\epsilon-\bar\lambda\cdot\frac{1}{N}\sum_{i=1}^NM(X_i)\\
&\displaystyle\quad\quad+\frac{1}{N}\sum_{i=1}^N\min_{\Delta}\bigg\{\|\Delta\|+\bar\alpha\bigg|\int_0^1[D(\pi_1-\pi_0)(X_i+u\Delta)]\Delta du+(\pi_1(X_i)-\pi_0(X_i))\bigg|\\
&\displaystyle\quad\quad-\bar\lambda\int_0^1DM(X_i+u\Delta)\Delta du\bigg\}\\
&\displaystyle=\sup_{(\bar\lambda,\bar\alpha)\in\mathbb{R}_{+}\times\mathbb{R}_{+}}\lambda\cdot\frac{1}{N}\sum_{i=1}^N\{(r-M(X_i))-\mathbb{E}[r-M(X_i)]\}-\bar\alpha\epsilon+\bar\lambda\mathbb{E}[r-M(X_i)]\\
&\displaystyle\quad\quad+\frac{1}{N}\sum_{i=1}^N\min_{\Delta}\Big\{\|\Delta\|^2+\bar\alpha\bigg|\int_0^1[D(\pi_1-\pi_0)(X_i+u\Delta)]\Delta du+(\pi_1(X_i)-\pi_0(X_i))\bigg|\\
&\displaystyle\quad\quad-\bar\lambda\int_0^1DM(X_i+u\Delta)\Delta du\Big\}.
\end{array}$$}
Then redefining $\Delta=\Delta/N^{1/2}$, $\lambda=\sqrt{N}\bar{\lambda}$, $\alpha=\sqrt{N}\bar{\alpha}$, we have 
\begin{equation}\label{eq:R}
\begin{array}{ll}
\displaystyle N\mathcal{R}_{r,\epsilon}(\hat{\mathbb{P}}_N)&=\sup_{(\lambda,\alpha)\in\mathbb{R}_{+}\times\mathbb{R}_{+}}{\lambda}M_N(r)+\mathcal{E}_N(\alpha,\lambda)\\
&\quad\displaystyle+{\lambda}\sqrt{N}\mathbb{E}[r-M(X_i)]-\alpha\sqrt{N}\epsilon,
\end{array}
\end{equation}
where
{\small\begin{equation}\label{eq:E:notation}
\begin{array}{rl}
&\quad\displaystyle\mathcal{E}_N(\alpha,\lambda)\\
&\displaystyle=\frac{1}{N}\sum_{i=1}^N\min_{\Delta}\bigg\{\|\Delta\|^2-\lambda\int_0^1DM(X_i+N^{-1/2}\Delta u)\Delta du\\
&\displaystyle\quad\quad\quad\quad\quad\quad+\alpha\big|\int_0^1[D(\pi_1-\pi_0)(X_i+N^{-1/2}\Delta u)]\Delta du+\sqrt{N}(\pi_1(X_i)-\pi_0(X_i))\big|\bigg\},
\end{array}
\end{equation}}
and 
$$\begin{array}{rl}
\displaystyle M_N(r)=\frac{1}{\sqrt{N}}\sum_{i=1}^N\{(r-M(X_i))-\mathbb{E}[r-M(X_i)]\}.
\end{array}$$
Denote 
$$\begin{array}{rl}
\bar{R}({\alpha},{\lambda})={\lambda}M_N(r)+\mathcal{E}_N(\alpha,\lambda)+{\lambda}\sqrt{N}\mathbb{E}[r-M(X_i)]-\alpha\sqrt{N}\epsilon.
\end{array}$$
Note that the right hand side of (\ref{eq:R}) is non-negative, because 
$$\sup_{(\lambda,\alpha)\in\mathbb{R}_{+}\times\mathbb{R}_{+}}\bar{R}({\alpha},{\lambda})\geq\bar{R}(0,0)\geq0.$$
By (\ref{eq:bounded-interval}) in the proof of Theorem \ref{thm:strong-duality}, For $\Lambda=[0,B],\mathcal{S}=[0,B]$ where $B$ is a sufficiently large constant, we have
\begin{equation}
    \sup_{(\lambda,\alpha)\in\mathbb{R}_{+}\times\mathbb{R}_{+}}\inf_{\nu\in\Gamma(\hat{\mathbb{P}}_N)}L(\lambda,\alpha;\nu)=\sup_{(\lambda,\alpha)\in\Lambda\times\mathcal{S}}\inf_{\nu\in\Gamma(\hat{\mathbb{P}}_N)}L(\lambda,\alpha;\nu).
\end{equation}
So we can constrain the optimization with respect of $(\lambda,\alpha)\in\mathbb{R}_{+}\times\mathbb{R}_{+}$ within $\Lambda\times\mathcal{S}$. 

For the summands in (\ref{eq:E:notation}), we have 
{\begin{equation}\label{eq:min:summands}
\begin{array}{ll}
&\displaystyle\min_{\Delta}\Big\{\|\Delta\|^2+\alpha\bigg|\int_0^1\left[D\pi_1\left(X_i+\frac{\Delta u}{\sqrt{N}}\right)-D\pi_0\left(X_i+\frac{\Delta u}{\sqrt{N}}\right)\right]\Delta du\\
&\displaystyle\quad\quad\quad\quad\quad\quad+\sqrt{N}(\pi_1(X_i)-\pi_0(X_i))\bigg|\\
&\displaystyle\quad\quad\quad\quad\quad\quad-\lambda\int_0^1DM(X_i+N^{-1/2}\Delta u)\Delta du\Big\}\\
&=\displaystyle\min_{\Delta}\Big\{\|\Delta\|^2+\alpha\Big|\int_0^1[D\pi_1(X_i+N^{-1/2}\Delta u)-D\pi_1(X_i)]\Delta du\\
&\displaystyle\quad\quad\quad\quad\quad\quad-\int_0^1[D\pi_0(X_i+N^{-1/2}\Delta u)-D\pi_0(X_i)]\Delta du\\
&\displaystyle\quad\quad\quad\quad\quad\quad+\sqrt{N}(\pi_1(X_i)-\pi_0(X_i))+[D(\pi_1-\pi_0)(X_i)]\Delta\Big|\\
&\displaystyle\quad\quad\quad\quad\quad\quad-\lambda\int_0^1[DM(X_i+N^{-1/2}\Delta u)-DM(X_i)]\Delta du\\
&\displaystyle\quad\quad\quad\quad\quad\quad-\lambda DM(X_i)\Delta\Big\}\\
&=_{(a)}\displaystyle\min_{\Delta}\bigg\{\|\Delta\|^2+\alpha|[D(\pi_1-\pi_0)(X_i)]\Delta+\sqrt{N}(\pi_1(X_i)-\pi_0(X_i))|\\
&\displaystyle\quad\quad\quad\quad\quad\quad-\lambda DM(X_i)\Delta+R_i\bigg\}
\end{array}
\end{equation}}
where 
$$\begin{array}{ll}
R_i&\displaystyle=\alpha\Big|\int_0^1[D\pi_1(X_i+N^{-1/2}\Delta u)-D\pi_1(X_i)]\Delta du\Big|\\
&\displaystyle\ \ +\alpha\left|\int_0^1[D\pi_0(X_i+N^{-1/2}\Delta u)-D\pi_0(X_i)]\Delta du\right|\\
&\displaystyle\ \ +\lambda\left|\int_0^1[DM(X_i+N^{-1/2}\Delta u)-DM(X_i)]\Delta du\right|.
\end{array}$$
By Assumption \ref{ass:regularity} and the continuity of $D\pi_1(\cdot)$, $D\pi_0(\cdot)$, $DM(\cdot)$, we have 
 \begin{equation}\label{eq:uniform:convergence:R}
     \frac{1}{N}\sum_{i=1}^NR_i\Rightarrow0
 \end{equation}
 uniformly over $\Delta$ in a compact set, $\lambda\in[0,B]$ and $\alpha\in[0,B]$, as $n\rightarrow\infty$. 
Thus by (\ref{eq:R}),
{\small\begin{equation}\label{eq:R:upper-bound}
\begin{array}{ll}
&\quad\displaystyle N\mathcal{R}_{r,\epsilon}(\hat{\mathbb{P}}_N)\\
&\displaystyle=\sup_{(\lambda,\alpha)\in\mathbb{R}_{+}\times\mathbb{R}_{+}}{\lambda}M_N(r)+\lambda\sqrt{N}\{r-\mathbb{E}[M(X_i)]\}-\alpha\sqrt{N}\epsilon\\
&\displaystyle\quad\quad\quad\quad\quad\quad+\frac{1}{N}\sum_{i=1}^N\min_{\Delta}\bigg\{\|\Delta\|^2+\alpha|[D(\pi_1-\pi_0)(X_i)]\Delta+\sqrt{N}(\pi_1(X_i)-\pi_0(X_i))|\\
&\displaystyle\quad\quad\quad\quad\quad\quad-\lambda DM(X_i)\Delta+R_i\bigg\}\\
&\displaystyle\leq\sup_{(\lambda,\alpha)\in\mathbb{R}_{+}\times\mathbb{R}_{+}}{\lambda}M_N(r)+\alpha\Pi_N(\epsilon)+\lambda\sqrt{N}\{r-\mathbb{E}[M(X_i)]\}\\
&\displaystyle\quad\quad\quad\quad\quad\quad + \alpha\sqrt{N}\{\mathbb{E}[|\pi_1(X_i)-\pi_0(X_i)|]-\epsilon\}\\
&\displaystyle\quad\quad\quad\quad\quad\quad + \frac{1}{N}\sum_{i=1}^N\min_{\Delta}\Big\{\|\Delta\|^2-\lambda DM(X_i)\Delta+R_i\\
&\displaystyle\quad\quad\quad\quad\quad\quad + \alpha\cdot\mathrm{sgn}\left([D(\pi_1-\pi_0)(X_i)]\Delta\right)[D(\pi_1-\pi_0)(X_i)]\Delta\Big\},
\end{array}
\end{equation}}
where $$\Pi_N(\epsilon)=\frac{1}{N}\sum_{i=1}^N|\pi_1(X_i)-\pi_0(X_i)|-\mathbb{E}[|\pi_1(X_i)-\pi_0(X_i)|].$$
Note that if $[D(\pi_1-\pi_0)(X_i)]\Delta\geq0$, then
$$\begin{array}{rl}
&\quad\|\Delta\|^2+\alpha\cdot\mathrm{sgn}\left([D(\pi_1-\pi_0)(X_i)]\Delta\right)[D(\pi_1-\pi_0)(X_i)]\Delta-\lambda DM(X_i)\Delta\\
&=
\|\Delta\|^2+[\alpha\{D(\pi_1-\pi_0)(X_i)\}-\lambda DM(X_i)]\Delta.
\end{array}$$
If $[D(\pi_1-\pi_0)(X_i)]\Delta<0$, then
$$\begin{array}{ll}

&\quad\|\Delta\|^2+\alpha\cdot\mathrm{sgn}\left([D(\pi_1-\pi_0)(X_i)]\Delta\right)[D(\pi_1-\pi_0)(X_i)]\Delta-\lambda DM(X_i)\Delta\\
&=\|\Delta\|^2-[\alpha\{D(\pi_1-\pi_0)(X_i)\}+\lambda DM(X_i)]\Delta
\end{array}$$
Note that 
$$\begin{array}{rl}
&\quad\displaystyle\argmin_{\Delta}\|\Delta\|^2+[\alpha\{D(\pi_1-\pi_0)(X_i)\}-\lambda DM(X_i)]\Delta\\
&\displaystyle=\frac{\lambda DM(X_i)-\alpha D[\pi_1(X_i)-\pi_0(X_i)]}{2},
\end{array}$$
$$\begin{array}{rl}
&\quad\displaystyle\argmin_{\Delta}\|\Delta\|^2-[\alpha\{D(\pi_1-\pi_0)(X_i)\}+\lambda DM(X_i)]\Delta\\
&\displaystyle=\frac{\lambda DM(X_i)+\alpha D[\pi_1(X_i)-\pi_0(X_i)]}{2}.
\end{array}$$

So we have 
$$\begin{array}{ll}
&\displaystyle\quad\min_{\Delta}\Big\{\|\Delta\|^2\\
&\quad+\alpha\cdot\mathrm{sgn}\left([D(\pi_1-\pi_0)(X_i)]\Delta\right)[D(\pi_1-\pi_0)(X_i)]\Delta \\
&\quad -\lambda DM(X_i)\Delta\Big\}\\
&\displaystyle\leq\min\left\{\begin{array}{ll}
&\displaystyle-1/4\|\lambda DM(X_i)-\alpha[D(\pi_1-\pi_0)(X_i)]\|^2\mathbf{1}_{\mathcal{E}_{+}},\\
&\displaystyle-1/4\|\lambda DM(X_i)+\alpha[D(\pi_1-\pi_0)(X_i)]\|^2\mathbf{1}_{\mathcal{E}^{-}}\end{array}\right\}
\end{array}$$
where $\mathcal{E}^{+}$ and $\mathcal{E}^{-}$ denote the events
$$
\mathcal{E}^{+}=\left\{\begin{array}{ll}&\quad\lambda DM(X_i)'[D(\pi_1-\pi_0)(X_i)]\\
&\geq\alpha\|D(\pi_1-\pi_0)(X_i)\|^2\end{array}\right\},
$$
$$\mathcal{E}^{-}=\left\{\begin{array}{ll}&\quad\lambda DM(X_i)'[D(\pi_1-\pi_0)(X_i)]\\
&<-\alpha\|D(\pi_1-\pi_0)(X_i)\|^2\end{array}
\right\}.
$$

So by (\ref{eq:R:upper-bound}), we have 
$$\begin{array}{ll}
&\quad N\mathcal{R}_{r,\epsilon}(\hat{\mathbb{P}}_N)\\
&\displaystyle\leq\max_{(\lambda,\alpha)\in\Lambda\times\mathcal{S}}{\lambda}M_N(r) + \alpha\Pi_N(\epsilon) + \lambda\sqrt{N}\{r-\mathbb{E}[M(X_i)]\}\\
&\displaystyle\quad\quad\quad\quad\quad+\alpha\sqrt{N}\{\mathbb{E}[|\pi_1(X_i)-\pi_0(X_i)|]-\epsilon\}\\
&\displaystyle\quad\quad\quad\quad\quad + \frac{1}{N}\sum_{i=1}^N\min\Big\{\left(-\frac{1}{4}\|\lambda DM(X_i)-\alpha[D(\pi_1-\pi_0)(X_i)]\|^2+R_i\right)\mathbf{1}_{\mathcal{E}_{+}},\\
&\displaystyle\quad\quad\quad\quad\quad\quad\quad\quad\quad\quad\ \ \left(-\frac{1}{4}\|\lambda DM(X_i)+\alpha[D(\pi_1-\pi_0)(X_i)]\|^2+R_i\right)\mathbf{1}_{\mathcal{E}^{-}}\Big\}.
\end{array}$$

So let $r^*=\mathbb{E}[M(X_i)]$, $\epsilon^*=\mathbb{E}[|\pi_1(X_i)-\pi_0(X_i)|]$, according to (\ref{eq:uniform:convergence:R}) we have 
$$\begin{array}{ll}
&\displaystyle\quad\max_{(\lambda,\alpha)\in\Lambda\times\mathcal{S}}{\lambda}M_N(r)+\alpha\Pi_N+\sqrt{N}\{\lambda(r-r^*)+\alpha(\epsilon^*-\epsilon)\}\\
&\displaystyle\quad\quad\quad\quad\quad+\frac{1}{N}\sum_{i=1}^N\min\Big\{\left(-\frac{1}{4}\|\lambda DM(X_i)-\alpha[D(\pi_1-\pi_0)(X_i)]\|^2+R_i\right)\mathbf{1}_{\mathcal{E}_{+}},\\
&\displaystyle\quad\quad\quad\quad\quad\quad\quad\quad\quad\quad\ \ \left(-\frac{1}{4}\|\lambda DM(X_i)+\alpha[D(\pi_1-\pi_0)(X_i)]\|^2+R_i\right)\mathbf{1}_{\mathcal{E}^{-}}\Big\}\\
&\displaystyle\Rightarrow\sup_{(\lambda,\alpha)\in\mathbb{R}_{+}\times\mathbb{R}_{+}:\lambda(r-r^*)+\alpha(\epsilon^*-\epsilon)=0}\lambda \overline{M}+\alpha\overline{\Pi}+\mathbb{E}[\bar{Z}(\lambda,\alpha)],
\end{array}$$
where 
$$\overline{M}\sim\mathcal{N}(0,\mathrm{cov}[M(X_i)]), \quad \overline{\Pi}\sim\mathcal{N}(0,\mathrm{cov}[|\pi_1(X_i)-\pi_0(X_i)|]),$$ 
and 
$$\begin{array}{rl}
\bar{Z}(\lambda,\alpha)=\min\left\{\begin{array}{l}
\displaystyle-1/4\|\lambda DM(X_i)-\alpha[D(\pi_1-\pi_0)(X_i)]\|^2\mathbf{1}_{\mathcal{E}_{+}},\\
\displaystyle-1/4\|\lambda DM(X_i)+\alpha[D(\pi_1-\pi_0)(X_i)]\|^2\mathbf{1}_{\mathcal{E}^{-}}\end{array}\right\}
\end{array}$$

Hence by (\ref{eq:R:upper-bound}) we have
$$\begin{array}{rl}
&\quad N\mathcal{R}_{r,\epsilon}(\hat{\mathbb{P}}_N)\\
&\displaystyle\mathrel{\substack{< \\ \sim}}_D\displaystyle\sup_{(\lambda,\alpha)\in\mathbb{R}_{+}\times\mathbb{R}_{+}:\lambda(r-r^*)+\alpha(\epsilon^*-\epsilon)=0}\lambda \overline{M}+\alpha\overline{\Pi}+\mathbb{E}[\bar{Z}(\lambda,\alpha)].
\end{array}$$
By Fatou's Lemma, letting $\zeta=(\lambda,\alpha)$, $$S_{+}=\begin{pmatrix}
DM(X_i)\\
-D[\pi_1-\pi_0](X_i)
\end{pmatrix},\quad S_{-}=\begin{pmatrix}
DM(X_i)\\
D[\pi_1-\pi_0](X_i)
\end{pmatrix},$$  
then we have 
$$\begin{array}{ll}
\mathbb{E}[\bar{Z}(\lambda,\alpha)]\leq\min\left\{-\frac{1}{4}\zeta^T\mathbb{E}[S_{+}S_{+}^T\mathbf{1}_{\mathcal{E}^{+}}]\zeta,-\frac{1}{4}\zeta^T\mathbb{E}[S_{-}S_{-}^T\mathbf{1}_{\mathcal{E}^{-}}]\zeta\right\}
\end{array}$$
Let $\overline{\mathcal{W}}=\begin{pmatrix}
\overline{M}\\
\overline{\Pi}
\end{pmatrix}$, then we have 
\begin{equation}\label{eq:stochastic:upper-bound:1}
\begin{array}{rl}
N\mathcal{R}_{r,\epsilon}(\hat{\mathbb{P}}_N)\mathrel{\substack{< \\ \sim}}_D\displaystyle\sup_{\zeta\geq\mathbf{0}}\zeta^T\overline{W}-\frac{1}{4}\min\left\{\zeta^T\mathbb{E}[S_{+}S_{+}^T\mathbf{1}_{\mathcal{E}^{+}}]\zeta,\zeta^T\mathbb{E}[S_{-}S_{-}^T\mathbf{1}_{\mathcal{E}^{-}}]\zeta\right\},
\end{array}
\end{equation}
where 
\begin{equation}\label{eq:stochastic:upper-bound:2}
\begin{array}{rl}
&\quad\sup_{\zeta\geq\mathbf{0}}\zeta^T\overline{W}-\frac{1}{4}\min\left\{\zeta^T\mathbb{E}[S_{+}S_{+}^T\mathbf{1}_{\mathcal{E}^{+}}]\zeta,\zeta^T\mathbb{E}[S_{-}S_{-}^T\mathbf{1}_{\mathcal{E}^{-}}]\zeta\right\}\\
&=\max\left\{\begin{array}{ll}
&\sup_{\zeta\geq\mathbf{0}}\zeta^T\overline{W}-\frac{1}{4}\zeta^T\mathbb{E}[S_{+}S_{+}^T\mathbf{1}_{\mathcal{E}^{+}}]\zeta,\\
&\sup_{\zeta\geq\mathbf{0}}\zeta^T\overline{W}-\frac{1}{4}\zeta^T\mathbb{E}[S_{-}S_{-}^T\mathbf{1}_{\mathcal{E}^{-}}]\zeta
\end{array}\right\}.
\end{array}
\end{equation}
Denote 
$$V_{+}=(DM(X_i)'[D(\pi_1-\pi_0)(X_i)],-\|D(\pi_1-\pi_0)(X_i)\|^2),$$
$$V_{-}=(DM(X_i)'[D(\pi_1-\pi_0)(X_i)],\|D(\pi_1-\pi_0)(X_i)\|^2),$$
then 
$$\mathbf{1}_{\mathcal{E}^{+}}=\mathbf{1}\{\zeta^TV_{+}\geq0\},$$
$$\mathbf{1}_{\mathcal{E}^{-}}=\mathbf{1}\{\zeta^TV_{-}<0\}.$$
Let $\zeta_{+}^*$ satisfy to (\ref{eq:zeta:+})
\begin{equation}\label{eq:zeta:+}
    \zeta_{+}^*=\max\left\{2\mathbb{E}\left[S_{+}S_{+}^T\mathbf{1}\{\zeta_{+}^{*T}V_{+}\geq0\}\right]^{-1}\overline{W},0\right\}
\end{equation}
and let $\zeta_{-}^*$ satisfy (\ref{eq:zeta:-})
\begin{equation}\label{eq:zeta:-}
    \zeta_{-}^*=\max\left\{2\mathbb{E}\left[S_{-}S_{-}^T\mathbf{1}\{\zeta_{-}^{*T}V_{-}<0\}\right]^{-1}\overline{W},0\right\}.
\end{equation}
Thus 
\begin{equation}\label{eq:stochastic:upper-bound:sol:+}
\begin{array}{rl}
&\quad\sup_{\zeta\geq\mathbf{0}}\zeta^T\overline{W}-\frac{1}{4}\zeta^T\mathbb{E}[S_{+}S_{+}^T\mathbf{1}_{\mathcal{E}^{+}}]\zeta\\
&=\max\left\{\zeta_{+}^{*T}\overline{W}-\frac{1}{4}\zeta_{+}^{*T}\mathbb{E}[S_{+}S_{+}^T\mathbf{1}_{\mathcal{E}^{+}}]\zeta_{+}^{*},0\right\}\\
&=\overline{W}^T\mathbb{E}\left[S_{+}S_{+}^T\mathbf{1}\{\zeta_{+}^{*T}V_{+}\geq0\}\right]^{-1}\overline{W}\mathbf{1}\{\overline{W}\geq0\},
\end{array}
\end{equation}
and
\begin{equation}\label{eq:stochastic:upper-bound:sol:-}
\begin{array}{rl}
&\quad\sup_{\zeta\geq\mathbf{0}}\zeta^T\overline{W}-\frac{1}{4}\zeta^T\mathbb{E}[S_{-}S_{-}^T\mathbf{1}_{\mathcal{E}^{-}}]\zeta\\
&=\max\left\{\zeta_{-}^{*T}\overline{W}-\frac{1}{4}\zeta_{-}^{*T}\mathbb{E}[S_{-}S_{-}^T\mathbf{1}_{\mathcal{E}^{-}}]\zeta_{-}^{*},0\right\}\\
&=\overline{W}^T\mathbb{E}\left[S_{-}S_{-}^T\mathbf{1}\{\zeta_{-}^{*T}V_{-}\geq0\}\right]^{-1}\overline{W}\mathbf{1}\{\overline{W}\geq0\}.
\end{array}
\end{equation}
Hence by (\ref{eq:stochastic:upper-bound:1}) and (\ref{eq:stochastic:upper-bound:2}), we have 
$$\begin{array}{rl}
N\mathcal{R}_{r,\epsilon}(\hat{\mathbb{P}}_N)\mathrel{\substack{< \\ \sim}}_D\max\left\{\begin{array}{ll}
&\overline{W}^T\mathbb{E}\left[S_{+}S_{+}^T\mathbf{1}\{\zeta_{+}^{*T}V_{+}\geq0\}\right]^{-1}\overline{W},\\
&\overline{W}^T\mathbb{E}\left[S_{-}S_{-}^T\mathbf{1}\{\zeta_{-}^{*T}V_{-}\geq0\}\right]^{-1}\overline{W}
\end{array}\right\}\mathbf{1}\{\overline{W}\geq0\}
\end{array}$$
where 
$$V_{+}=(DM(X_i)'[D(\pi_1-\pi_0)(X_i)],-\|D(\pi_1-\pi_0)(X_i)\|^2),$$
$$V_{-}=(DM(X_i)'[D(\pi_1-\pi_0)(X_i)],\|D(\pi_1-\pi_0)(X_i)\|^2),$$
and $\zeta_{+}^*$, $\zeta_{-}^*$ are defined as in (\ref{eq:zeta:+}), (\ref{eq:zeta:-}). 

\section{Extensions}

\subsection{More general approximate fairness projection distance}\label{appendix:general}
The proposed utility-constrained approximate fairness projection distance can be extended to more generalized formulations via wasserstein projection for group fairness. Let $\hat{\mathbb{P}}\in\mathcal{P}(\mathcal{X})$ be a reference probability measure, $F(\cdot)$ be a convex functional defined on $\mathcal{P}(\mathcal{X})$, $R(\cdot,a)$ be the utility function for sensitivity group $a$. The projection distance is defined as follows:
\begin{equation}\label{framework}
    \mathcal{D}_{\epsilon}^r(\hat{\mathbb{P}})=\begin{cases}
    \inf_{\mathbb{Q}\in\mathcal{P}(\mathcal{X})} &\mathcal{W}_c(\mathbb{Q},\hat{\mathbb{P}})^2\\
    \mathrm{s.t.}&F(\mathbb{Q})\leq\epsilon\\
    &\mathbb{E}_{\mathbb{Q}}[\sum_{a\in\mathcal{S}}p_a(X)\mu(X,a)]\geq r.
    \end{cases}
\end{equation}

Suppose $\mathbb{Q}_1\overset{d}{=}\pi_1(X)$, $\mathbb{Q}_0\overset{d}{=}\pi_0(X)$, $X\sim\mathbb{Q}$. Our previously proposed fairness evaluation framework corresponds to the case where $F(\mathbb{Q})=\mathbb{E}_\mathbb{Q}[|\pi_1(X)-\pi_0(X)|]$ according to Lemma \ref{lemma:Monge-Kantarovich:duality}. We provide more examples of convex functional $F(\cdot)$ related to the fairness constraints $F(\mathbb{Q})\leq\epsilon$. 

\begin{example}[KL-divergence fairness criterion] Consider the KL-divergence fairness constraint $\mathrm{D}_{KL}(\mathbb{Q}_1||\mathbb{Q}_0)\leq\epsilon$, where $$\mathrm{D}_{KL}(\mathbb{Q}_1||\mathbb{Q}_0):=\int_{\mathcal{X}}\pi_1(x)\log(\pi_1(x)/\pi_0(x))\mathbb{Q}(dx),$$ 
which is linear in $\mathbb{Q}$, so $\mathrm{D}_{KL}(\mathbb{Q}_1||\mathbb{Q}_0)$ is convex in $\mathbb{Q}$. 
\end{example}

\begin{example}[Total-variation fairness criterion] For the total-variation fairness constraint $$\mathrm{TV}(\mathbb{Q}_1,\mathbb{Q}_0)=\sup_{\mathcal{S}\in\mathcal{P}([0,1])}|\mathbb{Q}(\pi_1(X)\in\mathcal{S})-\mathbb{Q}(\pi_0(X)\in\mathcal{S})|\leq\epsilon.$$ 
Note that 
$$\begin{array}{rl}
\displaystyle|\mathbb{Q}(\pi_1(X)\in\mathcal{S})-\mathbb{Q}(\pi_0(X)\in\mathcal{S})|=|\mathbb{E}_{\mathbb{Q}}[\mathbf{1}\{\pi_1(X)\in\mathcal{S}\}-\mathbf{1}\{\pi_0(X)\in\mathcal{S}\}]|,
\end{array}$$ 
which is convex in $\mathbb{Q}$. Since the supremum of a family of convex function is still convex, the total-variation fairness constraint is convex in $\mathbb{Q}$. 
\end{example}

\begin{example}[Integral Probability Metrics fairness criterion] For a set of real valued functions $\mathcal{F}$ on $\mathbb{R}^d$, the Integral Probability Metrics (IPM) is defined as 
$$\mathrm{IPM}(\mu,\nu)=\sup_{f\in\mathcal{F}}\int_{\mathbb{R}^d}fd\mu-\int_{\mathbb{R}^d}fd\nu.$$
One example is $\mathcal{F}=\{f:\|f\|_{H}\leq 1\}$ where $H$ is a reproducing kernel hilbert space (RKHS), which gives the Maximum Mean Discrepancy (MMD). So 
$$\begin{array}{rl}
\mathrm{IPM}(\pi_1(X),\pi_0(X))&\displaystyle=\sup_{f\in\mathcal{F}}\int_{\mathbb{R}^d}[f(\pi_1(x))-f(\pi_0(x))]\mathbb{Q}(dx)\\
&\displaystyle=\sup_{f\in\mathcal{F}}\mathbb{E}_{\mathbb{Q}}[f(\pi_1(X))-f(\pi_0(X))],
\end{array}$$
which is the supremum of a family of linear functions in $\mathbb{Q}$, thus $\mathrm{IPM}(\pi_1(X),\pi_0(X))$ is convex in $\mathbb{Q}$. 
\end{example}
Following this evaluation framework, we can extend the approach outlined above to derive strong duality results, deriving the limiting behavior of test statistics, and implement hypothesis tests for utility-constrained approximate fairness criteria.

\subsection{Multiple Sensitive Attributes and Multi-level or Continuous Treatments}\label{appendix:extension:multi}
To extend our setting to $T$-level treatments with multiple sensitive attributes $\mathcal{S}$, with $W_i\in\mathcal{T}=\{0,1,2,\ldots,T-1\}$, under confoundedness assumption
$$\{Y_i(0),\ldots,Y_i(T-1)\}\indep W_i|X_i,$$ 
the expected utility constraint with threshold $r$ is equal to 
\begin{equation}\label{eq:utility:multi-level}
\sum_{a\in\mathcal{S}}\sum_{t\in\mathcal{T}}\mathbb{E}\left[m_t(X_i,a)\pi_{a,t}(X_i)p_a(X_i)\right]\geq r,
\end{equation}
where $\pi_{a,t}(x)=\mathbb{P}(W_i=t|X_i=x,S_i=a)$, and the $\epsilon$-approximate SDP is defined as 
\begin{equation}\label{eq:fair:epsilon:multi-level}
    \mathbb{E}_{\tau\sim\mathrm{Unif}[0,1]}\left|\mathbb{Q}(\pi_{a,t}(X_i)>\tau)-\mathbb{Q}(\pi_{a',t}(X_i)>\tau)\right|\leq\epsilon,\ \forall a,a'\in\mathcal{S}, \ t\in\mathcal{T}.
\end{equation}
We replace the constraints of (\ref{eq:strong:criterion:0}) with (\ref{eq:utility:multi-level}) and (\ref{eq:fair:epsilon:multi-level}). 

To extend our setting to continuous treatments $\mathcal{T}\subset\mathbb{R}$, we study infinitesimal intervensions on the treatment level motivated by the work of \citet{powell1989semiparametric}, and the expected utility of such intervention is defined as 
$$\left[\frac{d}{d\nu}\mathbb{E}\left[Y_i(W_i+\nu I(X_i,S_i))\right]\right]_{\nu=0},$$
where $I:\mathcal{X}\times\mathcal{S}\in\{0,1\}$ is a binary function representing the treatment policy according to the given contexts. Let
$m(w,x,a)=\mathbb{E}[Y_i(w)|X_i=x,S_i=a]$. Under unconfoundedness assumption $\{Y_i(w)\}_{w\in\mathcal{T}}\indep W_i|X_i,S_i$ and that $\{Y_i(w)\}_{w\in\mathcal{T}}$ are uniformly bounded by a constant, we have 
$$\begin{array}{rl}
\mathbb{E}\left[Y_i(W_i+\nu I(X_i,S_i))\right]\!\!\!\!&\displaystyle=\mathbb{E}\left\{\int_{w\in\mathcal{T}}\mathbb{E}\left[Y_i(w+\nu I(X_i,S_i))|X_i,S_i\right]\pi(w|X_i,S_i)dw\right\}\\
&\displaystyle=\mathbb{E}\left[\int_{w\in\mathcal{T}}m(w+\nu I(X_i,S_i)),X_i,S_i)\pi(w|X_i,S_i)dw\right]\\
&\displaystyle=\sum_{a\in\mathcal{S}}\int_{w\in\mathcal{T}}\mathbb{E}\left[m(w+\nu I(X_i,a)),X_i,a)\pi(w|X_i,a)p_a(X_i)\right]dw.
\end{array}$$
where the integral and the expectations are exchangeable above by using Fubini Theorem as a result of the uniform boundedness of the potential outcomes. Then under some additional regularity conditions, we can exchange the derivative (with respect to $\nu$) with the integrals and the expectations, so that 
$$\begin{array}{rl}
&\displaystyle\quad\frac{d}{d\nu}\mathbb{E}\left[Y_i(W_i+\nu I(X_i,S_i))\right]_{\nu=0}\\
&\displaystyle=\sum_{a\in\mathcal{S}}\int_{w\in\mathcal{T}}\mathbb{E}\left[\nabla_wm(w,X_i,a)I(X_i,a)\pi(w|X_i,a)p_a(X_i)\right]dw,
\end{array}$$
where $\nabla_wm$ is the gradient of $m$ taken with respect to $w$. The utility constraint is defined as 
\begin{equation}\label{eq:utility:continuous}
    \sum_{a\in\mathcal{S}}\int_{w\in\mathcal{T}}\mathbb{E}\left[\nabla_wm(w,X_i,a)I(X_i,a)\pi(w|X_i,a)p_a(X_i)\right]dw\geq r.
\end{equation}
Define
$$\Pi(X_i,a):=I(X_i,a)\int_{w\in\mathcal{T}}\pi(w|X_i,a)dw,$$
the $\epsilon$-approximate SDP is defined as 
\begin{equation}\label{eq:fair:epsilon:continuous}
    \mathbb{E}_{\tau\sim\mathrm{Unif}[0,1]}\left|\mathbb{Q}(\Pi(X_i,a)>\tau)-\mathbb{Q}(\Pi(X_i,a')>\tau)\right|\leq\epsilon,\ \forall a,a'\in\mathcal{S}, \ t\in\mathcal{T}.
\end{equation}
where $\Pi(X_i,a)$ captures the interaction between the average pre-intervention treatment level and the binary intervention. Then we replace the constraints of (\ref{eq:strong:criterion:0}) with (\ref{eq:utility:continuous}) and (\ref{eq:fair:epsilon:continuous}) under the setting with continuous treatment and multiple sensitive attributes.

In both extended cases, the expectations of the constraints are taken with respect to the distribution of $X_i$. Thus, the formality of the hypothesis testing framework and the Wasserstein projection distance remain unchanged, and the proof techniques for the setting with binary treatments and binary sensitive attributes apply directly once the necessary additional regularity conditions are imposed.

\section{Dataset Descriptions and the Verification of Assumptions}\label{appendix:dataset}
\textbf{COMPAS} dataset. The COMPAS (\textit{Correctional Offender Management Profiling for Alternative Sanctions}) dataset a widely adopted commercial tool that assists judges and parole officers in algorithmically predicting a defendant's recidivism risk. The dataset comprises criminal records from a two-year follow-up period post-sentencing. For our fairness analysis, sex serves as the sensitive attribute.
    
    \textbf{Arrhythmia} dataset. Arrhythmia is from UCI repository, where the aim of this data set is to distinguish between the presence and absence of cardiac arrhythmia and classify it in one of the 16 groups. The dataset consists of 452 samples and we use the first 12 features among which the gender is the sensitive feature. For our purpose, we construct binary labels between `class 01' (`normal') and all other classes (different classes of arrhythmia and unclassified ones).

    \textbf{Drug} dataset. The Drug dataset contains answers of 1885 participants on their use of 17 legal and illegal drugs. We concern the cannabis usage as a binary problem, where the label is `Never used' VS `Others' (`used'). There are 12 features including age, gender, education, country, ethnicity, NEO-FFI-R measurements, impulsiveness measured by BIS-11 and sensation seeking measured by ImpSS. Among those, we choose ethnicity (black vs others) as the sensitive attribute.

We next verify that Assumption \ref{ass:dgp} holds for all three datasets:

\textbf{Unconfoundedness:} In our experimental framework, all treatments are derived from 
Tikhonov-regularized Logistic Regression and SVM classifiers. Since these models' predictions 
depend solely on the input features $(x,a)$, the potential outcomes $Y(w)$ are conditionally 
independent of treatment assignment given the observed features. This satisfies the 
unconfoundedness assumption by design.

\textbf{Boundedness:} The potential outcome $Y_i(W_i)$ represents binary classification 
correctness, thus naturally satisfying $0 \leq Y_i(W_i) \leq 1$ for all observations.

% \section{Computer Resource}\label{appendix:computer}
% All experiments were conducted on a workstation equipped with an Intel\textsuperscript{\textregistered}~Core\texttrademark~i9-13900H processor (14 cores/20 threads, base frequency 2.6\,GHz, max turbo frequency 5.4\,GHz) and 32\,GB DDR5 RAM.

\section{On Extending Empirical Studies to Unstructured Data}\label{appendix:unstructured}
Beyond the structured-data applications examined in the main text, our framework naturally extends to unstructured domains such as natural language processing (NLP), computer vision, and recommender systems. Given the complexity of these tasks and the primarily theoretical focus of our work, we provide only a high-level discussion of how our hypothesis test could be applied, leaving detailed empirical investigations to future research. These extensions illustrate how the choice of $(\epsilon,r)$ adapts to different empirical contexts—accuracy in NLP, diagnostic benefit in imaging, and engagement in recommendations—while our test offers a unified approach to evaluating fairness–utility trade-offs.

\textbf{NLP data} (Resume Screening). In text-based classification tasks such as resume screening, datasets like Bias in Bios link occupation labels with gender. Here, utility $r$ can be defined as maintaining predictive accuracy above a threshold, while fairness tolerance $\epsilon$ limits group disparities in predicted selection rates across thresholds. Fine-tuning a language model (e.g., BERT) and applying our test allows one to assess whether observed gender gaps are systematic or due to randomness.

\textbf{Medical Imaging} (Skin Cancer Detection). Datasets such as \textbf{Fitzpatrick17k} with skin-tone annotations can be paired with melanoma classification data. Utility $r$ corresponds to minimum diagnostic accuracy (e.g., sensitivity), while $\epsilon$ controls disparities in screening probabilities across skin tones. Training a CNN and applying our procedure provides a test of whether differences in outcomes reflect structural bias or noise. 

\textbf{Recommender Systems} (MovieLens). In recommendation platforms, datasets like MovieLens enable analysis of exposure disparities across gender or age groups. Here, $r$ reflects minimum engagement or rating accuracy, and $\epsilon$ bounds disparities in recommendation probabilities. Applying our test to collaborative filtering models helps determine whether unequal exposure is intrinsic to the system or explained by sampling variation.

\end{document}